\definecolor{myred}{RGB}{232,102,102}
\definecolor{myblue}{RGB}{187,187,255}
\definecolor{myorange}{RGB}{202,52,51}
\definecolor{mygrey}{RGB}{105,105,105}
\definecolor{OliveGreen}{RGB}{85,107,47}
\definecolor{NavyBlue}{RGB}{0,0,128}
\definecolor{mygreen}{RGB}{34,139,34}
\definecolor{myY}{RGB}{220,255,203}
\definecolor{myYO}{RGB}{255, 220, 151}
\definecolor{myblue1}{RGB}{176,223,229}
\definecolor{myblue2}{RGB}{0,0,128}
\definecolor{myblue3}{RGB}{0,108,255}
\definecolor{myblue4}{RGB}{101,147,245}
\definecolor{myblue5}{RGB}{115,194,251}
\definecolor{myblue6}{RGB}{87,160,211}
\definecolor{myblue7}{RGB}{137,207,240}
\definecolor{myblue8}{RGB}{29,41,81}
\definecolor{myblue9}{RGB}{14,77,146}
\definecolor{myblue10}{RGB}{15,82,186}
\definecolor{myred1}{RGB}{255,36,0}
\definecolor{myred2}{RGB}{205,92,92}
\definecolor{myred3}{RGB}{178,34,34}
\definecolor{myred4}{RGB}{164,90,82}
\definecolor{myred5}{RGB}{255,8,0}
\definecolor{myred6}{RGB}{202,52,51}
\definecolor{myred7}{RGB}{66,13,9}
\definecolor{myred8}{RGB}{141,2,31}
\definecolor{myred9}{RGB}{250,128,114}
\definecolor{myred10}{RGB}{237,41,57}
\definecolor{myyellow1}{RGB}{254,220,86}
\definecolor{myyellow2}{RGB}{255,229,180}
\definecolor{myyellow3}{RGB}{238,220,130}
\definecolor{myyellow4}{RGB}{253,165,15}
\definecolor{myyellow5}{RGB}{255,195,11}
\definecolor{myyellow6}{RGB}{218,165,32}
\definecolor{myyellow7}{RGB}{255,211,0}
\definecolor{myyellow8}{RGB}{248,222,126}
\definecolor{myyellow9}{RGB}{245,245,220}
\definecolor{myyellow10}{RGB}{248,228,115}
\definecolor{mygray1}{RGB}{246,246,246}
\definecolor{mygray2}{RGB}{32,32,32}
\definecolor{mygray3}{RGB}{64,64,64}
\definecolor{mygray4}{RGB}{96,96,96}
\definecolor{mygray5}{RGB}{128,128,128}
\definecolor{mygray6}{RGB}{160,160,160}
\definecolor{mygray7}{RGB}{224,224,224}
\definecolor{mygray8}{RGB}{180,180,180}
\theoremstyle{break}
\definecolor{myred}{RGB}{232,102,102}
\definecolor{myblue}{RGB}{187,187,255}
\definecolor{myorange0}{RGB}{252,226,5}
\definecolor{myorange0c}{RGB}{255,255,255}
\definecolor{myorange}{RGB}{255,165,0}
\definecolor{mygrey}{RGB}{105,105,105}
\definecolor{OliveGreen}{RGB}{85,107,47}
\definecolor{NavyBlue}{RGB}{0,0,128}
\definecolor{mygreen}{RGB}{34,139,34}
\definecolor{myY}{RGB}{220,255,203}
\definecolor{myYO}{RGB}{255, 220, 151}
\definecolor{mygreenc}{RGB}{150,50,50}
\newcommand{\be}{\begin{equation}}
\newcommand{\ee}{\end{equation}}
\newcommand{\ba}{\begin{aligned}}
\newcommand{\ea}{\end{aligned}}
\newcommand{\bw}{\begin{widetext}}
\newcommand{\ew}{\end{widetext}}
\newcommand{\1}{\mathbbm{1}}
\newtheorem{theorem}{Theorem}
\theoremstyle{plain}
\theoremstyle{plain}
\newtheorem{lemma}{Lemma}
\theoremstyle{plain}
\newcommand{\Wgatedagger}[2]{
\draw[very thick] (#1-0.5, #2 +0.5) -- (#1+0.5,#2-0.5);
\draw[very thick] (#1-0.5,#2-0.5) -- (#1+0.5,#2+0.5);
\draw[ thick, fill=mygreenc, rounded corners=2pt] (#1-0.25,#2+0.25) rectangle (#1+0.25,#2-0.25);
\draw[thick] (#1,#2+0.15) -- (#1+0.15,#2+0.15) -- (#1+0.15,#2);
}
\newcommand{\Wgateorange}[2]{
\draw[very thick] (#1-0.5, #2 +0.5) -- (#1+0.5,#2-0.5);
\draw[very thick] (#1-0.5,#2-0.5) -- (#1+0.5,#2+0.5);
\draw[thick, fill=myorange, rounded corners=2pt] (#1-0.25,#2+0.25) rectangle (#1+0.25,#2-0.25);
\draw[thick] (#1,#2+0.15) -- (#1+0.15,#2+0.15) -- (#1+0.15,#2);
}
\newcommand{\Wgategreen}[2]{
\draw[very thick] (#1-0.5, #2 +0.5) -- (#1+0.5,#2-0.5);
\draw[very thick] (#1-0.5,#2-0.5) -- (#1+0.5,#2+0.5);
\draw[ thick, fill=mygreen, rounded corners=2pt] (#1-0.25,#2+0.25) rectangle (#1+0.25,#2-0.25);
\draw[thick] (#1,#2+0.15) -- (#1+0.15,#2+0.15) -- (#1+0.15,#2);
}
\newcommand{\Wgategrey}[2]{
\draw[very thick] (#1-0.5, #2 +0.5) -- (#1+0.5,#2-0.5);
\draw[very thick] (#1-0.5,#2-0.5) -- (#1+0.5,#2+0.5);
\draw[ thick, fill=mygrey, rounded corners=2pt] (#1-0.25,#2+0.25) rectangle (#1+0.25,#2-0.25);
\draw[thick] (#1,#2+0.15) -- (#1+0.15,#2+0.15) -- (#1+0.15,#2);
}
\definecolor{myblue1}{RGB}{176,223,229}
\definecolor{myblue2}{RGB}{0,0,128}
\definecolor{myblue3}{RGB}{0,108,255}
\definecolor{myblue4}{RGB}{101,147,245}
\definecolor{myblue5}{RGB}{115,194,251}
\definecolor{myblue6}{RGB}{87,160,211}
\definecolor{myblue7}{RGB}{137,207,240}
\definecolor{myblue8}{RGB}{29,41,81}
\definecolor{myblue9}{RGB}{14,77,146}
\definecolor{myblue10}{RGB}{15,82,186}
\definecolor{myyellow1}{RGB}{254,220,86}
\definecolor{myyellow2}{RGB}{255,229,180}
\definecolor{myyellow5}{RGB}{255,195,11}
\definecolor{myyellow6}{RGB}{218,165,32}
\newcommand{\mcirc}{\mathbin{\scalerel*{\fullmoon}{G}}}
\newcommand{\tr}{\text{tr} \, }
\def\scale{0.3}
\newcommand{\ch}[1]{#1} 
\begin{document}

\title{Thermalisation Dynamics and Spectral Statistics of Extended Systems with Thermalising Boundaries}
\date{\today}

 \author{Pavel Kos}
 \affiliation{Department of Physics, Faculty of Mathematics and Physics, University of Ljubljana, Jadranska 19, SI-1000 Ljubljana, Slovenia}
  \author{Toma\v z Prosen}
 \affiliation{Department of Physics, Faculty of Mathematics and Physics, University of Ljubljana, Jadranska 19, SI-1000 Ljubljana, Slovenia}
 \author{Bruno Bertini}
 \affiliation{Rudolf Peierls Centre for Theoretical Physics, Clarendon Laboratory, Oxford University, Parks Road, Oxford OX1 3PU, United Kingdom}

\begin{abstract}
We study thermalisation and spectral properties of extended systems connected, through their boundaries, to a thermalising Markovian bath. Specifically, we consider periodically driven systems modelled by brickwork quantum circuits where a finite section (block) \ch{of the circuit is constituted by arbitrary} local unitary gates while its complement, \ch{which plays the role of the bath,} is dual-unitary. We show that the evolution of local observables and the spectral form factor are determined by the same quantum channel, which we use to characterise the system's dynamics and spectral properties. In particular, we identify a family of fine-tuned quantum circuits --- which we call \ch{strongly non-ergodic} --- that fails to thermalise even in this controlled setting, and, accordingly, their spectral form factor does not follow the random matrix theory prediction. We provide a set of necessary conditions on the local quantum gates that lead to strong \ch{non-ergodicity}, and in the case of qubits, we provide a complete classification of \ch{strongly non-ergodic} circuits. We also study the opposite extreme case of circuits that are almost dual-unitary, i.e., \ch{where thermalisation occurs with the fastest possible rate}. We show that, in these systems, local observables and spectral form factor approach respectively thermal values and random matrix theory prediction exponentially fast. We provide a perturbative characterisation of the dynamics and, in particular, of the time-scale for thermalisation. 
\end{abstract}

\maketitle

\pagebreak 


\section{Introduction}

One of the most natural questions arising when facing an isolated quantum many-body system out of equilibrium is whether or not it eventually thermalises~\cite{polkovnikov2011,eisert2015quantum, gogolin2016equilibration, calabrese2016introduction, dalessio2016quantum, yukalov2011equilibration}. As it is now well understood~\cite{essler2016quench}, in the presence of local interactions this question is generically well-posed only when looking at finite regions of space, finite ``subsystems'', in the limit where the whole system is taken to be infinitely large and acts as an effective bath. If a subsystem remembers some of its past, i.e.\ it acts as a quantum memory, then it does not thermalise and cannot be described using the principles of statistical mechanics. In the usual scenarios, however, the action of the effective bath is enough to assure thermalisation. Therefore, for times much larger than a certain thermalisation time-scale $\tau_{\rm thm}$, subsystems are well described by a (time-independent) thermal density matrix. 

This \ch{explanation is intuitive}, but examples where one can actually establish its validity from first principles in the presence of non-trivial interactions are very scarce. In fact, the only instances where this has been achieved are dual-unitary circuits~\cite{bertini2019entanglement, piroli2020exact, suzuki2021computational}, where the dynamics remains unitary upon switching space and time~\cite{bertini2019exact}, and a special class of integrable models~\cite{klobas2021exact, klobas2021exactII, klobas2021entanglement, pozsgay2014quantum, pozsgay2016real, zadnik2021folded, zadnik2021foldedII} with a particularly simple scattering matrix.

This paper focuses on a simple setting where one can study these phenomena in great detail, even away from the aforementioned solvable points. 
\ch{We consider a one-dimensional system of size $L$ consisting of qudits on integer and half-integer positions and subdivided into two parts: a finite region of interest, $A$, surrounded by a region $B$ (bath). The sizes of the two regions are respectively denoted by $L_A$ and $L_B$. The time evolution is given by a quantum circuit made of local gates. Specifically, the region $A$ is evolved with arbitrary two-qudit gates, whereas $B$ is driven by a dual-unitary quantum circuit.} 
In this setting, the dual-unitary part of the circuit simplifies to a perfect Markovian \ch{ (i.e.\ memoryless)} bath. Thus, the thermalisation dynamics of the central system is given exactly in terms of a \ch{non-unitary} {boundary time evolution map}, $\mathbb B_A$, which implements unitary evolution with Markovian dissipators \ch{(totally depolarising quantum channels)} at the two boundaries. 
The task is then to characterise the \ch{largest eigenvalues and eigenvectors} of $\mathbb B_A$: as we will see, this is easier than solving a generic quench problem, even when restricting to the setting of brickwork quantum circuits. Here, in particular, we characterise two opposite limiting cases: perturbed dual unitary circuits where the time-scale for thermalisation is small but finite (and goes to zero with vanishing perturbation strength) and \ch{``strongly non-ergodic''} circuits, where it is infinitely long. 
We also show that in our setting $\mathbb B_A$ bears information about the spectral statistics of the system. Indeed, as pointed out in \cite{garratt2021local}, one can use it to express the so-called spectral form factor: the Fourier transform of the two-point correlation function of the quasi-energies. This object furnishes a convenient measure of spectral correlations and has recently attracted substantial attention in the context of many-body physics~\cite{kos2018many, bertini2018exact, bertini2021random, chan2018spectral, suntajs2020quantum, friedman2019spectral, roy2020random, garratt2021many, flack2020statistics,chan2021spectral, singh2021subdiffusion}. Interestingly, we find that in the cases where the system thermalises, the spectral form factor agrees with the result obtained in the circular unitary ensemble of random matrices. \ch{These systems are therefore quantum chaotic according to the extension of the quantum chaos conjecture~\cite{casati1980on,berry1981quantizing,bohigas1984characterization} to non-semiclassical quantum many-body systems. 
}

The paper is structured as follows. In Sec.~\ref{sec:setting} we describe our setting and introduce the necessary notation. In particular, we present the central object of study of this paper: the boundary time evolution map $\mathbb B_A$. Next, in Sec.~\ref{sec:pertDU}, we characterise the map $\mathbb B_A$ analytically in the case of perturbed-dual unitary circuits, deriving analytic predictions for spectral form factor and thermalisation dynamics of local observables. In Sec.~\ref{sec:StrongL} we discuss the opposite limiting case, which we dub  \ch{\emph{strongly non-ergodic}}, where finite subsystems do not thermalise. We derive a set of necessary conditions that the local dynamics has to fulfil to have \ch{strong non-ergodicity} and determine all \ch{strongly non-ergodic} quantum circuits of qubits. Lastly, we report our concluding remarks in Sec.~\ref{sec:conclusion}.

\section{Setting}
\label{sec:setting}
In this work, we study one-dimensional brick-work quantum circuits. These systems \ch{consist of qudits with $d$ internal states (with on-site Hilbert space $\mathcal{H}_1=\mathbb{C}^d$),} placed on the sites --- labelled by half-integer numbers --- of a one-dimensional lattice and are evolved periodically in time. 

The evolution over one period --- set to one in the following --- is specified by a Floquet operator $\mathbb{U}$. \ch{It is expressed in terms of $d^2 \times d^2$ unitary matrices (\emph{local gates})}, which couple together two neighbouring qudits. 
\ch{We depict the the Floquet operator following the diagrammatic representation of Ref.~\cite{bertini2021random}, which resembles that of tensor network theory~\cite{cirac2020matrix}. In particular  we represent the gates, elements of ${\rm U}(d^2)$, as 
}
\be
\label{eq:Ugate}
U_{x,\tau}=\begin{tikzpicture}[baseline=(current  bounding  box.center), scale=.7]
\draw[ thick] (-4.25,0.5) -- (-3.25,-0.5);
\draw[ thick] (-4.25,-0.5) -- (-3.25,0.5);
\draw[ thick, fill=myblue, rounded corners=2pt] (-4,0.25) rectangle (-3.5,-0.25);
\draw[thick] (-3.75,0.15) -- (-3.75+0.15,0.15) -- (-3.75+0.15,0);
\Text[x=-4.25,y=-0.75]{}
\end{tikzpicture}\, ,\qquad 
U_{x,\tau}^\dagger = 
\begin{tikzpicture}[baseline=(current  bounding  box.center), scale=.7]
\draw[thick] (-2.25,0.5) -- (-1.25,-0.5);
\draw[thick] (-2.25,-0.5) -- (-1.25,0.5);
\draw[ thick, fill=myred, rounded corners=2pt] (-2,0.25) rectangle (-1.5,-0.25);
\draw[thick] (-1.75,0.15) -- (-1.6,0.15) -- (-1.6,0);
\Text[x=-2.25,y=-0.75]{}
\end{tikzpicture}\,, 
\ee
where the matrix multiplication goes from bottom to top and the mark $\begin{tikzpicture}[baseline=(current  bounding  box.center), scale=.7]\draw[thick] (-1.75,0.15) -- (-1.6,0.15) -- (-1.6,0); \Text[x=-1.75,y=-0.25]{}\end{tikzpicture}$ is stressing that the gates are \ch{generally} not symmetric under reflection and time reversal. The Floquet operator is then represented as 
\begin{align}
&\mathbb U =
\begin{tikzpicture}[baseline=(current  bounding  box.center), scale=0.55]
\draw[thick, dotted] (-9.5,-1.7) -- (0.4,-1.7);
\draw[thick, dotted] (-9.5,-1.3) -- (0.4,-1.3);
\foreach \i in {3,...,13}{
\draw[gray, dashed] (-12.5+\i,-2.1) -- (-12.5+\i,0.3);
}
\foreach \i in {3,...,5}{
\draw[gray, dashed] (-9.75,3-\i) -- (.75,3-\i);
}
\foreach \i in{0.5}
{
\draw[ thick] (0.5,1+2*\i-0.5-3.5) arc (-45:90:0.17);
\draw[ thick] (-10+0.5,1+2*\i-0.5-3.5) arc (270:90:0.15);
}
\foreach \i in{1.5}{
\draw[thick] (0.5,2*\i-0.5-3.5) arc (45:-90:0.17);
\draw[thick] (-10+0.5+0,2*\i-0.5-3.5) arc (90:270:0.15);
}
\foreach \i in {0,1}{
\Text[x=1.25,y=-2+2*\i]{\scriptsize$\i$}
}
\foreach \i in {1}{
\Text[x=1.25,y=-2+\i]{\small$\frac{\i}{2}$}
}
\foreach \i in {1,3,5}{
\Text[x=-7.5+\i+1-3,y=-2.6]{\small$\frac{\i}{2}$}
}
\foreach \i in {1,2,3}{
\Text[x=-7.5+2*\i-2,y=-2.6]{\scriptsize${\i}$}
}
\Text[x=-7.5+2*5-2,y=-2.6]{\small $L$}
\foreach \jj[evaluate=\jj as \j using -2*(ceil(\jj/2)-\jj/2)] in {0}
\foreach \i in {1,...,5}
{
\draw[thick] (.5-2*\i-1*\j,-2-1*\jj) -- (1-2*\i-1*\j,-1.5-\jj);
\draw[thick] (1-2*\i-1*\j,-1.5-1*\jj) -- (1.5-2*\i-1*\j,-2-\jj);
\draw[thick] (.5-2*\i-1*\j,-1-1*\jj) -- (1-2*\i-1*\j,-1.5-\jj);
\draw[thick] (1-2*\i-1*\j,-1.5-1*\jj) -- (1.5-2*\i-1*\j,-1-\jj);
\draw[thick, fill=myblue, rounded corners=2pt] (0.75-2*\i-1*\j,-1.75-\jj) rectangle (1.25-2*\i-1*\j,-1.25-\jj);
\draw[thick] (-2*\i+1,-1.35-\jj) -- (-2*\i+1.15,-1.35-\jj) -- (-2*\i+1.15,-1.5-\jj);%
}
\foreach \i in {0,...,4}{
\draw[thick] (-.5-2*\i,-1) -- (0.525-2*\i,0.025);
\draw[thick] (-0.525-2*\i,0.025) -- (0.5-2*\i,-1);
\draw[thick, fill=myblue, rounded corners=2pt] (-0.25-2*\i,-0.25) rectangle (.25-2*\i,-0.75);
\draw[thick] (-2*\i,-0.35) -- (-2*\i+0.15,-0.35) -- (-2*\i+0.15,-0.5);%
}
\Text[x=-2,y=-2.6]{$\cdots$}
\end{tikzpicture}\, ,
\label{eq:Floquet0}
\end{align}
where \ch{we consider periodic boundary conditions. In} principle, gates at different positions can be different: we denote them as $U_{x,\tau}$, with indexes denoting the position of the top-right corner, with \ch{ $x\in \mathbb Z_{2 L}/2$, $\tau\in\!\{{1}/{2},1\}$ \ch{(later they repeat because the evolution is periodic)}, and $x+\tau\in\mathbb Z$.
Concretely, all $U_{x,\tau}$ in the region $A$ might be arbitrary and independent. In Sections \ref{sec:pertDU} and \ref{sec:SLd2} we take all gates to be the same $U_{x,\tau}=U$. In Sec. \ref{sec:pertDU} this restriction is made out of convenience, and analogous computations can be performed in the general case. In contrast, this restriction has physical significance in Sec. \ref{sec:SLd2}.
}


Here we consider the Floquet operator given by $2L_A$ general gates and  $2L_B$ \emph{dual-unitary gates} denoted in yellow
\ch{
\begin{align}
&\mathbb U = 
\begin{tikzpicture}[baseline=(current  bounding  box.center), scale=0.55]
\draw[thick, dotted] (-9.5,-1.7) -- (0.4,-1.7);
\draw[thick, dotted] (-9.5,-1.3) -- (0.4,-1.3);
\foreach \i in {3,...,13}{
\draw[gray, dashed] (-12.5+\i,-2.1) -- (-12.5+\i,0.3);
}
\foreach \i in {3,...,5}{
\draw[gray, dashed] (-9.75,3-\i) -- (.75,3-\i);
}
\foreach \i in{0.5}
{
\draw[ thick] (0.5,1+2*\i-0.5-3.5) arc (-45:90:0.17);
\draw[ thick] (-10+0.5,1+2*\i-0.5-3.5) arc (270:90:0.15);
}
\foreach \i in{1.5}{
\draw[thick] (0.5,2*\i-0.5-3.5) arc (45:-90:0.17);
\draw[thick] (-10+0.5+0,2*\i-0.5-3.5) arc (90:270:0.15);
}
\foreach \i in {0,1}{
\Text[x=1.25,y=-2+2*\i]{\scriptsize$\i$}
}
\foreach \i in {1}{
\Text[x=1.25,y=-2+\i]{\small$\frac{\i}{2}$}
}
\foreach \jj[evaluate=\jj as \j using -2*(ceil(\jj/2)-\jj/2)] in {0}
\foreach \i in {3,...,5}
{
\draw[thick] (.5-2*\i-1*\j,-2-1*\jj) -- (1-2*\i-1*\j,-1.5-\jj);
\draw[thick] (1-2*\i-1*\j,-1.5-1*\jj) -- (1.5-2*\i-1*\j,-2-\jj);
\draw[thick] (.5-2*\i-1*\j,-1-1*\jj) -- (1-2*\i-1*\j,-1.5-\jj);
\draw[thick] (1-2*\i-1*\j,-1.5-1*\jj) -- (1.5-2*\i-1*\j,-1-\jj);
\draw[thick, fill=myblue, rounded corners=2pt] (0.75-2*\i-1*\j,-1.75-\jj) rectangle (1.25-2*\i-1*\j,-1.25-\jj);
\draw[thick] (-2*\i+1,-1.35-\jj) -- (-2*\i+1.15,-1.35-\jj) -- (-2*\i+1.15,-1.5-\jj);%
}
\foreach \i in {2,...,4}{
\draw[thick] (-.5-2*\i,-1) -- (0.525-2*\i,0.025);
\draw[thick] (-0.525-2*\i,0.025) -- (0.5-2*\i,-1);
\draw[thick, fill=myblue, rounded corners=2pt] (-0.25-2*\i,-0.25) rectangle (.25-2*\i,-0.75);
\draw[thick] (-2*\i,-0.35) -- (-2*\i+0.15,-0.35) -- (-2*\i+0.15,-0.5);%
}
\foreach \i in {0,...,1}{
\draw[thick] (-.5-2*\i,-1) -- (0.525-2*\i,0.025);
\draw[thick] (-0.525-2*\i,0.025) -- (0.5-2*\i,-1);
\draw[thick, fill=myorange0, rounded corners=2pt] (-0.25-2*\i,-0.25) rectangle (.25-2*\i,-0.75);
\draw[thick] (-2*\i,-0.35) -- (-2*\i+0.15,-0.35) -- (-2*\i+0.15,-0.5);%
}
\foreach \jj[evaluate=\jj as \j using -2*(ceil(\jj/2)-\jj/2)] in {0}
\foreach \i in {1,...,2}
{
\draw[thick] (.5-2*\i-1*\j,-2-1*\jj) -- (1-2*\i-1*\j,-1.5-\jj);
\draw[thick] (1-2*\i-1*\j,-1.5-1*\jj) -- (1.5-2*\i-1*\j,-2-\jj);
\draw[thick] (.5-2*\i-1*\j,-1-1*\jj) -- (1-2*\i-1*\j,-1.5-\jj);
\draw[thick] (1-2*\i-1*\j,-1.5-1*\jj) -- (1.5-2*\i-1*\j,-1-\jj);
\draw[thick, fill=myorange0, rounded corners=2pt] (0.75-2*\i-1*\j,-1.75-\jj) rectangle (1.25-2*\i-1*\j,-1.25-\jj);
\draw[thick] (-2*\i+1,-1.35-\jj) -- (-2*\i+1.15,-1.35-\jj) -- (-2*\i+1.15,-1.5-\jj);%
}
\draw [thick, black,decorate,decoration={brace,amplitude=10pt,mirror},xshift=0.0pt,yshift=-0.0pt](-8.5,-2.2) -- (-4.5,-2.2) node[black,midway,yshift=-0.6cm] { A,  $2L_A-1$ sites};
\draw [thick, black,decorate,decoration={brace,amplitude=10pt,mirror},xshift=0.0pt,yshift=-0.0pt](-3.45,-2.2) -- (.5,-2.2) node[black,midway,yshift=-0.6cm] { B, $2L_B+1$ sites};
\end{tikzpicture},
\label{eq:diagramFloquet}
\end{align}
}
\ch{where we showed explicitly that the region $A$ is composed of the first $2L_A-1$ sites  $\{ \frac{1}{2}, \dots, L_A- \frac{1}{2}\}$, and the region $B$ contains the rest of the system.}
Dual unitary gates~\cite{bertini2019exact} 
\be
U_{x,\tau}^{\rm du}=\begin{tikzpicture}[baseline=(current  bounding  box.center), scale=.7]
\draw[ thick] (-4.25,0.5) -- (-3.25,-0.5);
\draw[ thick] (-4.25,-0.5) -- (-3.25,0.5);
\draw[ thick, fill=myorange0, rounded corners=2pt] (-4,0.25) rectangle (-3.5,-0.25);
\draw[thick] (-3.75,0.15) -- (-3.75+0.15,0.15) -- (-3.75+0.15,0);
\Text[x=-4.25,y=-0.75]{}
\end{tikzpicture}\, ,\qquad\qquad
(U_{x,\tau}^{\rm du})^\dagger = 
\begin{tikzpicture}[baseline=(current  bounding  box.center), scale=.7]
\draw[thick] (-2.25,0.5) -- (-1.25,-0.5);
\draw[thick] (-2.25,-0.5) -- (-1.25,0.5);
\draw[ thick, fill=myorange0c, rounded corners=2pt] (-2,0.25) rectangle (-1.5,-0.25);
\draw[thick] (-1.75,0.15) -- (-1.6,0.15) -- (-1.6,0);
\Text[x=-2.25,y=-0.75]{}
\end{tikzpicture},
\ee
are defined as the family of gates fulfilling the following four conditions 
\begin{align}
&\begin{tikzpicture}[baseline=(current  bounding  box.center), scale=.65]
\def\ep{1.3};
\def\eps{0.8};
\draw[ thick] (-2.25,1) -- (-1.75,0.5);
\draw[ thick] (-1.75,0.5) -- (-1.25,1);
\draw[ thick] (-2.25,-1) -- (-1.75,-0.5);
\draw[ thick] (-1.75,-0.5) -- (-1.25,-1);
\draw[ thick] (-1.9,0.35) to[out=170, in=-170] (-1.9,-0.35);
\draw[ thick] (-1.6,0.35) to[out=10, in=-10] (-1.6,-0.35);
\draw[thick, fill=myorange0c, rounded corners=2pt] (-2,0.25) rectangle (-1.5,0.75);
\draw[thick, fill=myorange0, rounded corners=2pt] (-2,-0.25) rectangle (-1.5,-0.75);
\draw[thick] (-1.75,0.65) -- (-1.6,0.65) -- (-1.6,0.5);
\draw[thick] (-1.75,-0.35) -- (-1.6,-0.35) -- (-1.6,-0.5);
\Text[x=-1.25,y=-0.5-\ep]{}
\Text[x=-0.4,y=0]{$=$}
\draw[ thick] (.7,-.75) -- (.7,.75) (1.3,-0.75) -- (1.3,0.75) (.7,-.75) -- (.45,.-1) (1.3,-.75) -- (1.55,.-1) (.7,.75) -- (.45,1) (1.3,.75) -- (1.55,1);
\Text[x=1.6,y=0]{,}
\end{tikzpicture}
& 
&\begin{tikzpicture}[baseline=(current  bounding  box.center), scale=.65]
\def\ep{1.3};
\def\eps{0.8};
\draw[ thick] (-2.25,1) -- (-1.75,0.5);
\draw[ thick] (-1.75,0.5) -- (-1.25,1);
\draw[ thick] (-2.25,-1) -- (-1.75,-0.5);
\draw[ thick] (-1.75,-0.5) -- (-1.25,-1);
\draw[ thick] (-1.9,0.35) to[out=170, in=-170] (-1.9,-0.35);
\draw[ thick] (-1.6,0.35) to[out=10, in=-10] (-1.6,-0.35);
\draw[thick, fill=myorange0, rounded corners=2pt] (-2,0.25) rectangle (-1.5,0.75);
\draw[thick, fill=myorange0c, rounded corners=2pt] (-2,-0.25) rectangle (-1.5,-0.75);
\draw[thick] (-1.75,0.65) -- (-1.6,0.65) -- (-1.6,0.5);
\draw[thick] (-1.75,-0.35) -- (-1.6,-0.35) -- (-1.6,-0.5);
\Text[x=-1.25,y=-0.5-\ep]{}
\Text[x=-0.4,y=0]{$=$}
\draw[thick] (.7,-.75) -- (.7,.75) (1.3,-0.75) -- (1.3,0.75) (.7,-.75) -- (.45,.-1) (1.3,-.75) -- (1.55,.-1) (.7,.75) -- (.45,1) (1.3,.75) -- (1.55,1);
\Text[x=1.6,y=0]{,}
\end{tikzpicture}\\
&\begin{tikzpicture}[baseline=(current  bounding  box.center), scale=0.65]
\def\eps{0.4};
\def\ep{1.8}
\Text[x=-0.125,y=-\ep]{}
\draw[ thick] (-1.75+2,0.5+\eps) -- (-1.25+2,1+\eps);
\draw[ thick] (-1.25+2,0+\eps) -- (-1.75+2,0.5+\eps);
\draw[ thick] (-1.25+2,0-\eps) -- (-1.75+2,-0.5-\eps);
\draw[ thick] (-1.75+2,-0.5-\eps) -- (-1.25+2,-1-\eps);
\draw[ thick] (-1.9+2,0.7+\eps) to[out=170, in=-170] (-1.9+2,-0.7-\eps);
\draw[ thick] (-1.9+2,0.3+\eps) to[out=170, in=-170] (-1.9+2,-0.3-\eps);
\draw[thick, fill=myorange0c, rounded corners=2pt] (-2+2,0.25+\eps) rectangle (-1.5+2,0.75+\eps);
\draw[thick, fill=myorange0, rounded corners=2pt] (-2+2,-0.25-\eps) rectangle (-1.5+2,-0.75-\eps);
\draw[thick] (.25,0.65+\eps) -- (.4,0.65+\eps) -- (.4,0.5+\eps);
\draw[thick] (.25,-0.35-\eps) -- (.4,-0.35-\eps) -- (.4,-0.5-\eps);
\draw[ thick] (-1.45+4,0.7+\eps) to[out=170, in=-170] (-1.45+4,-0.7-\eps);
\draw[ thick] (-1.45+4,0.3+\eps) to[out=170, in=-170] (-1.45+4,-0.3-\eps);
\draw[ thick] (-1.1+4,0.7+\eps) -- (-1.45+4,0.7+\eps)(-1.1+4,0.7+\eps) -- (-.75+4,1+\eps);
\draw[ thick] (-1.1+4,0.3+\eps) -- (-1.45+4,0.3+\eps)(-1.1+4,0.3+\eps) -- (-.75+4,+\eps);
\draw[ thick] (-1.1+4,-0.3-\eps) -- (-1.45+4,-0.3-\eps)(-1.1+4,-0.3-\eps) -- (-.75+4,-\eps);
\draw[ thick] (-1.1+4,-0.7-\eps) -- (-1.45+4,-0.7-\eps)(-1.1+4,-0.7-\eps) -- (-.75+4,-1-\eps);
\Text[x=1.45,y=0]{$=$}
\Text[x=3.45,y=0]{,}
\end{tikzpicture}
& &
\begin{tikzpicture}[baseline=(current  bounding  box.center), scale=.65]
\def\eps{0.4};
\def\ep{1.8}
\Text[x=-0.125,y=-\ep]{}
\draw[ thick] (-2.25+2,1+\eps) -- (-1.75+2,0.5+\eps);
\draw[ thick] (-1.75+2,0.5+\eps) -- (-2.25+2,0+\eps);
\draw[ thick] (-1.75+2,-0.5-\eps) -- (-2.25+2,0-\eps);
\draw[ thick] (-2.25+2,-1-\eps) -- (-1.75+2,-0.5-\eps);
\draw[ thick] (-1.6+2,0.7+\eps) to[out=10, in=-10] (-1.6+2,-0.7-\eps);
\draw[ thick] (-1.6+2,0.3+\eps) to[out=10, in=-10] (-1.6+2,-0.3-\eps);
\draw[thick, fill=myorange0, rounded corners=2pt] (-2+2,0.25+\eps) rectangle (-1.5+2,0.75+\eps);
\draw[thick, fill=myorange0c, rounded corners=2pt] (-2+2,-0.25-\eps) rectangle (-1.5+2,-0.75-\eps);
\draw[thick] (.25,0.65+\eps) -- (.4,0.65+\eps) -- (.4,0.5+\eps);
\draw[thick] (.25,-0.35-\eps) -- (.4,-0.35-\eps) -- (.4,-0.5-\eps);
\draw[ thick] (-.7+4,0.7+\eps) to[out=10, in=-10] (-.7+4,-0.7-\eps);
\draw[ thick] (-.7+4,0.3+\eps) to[out=10, in=-10] (-.7+4,-0.3-\eps);
\draw[ thick] (-.7+4,0.7+\eps) -- (-1.05+4,0.7+\eps)(-1.35+4,1+\eps) -- (-1.05+4,0.7+\eps);
\draw[ thick] (-.7+4,0.3+\eps) -- (-1.05+4,0.3+\eps)(-1.35+4,+\eps) -- (-1.05+4,0.3+\eps);
\draw[ thick] (-.7+4,-0.3-\eps) -- (-1.05+4,-0.3-\eps)(-1.35+4,-\eps) -- (-1.05+4,-0.3-\eps);
\draw[ thick] (-.7+4,-0.7-\eps) -- (-1.05+4,-0.7-\eps)(-1.35+4,-1-\eps) -- (-1.05+4,-0.7-\eps);
\Text[x=1.8,y=0]{$=$}
\Text[x=4.2,y=0]{}
\end{tikzpicture}
\end{align}
\ch{where the first two conditions encode unitarity of the gates $U_{x,\tau} U_{x,\tau}^\dagger=U_{x,\tau}^\dagger U_{x,\tau} = \1$. The second two conditions encode the unitarity in space direction, i.e. unitarity of the gates with reshuffled indices.}
These properties enforce unitarity \ch{of global evolution} in both the spatial and temporal direction and have been shown to lead to exact calculations of several many-body properties such as spectral statistics~\cite{bertini2018exact, flack2020statistics, bertini2021random, fritzsch2021eigenstate}, operator spreading~\cite{bertini2020operator, bertini2020operator2, claeys2020maximum, bertini2020scrambling}, entanglement growth~\cite{bertini2019entanglement, gopalakrishnan2019unitary, piroli2020exact, jonay2021triunitary}, and thermalisation dynamics~\cite{piroli2020exact, claeys2021ergodic, suzuki2021computational}.

Here we are interested in the physical properties of the \ch{Floquet operator} \eqref{eq:diagramFloquet} in the limit of infinite $L$ with fixed $L_A$. For simplicity, we will consider the case where the dual unitary part of the circuit is homogeneous, i.e. $U^{\rm du}_{x,\tau}=U^{\rm du}$, while the gates $U_{x,\tau}$ \ch{in the central part $A$} can be position dependent. 

\subsection{Quantum quench}
To begin with let us look at a \emph{quantum quench} problem: we prepare the system in \ch{some state} and evolve it with the propagator \eqref{eq:diagramFloquet}. Specifically let us consider initial  matrix product states (MPSs) of the form  
\begin{align}
\ket{\psi_{L}} = \frac{1}{d^{\frac{L}{2}}}
\begin{tikzpicture}[baseline=(current  bounding  box.center), scale=.65]
\draw[very thick] (-2.05,0) -- (4.5,0);
\draw[very thick] (-2.05,0.15) arc (90:270:0.075);
\draw[very thick] (4.5,0.15) arc (90:-90:0.075);
\foreach \i in {0,1.35,2.7}
{
\draw[thick] (-2+\i,0.5) -- (-1.5+\i,0);
\draw[thick] (-1.5+\i,0) -- (-1+\i,0.5);
\draw[ thick, fill=myblue, rounded corners=2pt] (-1.5-0.35+\i,0.2-0.25) rectangle (-1.5+0.35+\i,0.2+0.2);
\draw[thick] (-1.5+0.1+\i,0.15+.18)-- (-1.35+0.1+\i,0.15+.18) -- (-1.35+0.1+\i,0+.18);}
\foreach \i in {4.05,5.4}
{
\draw[thick] (-2+\i,0.5) -- (-1.5+\i,0);
\draw[thick] (-1.5+\i,0) -- (-1+\i,0.5);
\draw[ thick, fill=myorange0, rounded corners=2pt] (-1.5-0.35+\i,0.2-0.25) rectangle (-1.5+0.35+\i,0.2+0.2);
\draw[thick] (-1.5+0.1+\i,0.15+.18)-- (-1.35+0.1+\i,0.15+.18) -- (-1.35+0.1+\i,0+.18);}
\Text[x=-1.5,y=-1.65]{}
\draw [thick, black,decorate,decoration={brace,amplitude=3pt,mirror},xshift=0.0pt,yshift=-0.0pt](1.75,.65) -- (-2.05,.65) node[black,midway,yshift=0.4cm] { $2L_A$};
\draw [thick, black,decorate,decoration={brace,amplitude=3pt,mirror},xshift=0.0pt,yshift=-0.0pt](4.35,.65) -- (2,.65) node[black,midway,yshift=0.4cm] { $2L_B$};
\end{tikzpicture}\,,
\label{eq:psi0diagram}
\end{align}
where the thick line represents \ch{bond} space of dimension $\chi$, while 
\be
[T_x]_{ab}^{cd}=\begin{tikzpicture}[baseline=(current  bounding  box.center), scale=.75]
\draw[thick] (-2,0.5) -- (-1.5,0);
\draw[thick] (-1.5,0) -- (-1,0.5);
\Text[x=-1.5,y=-0.5]{}
\Text[x=-2.25,y=0]{$a$}
\Text[x=-.75,y=0]{$c$}
\Text[x=-2.25,y=0.65]{$b$}
\Text[x=-.75,y=0.65]{$d$}
\draw[very thick] (-2.05,0) -- (-0.95,0);
\draw[ thick, fill=myblue, rounded corners=2pt] (-1.5-0.35,0.2-0.25) rectangle (-1.5+0.35,0.2+0.2);
\draw[thick] (-1.5+0.1,0.15+.18)-- (-1.35+0.1,0.15+.18) -- (-1.35+0.1,0+.18);
\end{tikzpicture}
\ee
are \ch{arbitrary} tensors and 
\be
[T_x^{\rm s}]_{ab}^{cd}=\begin{tikzpicture}[baseline=(current  bounding  box.center), scale=.75]
\draw[thick] (-2,0.5) -- (-1.5,0);
\draw[thick] (-1.5,0) -- (-1,0.5);
\draw[very thick] (-2.05,0) -- (-0.95,0);
\Text[x=-1.5,y=-0.5]{}
\Text[x=-2.25,y=0]{$a$}
\Text[x=-.75,y=0]{$c$}
\Text[x=-2.25,y=0.65]{$b$}
\Text[x=-.75,y=0.65]{$d$}
\draw[ thick, fill=myorange0, rounded corners=2pt] (-1.5-0.35,0.2-0.25) rectangle (-1.5+0.35,0.2+0.2);
\draw[thick] (-1.5+0.1,0.15+.18)-- (-1.35+0.1,0.15+.18) -- (-1.35+0.1,0+.18);
\end{tikzpicture},
\ee
are the ``solvable tensors'' of Ref.~\cite{piroli2020exact}. Namely, they are \emph{unitary} when seen as matrices implementing left-to-right multiplication and  \ch{they build the space transfer matrix}
 \be
 E_x(0)= \frac{1}{d}\,\,\begin{tikzpicture}[baseline=(current  bounding  box.center), scale=.75]
\def\eps{1.};
\draw[thick, rounded corners]  (-1.5,0.35+\eps) -- (-2,-0.15+\eps) -- (-2,0.5) -- (-1.5,0);
\draw[thick, rounded corners] (-1.5,0) -- (-1,0.5) -- (-1,-0.15+\eps) -- (-1.5,0.35+\eps);
\draw[very thick] (-2,0+\eps+.35) -- (-1,0+\eps+.35);
\draw[very thick] (-2,0) -- (-1,0);
\draw[ thick, fill=myorange0c, rounded corners=2pt] (-1.5-0.35,0.2-0.25+\eps) rectangle (-1.5+0.35,0.2+0.2+\eps);
\draw[ thick, fill=myorange0, rounded corners=2pt] (-1.5-0.35,0.2-0.25) rectangle (-1.5+0.35,0.2+0.2);
\draw[thick] (-1.5+0.1,0.15+.18)-- (-1.35+0.1,0.15+.18) -- (-1.35+0.1,0+.18);
\draw[thick] (-1.5+0.1,0.15+.18+\eps)-- (-1.35+0.1,0.15+.18+\eps) -- (-1.35+0.1,0+.18+\eps);
\Text[x=-1.5,y=-.35]{}
\end{tikzpicture},
 \ee
 \ch{which has a unique fixed point (a single left/right eigenvector associated to eigenvalue $1$).} Here the white tensor is defined as 
 \be
[(T_x^{\rm s})^\dag]_{ab}^{cd}=\begin{tikzpicture}[baseline=(current  bounding  box.center), scale=.75]
\draw[thick] (-2,0.5) -- (-1.5,0);
\draw[thick] (-1.5,0) -- (-1,0.5);
\draw[very thick] (-2.05,0) -- (-0.95,0);
\Text[x=-1.5,y=-0.5]{}
\Text[x=-2.25,y=0]{$a$}
\Text[x=-.75,y=0]{$c$}
\Text[x=-2.25,y=0.65]{$b$}
\Text[x=-.75,y=0.65]{$d$}
\draw[ thick, fill=myorange0c, rounded corners=2pt] (-1.5-0.35,0.2-0.25) rectangle (-1.5+0.35,0.2+0.2);
\draw[thick] (-1.4,0.05) -- (-1.25,0.05) -- (-1.25,0.2);
\end{tikzpicture}=\left (\begin{tikzpicture}[baseline=(current  bounding  box.center), scale=.75]
\draw[thick] (-2,0.5) -- (-1.5,0);
\draw[thick] (-1.5,0) -- (-1,0.5);
\draw[very thick] (-2.05,0) -- (-0.95,0);
\Text[x=-1.5,y=-0.5]{}
\Text[x=-2.25,y=0]{$c$}
\Text[x=-.75,y=0]{$a$}
\Text[x=-2.25,y=0.65]{$d$}
\Text[x=-.75,y=0.65]{$b$}
\draw[ thick, fill=myorange0, rounded corners=2pt] (-1.5-0.35,0.2-0.25) rectangle (-1.5+0.35,0.2+0.2);
\draw[thick] (-1.5+0.1,0.15+.18)-- (-1.35+0.1,0.15+.18) -- (-1.35+0.1,0+.18);
\end{tikzpicture}\right)^*.
\ee
Our focus is on the dynamics of the reduced density matrix of the subsystem $A$ \ch{(cf. Eq.~\eqref{eq:diagramFloquet})}
\be
\rho_{A, L}(t) = \tr_{\!\!B}[\mathbb{U}^t\ket{\psi_{L}} \bra{\psi_{L}} (\mathbb{U^\dagger})^t] \, ,
\label{eq:rhoL}
\ee
where ${\rm tr}_{B}$ denotes the partial trace over the subsystem $B$.

To represent \eqref{eq:rhoL} diagrammatically it is convenient to ``fold'' the representation of $(\mathbb{U^\dagger})^t$ behind that of $\mathbb{U}^t$ (this corresponds to folding in Schr\"odinger picture, so operators travel downwards). Formally, we achieve this by an operator-to-state (vectorization) mapping. One maps operators acting on $k$ consecutive qubits, i.e.\ acting on $\mathbb{C}^{d^k} \otimes \mathbb{C}^{d^k}$, to vectors in $\mathbb{C}^{d^{2k}}$ as
\be
a \overset{\rm vec}{\mapsto} \ket{a}\, .
\ee
\ch{Note that, from now on, we will exclusively use Dirac ket $\ket{\cdot}$ to represent vectorised operators. We will use the standard Hilbert-Schmidt scalar product
$\langle a|b\rangle = \tr a^\dagger b$, which also induces a norm.}
The mapping is specified by fixing a basis $\{\ket{n}\}$ of $\mathbb{C}^{d^k}$ which induces a basis $\{\ket{m}\bra{n}\}$ for  $\mathbb{C}^{d^k}\!\!\! \otimes \mathbb{C}^{d^k}$. Then, we define the following mapping of basis elements   
\be
\ket{m}\bra{n} \overset{\rm vec}{\mapsto} \ket{m} \otimes \ket{n}^* \,,
\ee
and extend it by linearity. In the following, we will use the following notation for the vectorised \ch{2-norm} normalised identity operator 
\ch{
\be
\ket{\mcirc}= \frac{1}{\sqrt{d}}  \ket{\1}=
\begin{tikzpicture}[baseline=(current  bounding  box.center), scale=.7]
\draw[thick] (0,0) -- (0,0.35);
\draw[thick, fill=white] (0,0) circle (0.1cm); 
\end{tikzpicture}
.
\label{eq:Id}
\ee 
}
After folding, we have the local gate $U_{x,\tau}$ and the time reversed gate $U_{x,\tau}^*$ ($(\cdot)^*$ denotes the complex conjugation in the canonical basis) at the same position. Therefore, we express the evolution in terms of ``doubled gates''
\begin{align}
\label{eq:doublegate}
W_{x,\tau} &=
\begin{tikzpicture}[baseline=(current  bounding  box.center), scale=.7]
\def\eps{0.5};
\Wgategreen{-3.75}{0};
\Text[x=-3.75,y=-0.6]{}
\end{tikzpicture}
=
\begin{tikzpicture}[baseline=(current  bounding  box.center), scale=0.7]
\draw[thick] (-1.65,0.65) -- (-0.65,-0.35);
\draw[thick] (-1.65,-0.35) -- (-0.65,0.65);
\draw[ thick, fill=myred, rounded corners=2pt] (-1.4,0.4) rectangle (-.9,-0.1);
\draw[thick] (-1.15,0) -- (-1,0) -- (-1,0.15);
\draw[thick] (-2.25,0.5) -- (-1.25,-0.5);
\draw[thick] (-2.25,-0.5) -- (-1.25,0.5);
\draw[ thick, fill=myblue, rounded corners=2pt] (-2,0.25) rectangle (-1.5,-0.25);
\draw[thick] (-1.75,0.15) -- (-1.6,0.15) -- (-1.6,0);
\Text[x=-2.25,y=-0.6]{}
\end{tikzpicture}
= U_{x,\tau}\otimes U_{x,\tau}^{*},\\
\label{eq:doublegatedu}
W_{x,\tau}^{\rm du} &=
\begin{tikzpicture}[baseline=(current  bounding  box.center), scale=.7]
\def\eps{0.5};
\Wgateorange{-3.75}{0};
\Text[x=-3.75,y=-0.6]{}
\end{tikzpicture}
=
\begin{tikzpicture}[baseline=(current  bounding  box.center), scale=0.7]
\draw[thick] (-1.65,0.65) -- (-0.65,-0.35);
\draw[thick] (-1.65,-0.35) -- (-0.65,0.65);
\draw[ thick, fill=myorange0c, rounded corners=2pt] (-1.4,0.4) rectangle (-.9,-0.1);
\draw[thick] (-1.15,0) -- (-1,0) -- (-1,0.15);
\draw[thick] (-2.25,0.5) -- (-1.25,-0.5);
\draw[thick] (-2.25,-0.5) -- (-1.25,0.5);
\draw[ thick, fill=myorange0, rounded corners=2pt] (-2,0.25) rectangle (-1.5,-0.25);
\draw[thick] (-1.75,0.15) -- (-1.6,0.15) -- (-1.6,0);
\Text[x=-2.25,y=-0.6]{}
\end{tikzpicture}
= 
(U_{x,\tau}^{\rm du}) \otimes (U_{x,\tau}^{\rm du})^{*},\\
Y_x &=
\begin{tikzpicture}[baseline=(current  bounding  box.center), scale=.65]
\draw[thick] (-2,0.5) -- (-1.5,0);
\draw[thick] (-1.5,0) -- (-1,0.5);
\draw[very thick] (-2.05,0) -- (-0.95,0);
\Text[x=-1.5,y=-0.5]{}
\draw[ thick, fill=mygreen, rounded corners=2pt] (-1.5-0.35,0.2-0.25) rectangle (-1.5+0.35,0.2+0.2);
\draw[thick] (-1.5+0.1,0.15+.18)-- (-1.35+0.1,0.15+.18) -- (-1.35+0.1,0+.18);
\end{tikzpicture}
=
\begin{tikzpicture}[baseline=(current  bounding  box.center), scale=0.7]
\draw[thick] (-2+0.35,0.5+0.2) -- (-1.5+0.35,0+0.2);
\draw[thick] (-1.5+0.35,0+0.2) -- (-1+0.35,0.5+0.2);
\draw[very thick] (-2.05+0.35,0+0.2) -- (-0.95+0.35,0+0.2);
\draw[ thick, fill=myred, rounded corners=2pt] (-1.5-0.35+0.35,0.2-0.25+0.2) rectangle (-1.5+0.35+0.35,0.2+0.2+0.2);
\draw[thick] (-1.4+0.35,0.05+0.2) -- (-1.25+0.35,0.05+0.2) -- (-1.25+0.35,0.2+0.2);
\draw[thick] (-2,0.5) -- (-1.5,0);
\draw[thick] (-1.5,0) -- (-1,0.5);
\draw[very thick] (-2.05,0) -- (-0.95,0);
\draw[ thick, fill=myblue, rounded corners=2pt] (-1.5-0.35,0.2-0.25) rectangle (-1.5+0.35,0.2+0.2);
\draw[thick] (-1.5+0.1,0.15+.18)-- (-1.35+0.1,0.15+.18) -- (-1.35+0.1,0+.18);
\Text[x=-2.25,y=-0.6]{}
\end{tikzpicture}
= T_x \otimes (T_x)^{\dag},\\
Y_x^{\rm s} &=
\begin{tikzpicture}[baseline=(current  bounding  box.center), scale=.65]
\draw[thick] (-2,0.5) -- (-1.5,0);
\draw[thick] (-1.5,0) -- (-1,0.5);
\draw[very thick] (-2.05,0) -- (-0.95,0);
\Text[x=-1.5,y=-0.5]{}
\draw[ thick, fill=myorange, rounded corners=2pt] (-1.5-0.35,0.2-0.25) rectangle (-1.5+0.35,0.2+0.2);
\draw[thick] (-1.5+0.1,0.15+.18)-- (-1.35+0.1,0.15+.18) -- (-1.35+0.1,0+.18);
\end{tikzpicture}
=
\begin{tikzpicture}[baseline=(current  bounding  box.center), scale=0.7]
\draw[thick] (-2+0.35,0.5+0.2) -- (-1.5+0.35,0+0.2);
\draw[thick] (-1.5+0.35,0+0.2) -- (-1+0.35,0.5+0.2);
\draw[very thick] (-2.05+0.35,0+0.2) -- (-0.95+0.35,0+0.2);
\draw[ thick, fill=myorange0c, rounded corners=2pt] (-1.5-0.35+0.35,0.2-0.25+0.2) rectangle (-1.5+0.35+0.35,0.2+0.2+0.2);
\draw[thick] (-1.4+0.35,0.05+0.2) -- (-1.25+0.35,0.05+0.2) -- (-1.25+0.35,0.2+0.2);
\draw[thick] (-2,0.5) -- (-1.5,0);
\draw[thick] (-1.5,0) -- (-1,0.5);
\draw[very thick] (-2.05,0) -- (-0.95,0);
\draw[ thick, fill=myorange0, rounded corners=2pt] (-1.5-0.35,0.2-0.25) rectangle (-1.5+0.35,0.2+0.2);
\draw[thick] (-1.5+0.1,0.15+.18)-- (-1.35+0.1,0.15+.18) -- (-1.35+0.1,0+.18);
\Text[x=-2.25,y=-0.6]{}
\end{tikzpicture}
= (T_x^{\rm s}) \otimes (T_x^{\rm s})^{\dag}.
\label{eq:Ydef}
\end{align} 
\ch{Explicitly, considering a two-site operator $O$, we have 
\be
W_{x,\tau}\ket{O}=  \ket{U_{x,t} O U_{x,t}^\dag}.
\ee}
The gates \eqref{eq:doublegate}--\eqref{eq:Ydef} result from folding in the Schr\"odinger picture and differ slightly from the Heisenberg gates in~\cite{kos2021correlations, reid2021entanglement}. 

Expressing \eqref{eq:rhoL} in terms of the above gates we find 
\bw
\be
\ket{\rho_{A,L}(t)} =\frac{1}{d^{L_B+\frac{1}{2}}}
\begin{tikzpicture}[baseline=(current  bounding  box.center), scale=0.6]
\foreach \i in {0,...,1}{
\Wgateorange{-2}{0+2*\i}\Wgategreen{0}{0+2*\i}\Wgategreen{2}{0+2*\i}\Wgategreen{4}{0+2*\i}\Wgateorange{6}{0+2*\i}\Wgateorange{8}{0+2*\i}
\Wgateorange{-1}{1+2*\i}\Wgategreen{1}{1+2*\i}\Wgategreen{3}{1+2*\i}\Wgategreen{5}{1+2*\i}\Wgateorange{7}{1+2*\i}\Wgateorange{9}{1+2*\i}
}
\draw[very thick] (-2.25,-1) -- (9.5,-1);
\foreach \i in {2.5,4.5,6.5}
{
\draw[very thick] (-2+\i,0.5-1) -- (-1.5+\i,0-1);
\draw[very thick] (-1.5+\i,0-1) -- (-1+\i,0.5-1);
\draw[ thick, fill=mygreen, rounded corners=2pt] (-1.5-0.35+\i,0.2-0.25-1) rectangle (-1.5+0.35+\i,0.2+0.2-1);
\draw[thick] (-1.5+0.1+\i,0.15+.18-1)-- (-1.35+0.1+\i,0.15+.18-1) -- (-1.35+0.1+\i,0+.18-1);}
\foreach \i in {0.5,8.5,10.5}
{
\draw[very thick] (-2+\i,0.5-1) -- (-1.5+\i,0-1);
\draw[very thick] (-1.5+\i,0-1) -- (-1+\i,0.5-1);
\draw[ thick, fill=myorange, rounded corners=2pt] (-1.5-0.35+\i,0.2-0.25-1) rectangle (-1.5+0.35+\i,0.2+0.2-1);
\draw[thick] (-1.5+0.1+\i,0.15+.18-1)-- (-1.35+0.1+\i,0.15+.18-1) -- (-1.35+0.1+\i,0+.18-1);}  
 \foreach \i in {6,...,10}{
 \draw[thick, fill=white] (\i-0.5,4-0.5) circle (0.1cm); 
}
\draw[thick, fill=white] (-1-0.5,4-0.5) circle (0.1cm); 
\draw[thick, fill=white] (-0-0.5,4-0.5) circle (0.1cm);
 \draw [thick, black,decorate,decoration={brace,amplitude=2.5pt,mirror},xshift=0.0pt,yshift=-0.0pt](7.6,-1.2) -- (9.6,-1.2) node[black,midway,yshift=-0.4cm] {$E_x(t)$};
\draw[dashed] (7.6,-1.2) -- (7.6,3.8);
\draw[dashed] (9.6,-1.2) -- (9.6,3.8);
\end{tikzpicture}.
\label{eq:rhodiagram}
\ee
\ew
The dual-unitary part of the circuit (traced out) can be expressed in terms of products of the dual-unitary space transfer matrix $E_x(t)$ depicted in Eq.~\eqref{eq:rhodiagram} \ch{(for a more explicit formula see~\cite{piroli2020exact})}. This matrix implements left-to-right multiplication and, crucially, has a unique fixed points, while all its other eigenvalues are strictly contained in the unit circle (cf.\ Ref.~\cite{piroli2020exact}). Moreover, right and left fixed points are written as $\ket{e_0}$ and $\bra{e_0}=(\ket{e_0})^\dag$ where 
\be
\ket{e_0}=\ket{{\mcirc}_\chi \mcirc \mcirc \dotsc \mcirc \mcirc},
\ee
and $\ket{{\mcirc}_\chi}$ is the vectorized identity state on the auxiliary space of the MPS. This means that in the limit \ch{of infinite bath, when} $L \to \infty$ with $L_A$ fixed, we have  
\be\begin{aligned}
&\ket{\rho_A(t)}  =
\begin{tikzpicture}[baseline=(current  bounding  box.center), scale=0.55]
\foreach \i in {0,...,1}{
\Wgategreen{0}{0+2*\i}\Wgategreen{2}{0+2*\i}\Wgategreen{4}{0+2*\i}
\Wgategreen{1}{1+2*\i}\Wgategreen{3}{1+2*\i}\Wgategreen{5}{1+2*\i}
\draw[thick, fill=white] (-0.5,-0.5+2*\i) circle (0.1cm); 
\draw[thick, fill=white] (-0.5,0.5+2*\i) circle (0.1cm); 
\draw[thick, fill=white] (5.5,1.5+2*\i) circle (0.1cm); 
\draw[thick, fill=white] (5.5,0.5+2*\i) circle (0.1cm);
}
\draw[very thick] (-0,-1) -- (6,-1);
\draw[thick, fill=white] (6.,-1) circle (0.15cm);
\draw[thick, fill=white] (-.0,-1) circle (0.15cm);
\Text[x=6+0.25,y=-1-0.35]{\scriptsize$\chi$}
\Text[x=-0-0.25,y=-1-0.35]{\scriptsize$\chi$}
\draw[thick] (5.5,-0.5)--(5.5,-0.25);
\draw[thick, fill=white] (5.5,-0.25) circle (0.1cm);
\foreach \i in {2.5,4.5,6.5}
{
\draw[very thick] (-2+\i,0.5-1) -- (-1.5+\i,0-1);
\draw[very thick] (-1.5+\i,0-1) -- (-1+\i,0.5-1);
\draw[ thick, fill=mygreen, rounded corners=2pt] (-1.5-0.35+\i,0.2-0.25-1) rectangle (-1.5+0.35+\i,0.2+0.2-1);
\draw[thick] (-1.5+0.1+\i,0.15+.18-1)-- (-1.35+0.1+\i,0.15+.18-1) -- (-1.35+0.1+\i,0+.18-1);}
\Text[x=6+0.5,y=-0-0.35]{,}
\end{tikzpicture}
\end{aligned}
\ee
\ch{where empty circles denote the vectorised identity operators from Eq.~\eqref{eq:Id}.}
We see that this expression can be written as a boundary driven evolution for the reduced density matrix $\ket{\rho_A(t)}$, i.e.
\be
\ket{\rho_A(t)} = (\mathbb B_A)^t \ket{\rho_A(0)}\, .
\label{eq:Markovrho}
\ee
Here
\be
\ket{\rho_A(0)}  =
\begin{tikzpicture}[baseline=(current  bounding  box.center), scale=0.55]
\draw[very thick] (-0,-1) -- (6,-1);
\draw[thick, fill=white] (6.,-1) circle (0.15cm);
\draw[thick, fill=white] (-.0,-1) circle (0.15cm);
\Text[x=6+0.25,y=-1-0.35]{\scriptsize$\chi$}
\Text[x=-0-0.25,y=-1-0.35]{\scriptsize$\chi$}
\foreach \i in {2.5,4.5,6.5}
{
\draw[very thick] (-2+\i,0.5-1) -- (-1.5+\i,0-1);
\draw[very thick] (-1.5+\i,0-1) -- (-1+\i,0.5-1);
\draw[ thick, fill=mygreen, rounded corners=2pt] (-1.5-0.35+\i,0.2-0.25-1) rectangle (-1.5+0.35+\i,0.2+0.2-1);
\draw[thick] (-1.5+0.1+\i,0.15+.18-1)-- (-1.35+0.1+\i,0.15+.18-1) -- (-1.35+0.1+\i,0+.18-1);}
\end{tikzpicture}
\ee
denotes the thermodynamic limit of the reduced density matrix at time $t=0$. We introduced the \emph{boundary time evolution map} $\mathbb B_A$, a many-body quantum channel defined as 
\begin{align}
\!\!\!\!\mathbb B_A &= 
\begin{tikzpicture}[baseline=(current  bounding  box.center), scale=0.55]
\Wgategreen{0}{0}\Wgategreen{2}{0}\Wgategreen{4}{0}
\Wgategreen{1}{1}\Wgategreen{3}{1}\Wgategreen{5}{1}
\draw[thick, fill=white] (-0.5,-0.5) circle (0.1cm); 
\draw[thick, fill=white] (-0.5,0.5) circle (0.1cm); 
\draw[thick, fill=white] (5.5,1.5) circle (0.1cm); 
\draw[thick, fill=white] (5.5,0.5) circle (0.1cm); 
\draw [thick, black,decorate,decoration={brace,amplitude=5pt,mirror},xshift=0.0pt,yshift=-0.0pt](4.5,1.6) -- (.5,1.6) node[black,midway,yshift=0.4cm] { $A$};
\end{tikzpicture}\notag\\
&=  \left[\bigotimes_{x}  W_{x,1/2} \!\otimes\! m_{L_A-{1}/{2},r}\right]\!\!\cdot\!\!\left[m_{{1}/{2},l} \!\otimes \!\bigotimes_{x}  W_{x,1}\right]\!\!,
\label{eq:Bmap}
\end{align}
with boundary quantum channels denoted by 
\be
m_{{1}/{2},l}=\begin{tikzpicture}[baseline=(current  bounding  box.center), scale=0.7]
\Wgategreen{0}{0}\draw[thick, fill=white] (-0.5,-0.5) circle (0.1cm);
\draw[thick, fill=white] (-0.5,0.5) circle (0.1cm); 
\end{tikzpicture} \, ,
\qquad
m_{L_A-{1}/{2},r}=\begin{tikzpicture}[baseline=(current  bounding  box.center), scale=0.7]
\Wgategreen{0}{0}
\draw[thick, fill=white] (0.5,-0.5) circle (0.1cm);
\draw[thick, fill=white] (0.5,0.5) circle (0.1cm); 
\end{tikzpicture}\, ,
\ee
where the first index depicts the position on which the channel act
, while the second labels a left or right edge channel.

Firstly we remark that $\mathbb B_A $ is \ch{a unital\footnote{It maps identity to identity.} completely positive trace-preserving (CPT)} map, i.e., a legitimate quantum channel. This can be seen immediately by noting that it is directly written in the so called environmental representation with the environment in the maximally mixed state~\cite{bengtsson2008geometry}. Secondly, we note that, since the evolution of $\ket{\rho_A(t)}$ does not depend on the state of the bath, Eq.~\eqref{eq:Markovrho} implies that the bath formed by the subsystem $B$ is Markovian~\cite{lerose2020influence}. This is a consequence of maximally fast thermalisation of dual-unitary circuits evolving from solvable MPSs~\cite{piroli2020exact}. Interestingly, it has been recently shown that tracing out the rest of the system one can also achieve concrete realisations of non-Markovian baths~\cite{klobas2021exact, klobas2021exactII}. 

Before discussing the main properties of the many-body map $\mathbb B_A$, let us show that it is relevant also for an \ch{\emph{ostensibly}} unrelated quantity, i.e., the spectral form factor of the time evolution operator \eqref{eq:diagramFloquet} in the thermodynamic limit.  

\subsection{Spectral form factor}


The spectral form factor of a given Floquet operator $\mathbb U_{ L}$ is defined as 
\be
K(t,L) = \mathbb E\left[|{\rm tr}[\mathbb U_{L}^t]|^2\right]
\ee
where $\mathbb E[\cdot]$ denotes an average either over time or over an ensemble of similar systems. Here we follow Ref.~\cite{bertini2021random} and consider an average over an ensemble of similar systems obtained by introducing independently-distributed random single-site operators $\{u_{x,\tau}\}$ on the dual-unitary part of the circuit (for more details see Ref.~\cite{bertini2021random}). Namely we consider 
\be
\mathbb E[\cdot]= \mathbb E_{\{u_{x,\tau}\}}[\cdot],\qquad \mathbb U_{L} \mapsto \mathbb U_{L, \{u_{x,\tau}\}},
\ee
with 
\be
\mathbb U_{L,\{u_{x,\tau}\}}\!\! = 
\begin{tikzpicture}[baseline=(current  bounding  box.center), scale=0.55]
\draw[thick, dotted] (-9.5,-1.7) -- (0.4,-1.7);
\draw[thick, dotted] (-9.5,-1.3) -- (0.4,-1.3);
\foreach \i in {3,...,13}{
\draw[gray, dashed] (-12.5+\i,-2.1) -- (-12.5+\i,0.3);
}
\foreach \i in {3,...,5}{
\draw[gray, dashed] (-9.75,3-\i) -- (.75,3-\i);
}
\foreach \i in{0.5}
{
\draw[ thick] (0.5,1+2*\i-0.5-3.5) arc (-45:90:0.17);
\draw[ thick] (-10+0.5,1+2*\i-0.5-3.5) arc (270:90:0.15);
}
\foreach \i in{1.5}{
\draw[thick] (0.5,2*\i-0.5-3.5) arc (45:-90:0.17);
\draw[thick] (-10+0.5+0,2*\i-0.5-3.5) arc (90:270:0.15);
}
\foreach \jj[evaluate=\jj as \j using -2*(ceil(\jj/2)-\jj/2)] in {0}
\foreach \i in {3,...,5}
{
\draw[thick] (.5-2*\i-1*\j,-2-1*\jj) -- (1-2*\i-1*\j,-1.5-\jj);
\draw[thick] (1-2*\i-1*\j,-1.5-1*\jj) -- (1.5-2*\i-1*\j,-2-\jj);
\draw[thick] (.5-2*\i-1*\j,-1-1*\jj) -- (1-2*\i-1*\j,-1.5-\jj);
\draw[thick] (1-2*\i-1*\j,-1.5-1*\jj) -- (1.5-2*\i-1*\j,-1-\jj);
\draw[thick, fill=myblue, rounded corners=2pt] (0.75-2*\i-1*\j,-1.75-\jj) rectangle (1.25-2*\i-1*\j,-1.25-\jj);
\draw[thick] (-2*\i+1,-1.35-\jj) -- (-2*\i+1.15,-1.35-\jj) -- (-2*\i+1.15,-1.5-\jj);%
}
\foreach \i in {2,...,4}{
\draw[thick] (-.5-2*\i,-1) -- (0.525-2*\i,0.025);
\draw[thick] (-0.525-2*\i,0.025) -- (0.5-2*\i,-1);
\draw[thick, fill=myblue, rounded corners=2pt] (-0.25-2*\i,-0.25) rectangle (.25-2*\i,-0.75);
\draw[thick] (-2*\i,-0.35) -- (-2*\i+0.15,-0.35) -- (-2*\i+0.15,-0.5);%
}
\foreach \i in {0,...,1}{
\draw[thick] (-.5-2*\i,-1) -- (0.525-2*\i,0.025);
\draw[thick] (-0.525-2*\i,0.025) -- (0.5-2*\i,-1);
\draw[thick, fill=myorange0, rounded corners=2pt] (-0.25-2*\i,-0.25) rectangle (.25-2*\i,-0.75);
\draw[thick] (-2*\i,-0.35) -- (-2*\i+0.15,-0.35) -- (-2*\i+0.15,-0.5);%
}
\foreach \jj[evaluate=\jj as \j using -2*(ceil(\jj/2)-\jj/2)] in {0}
\foreach \i in {1,...,2}
{
\draw[thick] (.5-2*\i-1*\j,-2-1*\jj) -- (1-2*\i-1*\j,-1.5-\jj);
\draw[thick] (1-2*\i-1*\j,-1.5-1*\jj) -- (1.5-2*\i-1*\j,-2-\jj);
\draw[thick] (.5-2*\i-1*\j,-1-1*\jj) -- (1-2*\i-1*\j,-1.5-\jj);
\draw[thick] (1-2*\i-1*\j,-1.5-1*\jj) -- (1.5-2*\i-1*\j,-1-\jj);
\draw[thick, fill=myorange0, rounded corners=2pt] (0.75-2*\i-1*\j,-1.75-\jj) rectangle (1.25-2*\i-1*\j,-1.25-\jj);
\draw[thick] (-2*\i+1,-1.35-\jj) -- (-2*\i+1.15,-1.35-\jj) -- (-2*\i+1.15,-1.5-\jj);%
}
\draw[ thick, fill=myred2, rounded corners=2pt] (0.5,-1) circle (.15);
\draw[ thick, fill=myred2, rounded corners=2pt] (-0.5,-1) circle (.15);
\draw[ thick, fill=myred2, rounded corners=2pt] (-1.5,-1) circle (.15);
\draw[ thick, fill=myred2, rounded corners=2pt] (-2.5,-1) circle (.15);
\draw[ thick, fill=myred2, rounded corners=2pt] (0.5,0) circle (.15);
\draw[ thick, fill=myred2, rounded corners=2pt] (-0.5,0) circle (.15);
\draw[ thick, fill=myred2, rounded corners=2pt] (-1.5,0) circle (.15);
\draw[ thick, fill=myred2, rounded corners=2pt] (-2.5,0) circle (.15);
\foreach \i in {0,...,1}{
\draw[ thick] (-2*\i+0.4,0.05) -- (-2*\i+0.55,0.05) -- (-2*\i+0.55,-0.1);
\draw[ thick] (-2*\i+0.4-1,0.05-1) -- (-2*\i+0.55-1,0.05-1) -- (-2*\i+0.55-1,-0.1-1);
\foreach \jj in {0,...,1}{
\draw[ thick]  (-2*\i+1.45-2+\jj,-0.1-\jj) -- (-2*\i+1.45-2+\jj,0.05-\jj) -- (-2*\i+1.6-2+\jj,0.05-\jj);
}
}
\foreach \jj[evaluate=\jj as \j using -2*(ceil(\jj/2)-\jj/2)] in {0}
\foreach \i in {1,...,5}{
}
\foreach \jj[evaluate=\jj as \j using -2*(ceil(\jj/2)-\jj/2)] in {0}
\foreach \i in {1,...,5}{
}
\foreach \jj[evaluate=\jj as \j using -2*(ceil(\jj/2)-\jj/2)] in {0}
\foreach \i in {1,...,5}{
}
\foreach \jj[evaluate=\jj as \j using -2*(ceil(\jj/2)-\jj/2)] in {0}
\foreach \i in {0,...,4}{
}
\end{tikzpicture}.
\label{eq:diagramFloquet2}
\ee
where blue and yellow denote again \ch{possibly inhomogenous arbitrary and homogenous dual-unitary} gates respectively, while red circles represent \ch{time periodic} on-site disorder \ch{$\{u_{x,\tau}\}$}.

For this choice of factorised-in-space average the spectral form factor is represented as 
\begin{align}
\label{eq:SFFdiag}
\!\!\!\!\!\!\!K(t,L)\!=&
\begin{tikzpicture}[baseline=(current  bounding  box.center), scale=0.5]
\foreach \i in {1,...,3}{
\draw[very thick, dotted] (2*\i+2-12.5+0.255,-1.75-0.1) -- (2*\i+2-12.5+0.255,4.25-0.1);
\draw[very thick, dotted] (2*\i+2-11.5-0.255,-1.75-0.1) -- (2*\i+2-11.5-0.255,4.25-0.1);
}
\foreach \i in {1,...,3}{
\draw[very thick] (2*\i+2-11.5,4) arc (-45:175:0.15);
\draw[very thick] (2*\i+2-11.5,-2) arc (315:180:0.15);
\draw[very thick] (2*\i+2-0.5-12,-2) arc (-135:0:0.15);
}
\foreach \i in {2,...,4}{
\draw[very thick] (2*\i+2-2.5-12,4) arc (225:0:0.15);
}
\foreach \i in{1.5,2.5,3.5}{
\draw[very thick] (-10+0.5+0,2*\i-0.5-3.5) arc (45:270:0.15);
}
\foreach \i in{0.5,1.5,2.5}
{
\draw[very thick] (-10+0.5,1+2*\i-0.5-3.5) arc (315:90:0.15);
}
\foreach \jj[evaluate=\jj as \j using -2*(ceil(\jj/2)-\jj/2)] in {-1,-3,-5}{
\foreach \i in {3,...,5}
{
\draw[very thick] (.5-2*\i-1*\j,-2-1*\jj) -- (1-2*\i-1*\j,-1.5-\jj);
\draw[very thick] (1-2*\i-1*\j,-1.5-1*\jj) -- (1.5-2*\i-1*\j,-2-\jj);
\draw[very thick] (.5-2*\i-1*\j,-1-1*\jj) -- (1-2*\i-1*\j,-1.5-\jj);
\draw[very thick] (1-2*\i-1*\j,-1.5-1*\jj) -- (1.5-2*\i-1*\j,-1-\jj);
\draw[thick, fill=mygreen, rounded corners=2pt] (0.75-2*\i-1*\j,-1.75-\jj) rectangle (1.25-2*\i-1*\j,-1.25-\jj);
\draw[ thick] (-2*\i+2,-1.35-\jj) -- (-2*\i+2.15,-1.35-\jj) -- (-2*\i+2.15,-1.5-\jj);%
}
}
\foreach \jj[evaluate=\jj as \j using -2*(ceil(\jj/2)-\jj/2)] in {-4,-2,0}{
\foreach \i in {3,...,5}
{
\draw[very thick] (.5-2*\i-1*\j,-2-1*\jj) -- (1-2*\i-1*\j,-1.5-\jj);
\draw[very thick] (1-2*\i-1*\j,-1.5-1*\jj) -- (1.5-2*\i-1*\j,-2-\jj);
\draw[very thick] (.5-2*\i-1*\j,-1-1*\jj) -- (1-2*\i-1*\j,-1.5-\jj);
\draw[very thick] (1-2*\i-1*\j,-1.5-1*\jj) -- (1.5-2*\i-1*\j,-1-\jj);
\draw[thick, fill=mygreen, rounded corners=2pt] (0.75-2*\i-1*\j,-1.75-\jj) rectangle (1.25-2*\i-1*\j,-1.25-\jj);
\draw[ thick] (-2*\i+1,-1.35-\jj) -- (-2*\i+1.15,-1.35-\jj) -- (-2*\i+1.15,-1.5-\jj);%
}
}
\end{tikzpicture}
\mathbb{E}\Bigl[\begin{tikzpicture}[baseline=(current  bounding  box.center), scale=0.5]
\foreach \i in {5}{
\draw[very thick, dotted] (2*\i+2-12.5+0.255,-1.75-0.1) -- (2*\i+2-12.5+0.255,4.25-0.1);
\draw[very thick, dotted] (2*\i+2-11.5-0.255,-1.75-0.1) -- (2*\i+2-11.5-0.255,4.25-0.1);}
\foreach \i in {5}{
\draw[very thick] (2*\i+2-11.5,4) arc (-45:175:0.15);
\draw[very thick] (2*\i+2-11.5,-2) arc (315:180:0.15);
\draw[very thick] (2*\i+2-0.5-12,-2) arc (-135:0:0.15);
}
\foreach \i in {6}{
\draw[very thick] (2*\i+2-2.5-12,4) arc (225:0:0.15);
}
\foreach \jj[evaluate=\jj as \j using -2*(ceil(\jj/2)-\jj/2)] in {-1,-3,-5}{
\foreach \i in {1}
{
\draw[very thick] (.5-2*\i-1*\j,-2-1*\jj) -- (1-2*\i-1*\j,-1.5-\jj);
\draw[very thick] (1-2*\i-1*\j,-1.5-1*\jj) -- (1.5-2*\i-1*\j,-2-\jj);
\draw[very thick] (.5-2*\i-1*\j,-1-1*\jj) -- (1-2*\i-1*\j,-1.5-\jj);
\draw[very thick] (1-2*\i-1*\j,-1.5-1*\jj) -- (1.5-2*\i-1*\j,-1-\jj);
\draw[thick, fill=myorange, rounded corners=2pt] (0.75-2*\i-1*\j,-1.75-\jj) rectangle (1.25-2*\i-1*\j,-1.25-\jj);
\draw[ thick] (-2*\i+2,-1.35-\jj) -- (-2*\i+2.15,-1.35-\jj) -- (-2*\i+2.15,-1.5-\jj);%
}
}
\foreach \jj[evaluate=\jj as \j using -2*(ceil(\jj/2)-\jj/2)] in {-4,-2,0}{
\foreach \i in {1}
{
\draw[very thick] (.5-2*\i-1*\j,-2-1*\jj) -- (1-2*\i-1*\j,-1.5-\jj);
\draw[very thick] (1-2*\i-1*\j,-1.5-1*\jj) -- (1.5-2*\i-1*\j,-2-\jj);
\draw[very thick] (.5-2*\i-1*\j,-1-1*\jj) -- (1-2*\i-1*\j,-1.5-\jj);
\draw[very thick] (1-2*\i-1*\j,-1.5-1*\jj) -- (1.5-2*\i-1*\j,-1-\jj);
\draw[thick, fill=myorange, rounded corners=2pt] (0.75-2*\i-1*\j,-1.75-\jj) rectangle (1.25-2*\i-1*\j,-1.25-\jj);
\draw[ thick] (-2*\i+1,-1.35-\jj) -- (-2*\i+1.15,-1.35-\jj) -- (-2*\i+1.15,-1.5-\jj);%
}
}
\foreach \jj in {0,2,4}{
\draw[ thick, fill=myyellow9, rounded corners=2pt] (0.5,-1+\jj) circle (.15);
\draw[ thick, fill=myyellow9, rounded corners=2pt] (-0.5,-1+\jj) circle (.15);
\draw[ thick, fill=myyellow9, rounded corners=2pt] (0.5,\jj) circle (.15);
\draw[ thick, fill=myyellow9, rounded corners=2pt] (-0.5,\jj) circle (.15);
}
\foreach \jj in {0,-2,-4}{
\foreach \i in {1}{
\draw[ thick] (-2*\i+1.4,-.95-\jj) -- (-2*\i+1.55,-.95-\jj) -- (-2*\i+1.55,-1.1-\jj);}
\foreach \i in {1}{
\draw[ thick] (-2*\i+2.4,0.05-\jj) -- (-2*\i+2.55,0.05-\jj) -- (-2*\i+2.55,-0.1-\jj);}
\foreach \i in {1}{
\draw[ thick]  (-2*\i+1.45,-0.1-\jj) -- (-2*\i+1.45,0.05-\jj) -- (-2*\i+1.6,0.05-\jj);}
\foreach \i in {0}{
\draw[ thick]  (-2*\i+.45,-1.1-\jj) -- (-2*\i+.45,-0.95-\jj) -- (-2*\i+.6,-0.95-\jj);}
}
\end{tikzpicture}\Bigr]\mathbb{E}\Bigl[\begin{tikzpicture}[baseline=(current  bounding  box.center), scale=0.5]
\foreach \i in {5}{
\draw[thick, dotted] (2*\i+2-12.5+0.255,-1.75-0.1) -- (2*\i+2-12.5+0.255,4.25-0.1);
\draw[thick, dotted] (2*\i+2-11.5-0.255,-1.75-0.1) -- (2*\i+2-11.5-0.255,4.25-0.1);}
\foreach \i in {5}{
\draw[very thick] (2*\i+2-11.5,4) arc (-45:175:0.15);
\draw[very thick] (2*\i+2-11.5,-2) arc (315:180:0.15);
\draw[very thick] (2*\i+2-0.5-12,-2) arc (-135:0:0.15);
}
\foreach \i in {6}{
\draw[very thick] (2*\i+2-2.5-12,4) arc (225:0:0.15);
}
\foreach \jj[evaluate=\jj as \j using -2*(ceil(\jj/2)-\jj/2)] in {-1,-3,-5}{
\foreach \i in {1}
{
\draw[very thick] (.5-2*\i-1*\j,-2-1*\jj) -- (1-2*\i-1*\j,-1.5-\jj);
\draw[very thick] (1-2*\i-1*\j,-1.5-1*\jj) -- (1.5-2*\i-1*\j,-2-\jj);
\draw[very thick] (.5-2*\i-1*\j,-1-1*\jj) -- (1-2*\i-1*\j,-1.5-\jj);
\draw[very thick] (1-2*\i-1*\j,-1.5-1*\jj) -- (1.5-2*\i-1*\j,-1-\jj);
\draw[thick, fill=myorange, rounded corners=2pt] (0.75-2*\i-1*\j,-1.75-\jj) rectangle (1.25-2*\i-1*\j,-1.25-\jj);
\draw[ thick] (-2*\i+2,-1.35-\jj) -- (-2*\i+2.15,-1.35-\jj) -- (-2*\i+2.15,-1.5-\jj);%
}
}
\foreach \jj[evaluate=\jj as \j using -2*(ceil(\jj/2)-\jj/2)] in {-4,-2,0}{
\foreach \i in {1}
{
\draw[very thick] (.5-2*\i-1*\j,-2-1*\jj) -- (1-2*\i-1*\j,-1.5-\jj);
\draw[very thick] (1-2*\i-1*\j,-1.5-1*\jj) -- (1.5-2*\i-1*\j,-2-\jj);
\draw[very thick] (.5-2*\i-1*\j,-1-1*\jj) -- (1-2*\i-1*\j,-1.5-\jj);
\draw[very thick] (1-2*\i-1*\j,-1.5-1*\jj) -- (1.5-2*\i-1*\j,-1-\jj);
\draw[thick, fill=myorange, rounded corners=2pt] (0.75-2*\i-1*\j,-1.75-\jj) rectangle (1.25-2*\i-1*\j,-1.25-\jj);
\draw[ thick] (-2*\i+1,-1.35-\jj) -- (-2*\i+1.15,-1.35-\jj) -- (-2*\i+1.15,-1.5-\jj);%
}
}
\foreach \jj in {0,2,4}{
\draw[ thick, fill=myyellow9, rounded corners=2pt] (0.5,-1+\jj) circle (.15);
\draw[ thick, fill=myyellow9, rounded corners=2pt] (-0.5,-1+\jj) circle (.15);
\draw[ thick, fill=myyellow9, rounded corners=2pt] (0.5,\jj) circle (.15);
\draw[ thick, fill=myyellow9, rounded corners=2pt] (-0.5,\jj) circle (.15);
}
\foreach \jj in {0,-2,-4}{
\foreach \i in {1}{
\draw[ thick] (-2*\i+1.4,-.95-\jj) -- (-2*\i+1.55,-.95-\jj) -- (-2*\i+1.55,-1.1-\jj);}
\foreach \i in {1}{
\draw[ thick] (-2*\i+2.4,0.05-\jj) -- (-2*\i+2.55,0.05-\jj) -- (-2*\i+2.55,-0.1-\jj);}
\foreach \i in {1}{
\draw[ thick]  (-2*\i+1.45,-0.1-\jj) -- (-2*\i+1.45,0.05-\jj) -- (-2*\i+1.6,0.05-\jj);}
\foreach \i in {0}{
\draw[ thick]  (-2*\i+.45,-1.1-\jj) -- (-2*\i+.45,-0.95-\jj) -- (-2*\i+.6,-0.95-\jj);}
}
\end{tikzpicture}\Bigr]\!
\begin{tikzpicture}[baseline=(current  bounding  box.center), scale=0.5]
\Text[x=0.5,y=4.15]{}
\foreach \i in{1.5,2.5,3.5}{
\draw[very thick] (0.5,2*\i-0.5-3.5) arc (45:-90:0.15);
}
\foreach \i in{0.5,1.5,2.5}
{
\draw[very thick] (0.5,1+2*\i-0.5-3.5) arc (-45:90:0.15);
}
\end{tikzpicture}\notag\\
&= {\rm tr}\left[ \left(\prod_{x=1}^{L_A}\mathbb{T}_x \right) \mathbb{\widetilde{T}}^{L_B}\right],
\end{align}
where white circles \ch{with arrows} denote folded disorder gates \ch{$u_{x,\tau} \otimes u_{x,\tau}^*$ (not to be confused with identity \eqref{eq:Id}),} and we introduced the transfer matrices 
\be
\mathbb{T}_x =\begin{tikzpicture}[baseline=(current  bounding  box.center), scale=0.5]
\foreach \i in {5}{
\draw[thick, dotted] (2*\i+2-12.5+0.255,-1.75-0.1) -- (2*\i+2-12.5+0.255,4.25-0.1);
\draw[thick, dotted] (2*\i+2-11.5-0.255,-1.75-0.1) -- (2*\i+2-11.5-0.255,4.25-0.1);}

\foreach \i in {5}{
\draw[very thick] (2*\i+2-11.5,4) arc (-45:175:0.15);
\draw[very thick] (2*\i+2-11.5,-2) arc (315:180:0.15);
\draw[very thick] (2*\i+2-0.5-12,-2) arc (-135:0:0.15);
}
\foreach \i in {6}{
\draw[very thick] (2*\i+2-2.5-12,4) arc (225:0:0.15);
}
\foreach \jj[evaluate=\jj as \j using -2*(ceil(\jj/2)-\jj/2)] in {-1,-3,-5}{
\foreach \i in {1}
{
\draw[very thick] (.5-2*\i-1*\j,-2-1*\jj) -- (1-2*\i-1*\j,-1.5-\jj);
\draw[very thick] (1-2*\i-1*\j,-1.5-1*\jj) -- (1.5-2*\i-1*\j,-2-\jj);
\draw[very thick] (.5-2*\i-1*\j,-1-1*\jj) -- (1-2*\i-1*\j,-1.5-\jj);
\draw[very thick] (1-2*\i-1*\j,-1.5-1*\jj) -- (1.5-2*\i-1*\j,-1-\jj);
\draw[thick, fill=mygreen, rounded corners=2pt] (0.75-2*\i-1*\j,-1.75-\jj) rectangle (1.25-2*\i-1*\j,-1.25-\jj);
\draw[ thick] (-2*\i+2,-1.35-\jj) -- (-2*\i+2.15,-1.35-\jj) -- (-2*\i+2.15,-1.5-\jj);%
}
}
\foreach \jj[evaluate=\jj as \j using -2*(ceil(\jj/2)-\jj/2)] in {-4,-2,0}{
\foreach \i in {1}
{
\draw[very thick] (.5-2*\i-1*\j,-2-1*\jj) -- (1-2*\i-1*\j,-1.5-\jj);
\draw[very thick] (1-2*\i-1*\j,-1.5-1*\jj) -- (1.5-2*\i-1*\j,-2-\jj);
\draw[very thick] (.5-2*\i-1*\j,-1-1*\jj) -- (1-2*\i-1*\j,-1.5-\jj);
\draw[very thick] (1-2*\i-1*\j,-1.5-1*\jj) -- (1.5-2*\i-1*\j,-1-\jj);
\draw[thick, fill=mygreen, rounded corners=2pt] (0.75-2*\i-1*\j,-1.75-\jj) rectangle (1.25-2*\i-1*\j,-1.25-\jj);
\draw[ thick] (-2*\i+1,-1.35-\jj) -- (-2*\i+1.15,-1.35-\jj) -- (-2*\i+1.15,-1.5-\jj);%
}
}
\end{tikzpicture}\,,
\qquad
\mathbb{\widetilde{T}} =\mathbb{E}\Bigl[\begin{tikzpicture}[baseline=(current  bounding  box.center), scale=0.5]
\foreach \i in {5}{
\draw[thick, dotted] (2*\i+2-12.5+0.255,-1.75-0.1) -- (2*\i+2-12.5+0.255,4.25-0.1);
\draw[thick, dotted] (2*\i+2-11.5-0.255,-1.75-0.1) -- (2*\i+2-11.5-0.255,4.25-0.1);}

\foreach \i in {5}{
\draw[very thick] (2*\i+2-11.5,4) arc (-45:175:0.15);
\draw[very thick] (2*\i+2-11.5,-2) arc (315:180:0.15);
\draw[very thick] (2*\i+2-0.5-12,-2) arc (-135:0:0.15);
}
\foreach \i in {6}{
\draw[very thick] (2*\i+2-2.5-12,4) arc (225:0:0.15);
}
\foreach \jj[evaluate=\jj as \j using -2*(ceil(\jj/2)-\jj/2)] in {-1,-3,-5}{
\foreach \i in {1}
{
\draw[very thick] (.5-2*\i-1*\j,-2-1*\jj) -- (1-2*\i-1*\j,-1.5-\jj);
\draw[very thick] (1-2*\i-1*\j,-1.5-1*\jj) -- (1.5-2*\i-1*\j,-2-\jj);
\draw[very thick] (.5-2*\i-1*\j,-1-1*\jj) -- (1-2*\i-1*\j,-1.5-\jj);
\draw[very thick] (1-2*\i-1*\j,-1.5-1*\jj) -- (1.5-2*\i-1*\j,-1-\jj);
\draw[thick, fill=myorange, rounded corners=2pt] (0.75-2*\i-1*\j,-1.75-\jj) rectangle (1.25-2*\i-1*\j,-1.25-\jj);
\draw[ thick] (-2*\i+2,-1.35-\jj) -- (-2*\i+2.15,-1.35-\jj) -- (-2*\i+2.15,-1.5-\jj);%
}
}
\foreach \jj[evaluate=\jj as \j using -2*(ceil(\jj/2)-\jj/2)] in {-4,-2,0}{
\foreach \i in {1}
{
\draw[very thick] (.5-2*\i-1*\j,-2-1*\jj) -- (1-2*\i-1*\j,-1.5-\jj);
\draw[very thick] (1-2*\i-1*\j,-1.5-1*\jj) -- (1.5-2*\i-1*\j,-2-\jj);
\draw[very thick] (.5-2*\i-1*\j,-1-1*\jj) -- (1-2*\i-1*\j,-1.5-\jj);
\draw[very thick] (1-2*\i-1*\j,-1.5-1*\jj) -- (1.5-2*\i-1*\j,-1-\jj);
\draw[thick, fill=myorange, rounded corners=2pt] (0.75-2*\i-1*\j,-1.75-\jj) rectangle (1.25-2*\i-1*\j,-1.25-\jj);
\draw[ thick] (-2*\i+1,-1.35-\jj) -- (-2*\i+1.15,-1.35-\jj) -- (-2*\i+1.15,-1.5-\jj);%
}
}
\foreach \jj in {0,2,4}{
\draw[ thick, fill=myyellow9, rounded corners=2pt] (0.5,-1+\jj) circle (.15);
\draw[ thick, fill=myyellow9, rounded corners=2pt] (-0.5,-1+\jj) circle (.15);
\draw[ thick, fill=myyellow9, rounded corners=2pt] (0.5,\jj) circle (.15);
\draw[ thick, fill=myyellow9, rounded corners=2pt] (-0.5,\jj) circle (.15);
}
\foreach \jj in {0,-2,-4}{
\foreach \i in {1}{
\draw[ thick] (-2*\i+1.4,-.95-\jj) -- (-2*\i+1.55,-.95-\jj) -- (-2*\i+1.55,-1.1-\jj);}
\foreach \i in {1}{
\draw[ thick] (-2*\i+2.4,0.05-\jj) -- (-2*\i+2.55,0.05-\jj) -- (-2*\i+2.55,-0.1-\jj);}
\foreach \i in {1}{
\draw[ thick]  (-2*\i+1.45,-0.1-\jj) -- (-2*\i+1.45,0.05-\jj) -- (-2*\i+1.6,0.05-\jj);}
\foreach \i in {0}{
\draw[ thick]  (-2*\i+.45,-1.1-\jj) -- (-2*\i+.45,-0.95-\jj) -- (-2*\i+.6,-0.95-\jj);}
}
\end{tikzpicture}\Bigr]\,.
\ee
\ch{The first transfer matrix depends on its position $x$, as the region $A$ may be inhomogeneous.} We stress that the noise in \eqref{eq:SFFdiag} is periodic in time, i.e. $u_{x,\tau}=u_{x,\tau+1}$, as required for Floquet evolution.

Taking ${L\to \infty}$ again simplifies the problem\ch{, because only the leading eigenvalues and the corresponding eigenvectors of $\mathbb{\widetilde{T}}$ contribute}. Indeed Ref.~\cite{bertini2021random} proved that $\mathbb{\widetilde{T}}$ has $t$ eigenvalues $1$ while all other eigenvalues of  $\mathbb{\widetilde{T}}$ have magnitude smaller than $1$. The eigenvectors corresponding to the eigenvalue one are $\{\ket{\Pi_{2t}^{2\tau}}\}_{\tau=1,\ldots,t}$, where $\Pi_{2t}$ denotes the operator performing a one-site translation in the time direction. For example, $\ket{\Pi_{2t}^{2}}$ is graphically depicted as
\be
\ket{\Pi_{2t}^{2}}=
\begin{tikzpicture}[baseline=(current  bounding  box.center), scale=.7]
\foreach \j in {0,...,3}{
\draw[ thick]   (0,0+\j) -- (.3,0+\j) .. controls (1,-1.07+\j+.5) and (1,-1.07+\j-.5).. (0.5,-2-.15+\j) -- (0.2,-2-.15+\j);
}
\draw[ thick]    (1,-1.07+.5+4) .. controls (1,-1.07+.5+4) and (1,-1.07-.5+4).. (0.5,-.15+2) -- (0.2,-.15+2);
\draw[ thick]     (1-.2,-1.07-.5+5) ..controls  (1-.2,-1.07-.5+5) and  (1-.2,-1.07-.9+5)  .. (0.5,-.15+3) -- (0.2,-.15+3);
\draw[ thick]    (0,-1)--(0.3,-1) .. controls (1,-1.07+.5-1) and (1,-1.07-.5-1).. (1,-1.07-.5-1);
\draw[ thick]    (0,-2)--(0.3,-2) .. controls (1-0.2,-1.07-1.0) and (1-.2,-1.07-1.3).. (1-0.2,-1.07-.5-1);
\end{tikzpicture}\, ,
\ee
\ch{where the lines are connecting the two time-sheets at  positions shifted by two in the time direction. }

This means 
\be 
K(t)\equiv \lim_{L\to \infty} K(t,L) = \sum_{\tau=0}^{t-1} \bra{\Pi_{2t}^{2\tau}}  \prod_{j=1}^{L_A}\mathbb{T}_j  \ket{ \Pi_{2t}^{2\tau}}\,,
\ee
\ch{where we used the fact that } since the left and right eigenvectors are shifting by the same amount, and $\tr\mathbb{U}^t$ is invariant under translations in time, all terms in the sum are equal and, in particular, they are all equal to $\tr \mathbb B_A^t$. 
Therefore, we arrive a the following simple expression for $K(t)$ in terms of the map $\mathbb B_A$
\be
K(t)=t \, \tr \mathbb B_A^t\, .
\label{eq:SFFBA}
\ee
An analogous relation was first derived in Sec. III F of \cite{garratt2021local}, where the authors used it to express the leading eigenvalue of their space-transfer matrix. 

\subsection{The map $\mathbb B_A$}
\label{sec:map}

Having shown that $\mathbb B_A$ determines both the dynamics of local observables in the subsystem $A$ and the spectral statistics of \eqref{eq:diagramFloquet} let us now proceed to discuss the features of $\mathbb B_A$ that dominate the large-time behaviour. To this aim we make two observations. Firstly, we note that the map $\mathbb{B}_A$ is non-expanding for any unitary $\{U_{x,\tau}\}$, namely 
\begin{lemma}
For all normalised $\ket{\Omega}$, we have $\left| \braket{\Omega|\mathbb{B}_A| \Omega}\right|\leq 1$.
\end{lemma}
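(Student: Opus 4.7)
The plan is to reduce the claim to the fact that $\mathbb{B}_A$ is a contraction on the Hilbert--Schmidt space of operators, which in turn follows from its being unital and CPT, as noted just above the statement. First, I would apply Cauchy--Schwarz with respect to the Hilbert--Schmidt inner product to get
\begin{equation}
|\braket{\Omega|\mathbb{B}_A|\Omega}| \leq \|\Omega\|\,\|\mathbb{B}_A\ket{\Omega}\|,
\end{equation}
so that it suffices to prove $\|\mathbb{B}_A\ket{\Omega}\| \leq \|\Omega\| = 1$.

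For the second step I would invoke the Kadison--Schwarz inequality, which holds for every unital completely positive map and yields $\mathbb{B}_A(\Omega)^\dagger \mathbb{B}_A(\Omega) \leq \mathbb{B}_A(\Omega^\dagger \Omega)$ (here I view $\ket{\Omega}$ as the vectorisation of the operator $\Omega$, consistently with the conventions set in the excerpt). Taking the trace of both sides and using that $\mathbb{B}_A$ is trace-preserving then gives
\begin{equation}
\|\mathbb{B}_A\ket{\Omega}\|^2 = \tr[\mathbb{B}_A(\Omega)^\dagger \mathbb{B}_A(\Omega)] \leq \tr[\mathbb{B}_A(\Omega^\dagger \Omega)] = \tr[\Omega^\dagger \Omega] = 1.
\end{equation}
Chaining this with the Cauchy--Schwarz bound produces the desired inequality.

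The only prerequisite is that $\mathbb{B}_A$ is unital and CPT, which is manifest from the representation in Eq.~\eqref{eq:Bmap}: the doubled gates $W_{x,\tau}$ implement unitary conjugations, and the white circles at the two edges encode the maximally mixed ancillas supplied by the dual-unitary bath, which are then traced out. This is precisely the Stinespring form of a unital CPT channel, so both unitality and trace-preservation are structural rather than in need of a separate verification. I therefore do not foresee a real obstacle; the statement is essentially a standard consequence of the channel structure already established for $\mathbb{B}_A$, and the entire argument fits in a few lines.
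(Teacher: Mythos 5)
Your proof is correct, but it follows a genuinely different route from the paper's. The paper's proof is a one-line direct estimate: it rewrites $\braket{\Omega|\mathbb{B}_A|\Omega}$ as the matrix element $\braket{\mcirc\Omega\mcirc|(\1\otimes\prod_x W_{x,1})(\prod_x W_{x,1/2}\otimes\1)|\mcirc\Omega\mcirc}$ of the two layers of doubled gates taken between the normalised vector $\ket{\mcirc\,\Omega\,\mcirc}$, and bounds it by the product of the operator norms of the two layers, each equal to $1$ because $W_{x,\tau}=U_{x,\tau}\otimes U_{x,\tau}^*$ is unitary; nothing beyond unitarity of the gates and normalisation of $\ket{\mcirc}$ is used. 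You instead argue abstractly: Cauchy--Schwarz in the Hilbert--Schmidt inner product, then the Kadison--Schwarz inequality for the unital CP map $\mathbb{B}_A$ together with trace preservation, i.e.\ the standard fact that doubly stochastic channels are contractions in the $2$-norm. This is sound: Kadison--Schwarz holds for arbitrary operators once the map is $2$-positive (in particular CP), and the unital-CPT character of $\mathbb{B}_A$ is established in the paper, via the environmental representation, \emph{before} the lemma, so there is no circularity. Your route even yields the slightly stronger statement $\|\mathbb{B}_A\ket{\Omega}\|\leq\|\ket{\Omega}\|$, which is really what ``non-expanding'' means and what is invoked later (e.g.\ in the proof of Theorem~\ref{thm:necessary}); what it costs is self-containedness, since it leans on a nontrivial operator inequality where the paper's argument needs only unitarity and an elementary norm bound.
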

\begin{proof}
\begin{align}
& \left| \braket{\Omega|\mathbb{B}_A| \Omega}\right| \notag\\
&= |\!\braket{\mcirc \Omega \!\mcirc\! | (\1 \otimes \prod_{x}  W_{x,1} ) ( \prod_{x}  W_{x,1/2} \otimes \1 ) \!|\!\mcirc\! \Omega \mcirc}\!| \notag\\
&\leq
\|\! \1 \otimes \prod_{x}  W_{x,1}\!\| \|\! \prod_{x} W_{x,1/2} \otimes \1 \!\| \leq 1.
\end{align}
\end{proof}
\noindent 
\ch{Intuitively, this condition means that upon the application of $\mathbb{B}_A$ the norm of a vector cannot grow.} Secondly, we note that, using the unitarity of $U_{x,\tau}$, one can directly see that
\begin{lemma} The \ch{vectorised normalised identity operator} 
 \be
\ket{\mcirc}_{A}=\ket{\overbrace{\mcirc \dots \mcirc}^{2L_A-1}},
\label{eq:inftemp}
\ee
is an eigenvector of $\mathbb{B}_A$ corresponding to eigenvalue $1$, \ch{i.e. $\mathbb{B}_A$ is unital}. 
\end{lemma}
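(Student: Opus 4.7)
The plan is to reduce the claim to the folded unitarity identity
\begin{equation}
W_{x,\tau}\ket{\mcirc\,\mcirc} \;=\; \ket{\mcirc\,\mcirc},
\end{equation}
which is just $U_{x,\tau}\1\, U_{x,\tau}^{\dagger}=\1$ written in the vectorised picture with the doubled gate of Eq.~(\ref{eq:doublegate}). Diagrammatically this says that a doubled gate fed with a pair of normalised identities on its two bottom legs returns a pair of normalised identities on its two top legs; it is one of the standard unitarity ``cap'' identities in the folded graphical calculus.

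I would then propagate $\ket{\mcirc}_A$ through $\mathbb{B}_A$ one layer at a time, following Eq.~(\ref{eq:Bmap}). Applying first the bottom layer $m_{1/2,l}\otimes\bigotimes_x W_{x,1}$: each bulk gate $W_{x,1}$ receives the pair $\ket{\mcirc\,\mcirc}$ on its inputs (both legs lie inside $A$) and by the identity above outputs $\ket{\mcirc\,\mcirc}$. The left boundary gate $m_{1/2,l}$ is a single $W$ whose bath (left) input leg carries $\ket{\mcirc}$ by construction and whose bath output leg is closed with the matching $\bra{\mcirc}$; since its system (right) input leg is also a $\ket{\mcirc}$ supplied by $\ket{\mcirc}_A$, the same identity produces $\ket{\mcirc}$ on the system output, while the bath contribution collapses to $\braket{\mcirc|\mcirc}=1$. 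Hence after the bottom layer the state on $A$ is still $\ket{\mcirc}_A$. Applying the same argument verbatim to the upper layer $\bigotimes_x W_{x,1/2}\otimes m_{L_A-1/2,r}$ --- with $m_{L_A-1/2,r}$ playing the symmetric role on the right edge --- reproduces $\ket{\mcirc}_A$ at the top, establishing $\mathbb{B}_A\ket{\mcirc}_A=\ket{\mcirc}_A$.

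No genuine obstacle is expected: the whole argument is a gate-by-gate application of the folded unitarity identity, together with the trivial normalisation $\braket{\mcirc|\mcirc}=1$ that consumes the bath legs of the boundary channels. The only point worth checking carefully is that each boundary channel $m_{1/2,l}$ and $m_{L_A-1/2,r}$ indeed contracts both of its bath legs (input and output) with $\ket{\mcirc}$, so that the bulk unitarity identity can be invoked unchanged at the edges; this is visible from their graphical definitions just above Eq.~(\ref{eq:Bmap}).
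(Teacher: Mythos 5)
Your proposal is correct and follows essentially the same route as the paper, which simply notes that unitarity of the local gates gives $W_{x,\tau}\ket{\mcirc\,\mcirc}=\ket{\mcirc\,\mcirc}$ and that this unitality of the doubled gates immediately yields $\mathbb{B}_A\ket{\mcirc}_A=\ket{\mcirc}_A$. Your layer-by-layer propagation and the explicit treatment of the boundary channels (whose bath legs are closed by $\ket{\mcirc}$ and $\bra{\mcirc}$, contributing $\braket{\mcirc|\mcirc}=1$) just spell out the details the paper leaves implicit.
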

\noindent 
\ch{This follows directly from the unitarity of the local gates which induces the unitality of the double gates $W_{x,\tau} \ket{\mcirc \mcirc} = \ket{\mcirc \mcirc}$.
Upon unfolding, the vector $\ket{\mcirc}_{A}$ is proportional to an infinite temperature density matrix ${\rho_{\infty,A}= \1_A/d^{2L_A-1}}$ (where $\1_A$ is the identity operator in region $A$).} This means that we have either one of the following two cases: 
\begin{itemize}
\item[(i)] $\ket{\mcirc}_{A}$ is the only  eigenvector of $\mathbb{B}_A$ of unimodular\ch{\footnote{Of \ch{magnitude} equal to one.}} eigenvalue; 
\item[(ii)] There is at least one additional  eigenvector of $\mathbb{B}_A$ of unimodular eigenvalue\ch{\footnote{It might be complex and different from $1$.}};
\end{itemize}
In the case (i) the subsystem $A$ eventually thermalises to the infinite-temperature state. In particular, considering the Hilbert-Schmidt norm of the difference between the reduced density matrix and the thermal state $\rho_{\infty,A}$ we have 
\be
\|\rho_A(t)-\rho_{\infty,A}\|^2= \braket{\rho_A(0)|(\mathbb{B}_A^\dagger)^t\mathbb{B}_A^t|\rho_A(0)}-\frac{1}{D}\, ,
\ee
where $D$ is the dimension \ch{of the Hilbert space of the region $A$}. 
The trivial eigenvalue $1$ of $\mathbb{B}_A$ gives the result $1/D$, therefore, at long times we have \ch{
\be
||\rho_A(t)-\rho_{\infty,A}||^2 \propto  \sum_j |c_j|^2 \Lambda_j^{2t} + \dots \, ,
\label{eq:normbound}
\ee
where $\Lambda_j$,  with ${|\Lambda_j|=\Lambda<1}$, are, possibly several, second largest (in magnitude) eigenvalues of $\mathbb B_A$, $c_j$ are the overlaps of the corresponding eigenvectors with $\ket{\rho_A(0)}$, and we denoted by dots the sub-leading terms.} Note that \ch{this asymptotic result} only applies for times larger than the size of $\mathbb B_A$'s largest Jordan's block\footnote{\ch{$\mathbb B_A$ is not necessarily diagonalizable.}} ($t \gg L_A$).

Eq.~\eqref{eq:normbound} implies that $\tau_{\rm thm}$, the thermalisation time-scale for all local observables, is bounded by 
\be
\tau_{\rm thm} \leq \tau_\Lambda \equiv  -\frac{1}{\log \Lambda} \, .
\label{eq:taulambda}
\ee
Concomitantly, the spectral form factor of \eqref{eq:diagramFloquet2}
\ch{
\be
K(t)=t \,  \tr \mathbb{B}_A^t = t\Bigl[1+\sum_j \Lambda_j^t+\dots\Bigr]
\ee
}approaches the result found in the circular unitary ensemble \ch{(CUE)} of random matrices, $K(t)=t$, \ch{indicating chaotic behaviour}. In particular, the timescale of the approach (called Thouless time in this context~\cite{kos2018many, chan2018spectral, kos2021chaos}) is given by $ \tau_\Lambda$. 

In the case (ii) the subsystem $A$ does not thermalise (and if the additional unimodular \ch{eigenvalue is $\neq 1$} the subsystem does not even equilibrate) and the spectral form factor of \eqref{eq:diagramFloquet2} does not approach the random-matrix-theory prediction \ch{$K(t)=t$, so the model is not chaotic.
Instead, the spectral form factor reads as
\be
K(t)= t \Bigr[1+\sum_j e^{i\phi_j t} +\dots\Bigl],
\ee
where $\Lambda_j=e^{i\phi_j}$ are the additional unimodular eigenvalues. In particular, in Sec.~\ref{sec:StrongL} we will see that if a circuit fulfils (ii) and the gates in $A$ are the same, $U_{x,t}=U$, there are exponentially many (in $L_A$) unimodular eigenvalues. If these eigenvalues are equal to one, as it is typically the case at least for $d=2$, this yields 
\be
 K(t) \sim t g^{L_A},
\ee
where $g$ is some constant.}

The case (i) is \ch{believed to be} the generic one and can be argued to occur for almost all choices of $\{U_{x,\tau}\}$ in  \eqref{eq:diagramFloquet2}. \ch{Indeed, on physical grounds, it is natural to expect a finite system coupled to a thermal bath to eventually thermalise. On a more technical level one could also argue that having additional unimodular eigenvalues of $\mathbb{B}_A$ requires additional constraints on $\{U_{x,\tau}\}$ and is hence non-generic.} A quantitative characterisation of $\Lambda$, however, has been achieved only for dual unitary circuits~\cite{piroli2020exact} (where it is exactly 0, Sec.~\ref{sec:pertDU}) and for the quantum cellular automaton Rule 54~\cite{klobas2021exact, klobas2021exactII}. In Sec.~\ref{sec:pertDU} we provide a perturbative characterisation of $\Lambda$ for $\{U_{x,\tau}\}$ close to (but not exactly at) the dual-unitary point. 

Moreover, in Sec.~\ref{sec:StrongL} we characterise the class of gates for which, instead, (ii) occurs. \ch{ As these systems do not thermalise even when connected to a thermal bath, we call them \emph{strongly non-ergodic}.} In Sec.~\ref{sec:NecSL} we prove a set of necessary conditions on $\{U_{x,\tau}\}$ in order to have \ch{strong non-ergodicity} for generic $d$, while in Sec.~\ref{sec:SLd2} we present \ch{the results of} a full classification of all \ch{strongly non-ergodic} gates for $d=2$ in the presence of translational invariance\ch{, which is carried out explicitly in App. \ref{app:d2homogenous}}.

To conclude this brief survey we point out that, interestingly, a quantum map very similar to $\mathbb B_A$ (with $m_{1/2,l}$ in \eqref{eq:Bmap} replaced by the identity) has been shown in Ref.~\cite{ippoliti2021fractal} to describe the entanglement growth under non-unitary local quantum circuits with unitary duals (i.e.\ that are unitary in the space direction but not in the time one~\cite{ippoliti2021postselection}). There, the authors focused on the late time regimes of the non-unitary evolution, which correspond to ${t< L_A}$. Note that the dynamics in this regime are determined by the full structure of $\mathbb B_A$ (Jordan blocks and full spectrum). On the other hand, here we are interested in the complementary regime $t\gg L_A$, which is completely specified by the largest eigenvalues of $\mathbb B_A$. This means that, even though this regime is perhaps less rich from the physical point of view, it is amenable to a more rigorous analysis.

\section{$\mathbb B_A$ in perturbed dual-unitary circuits}
\label{sec:pertDU}

Let us characterise the leading part of the spectrum of $\mathbb B_A$ for circuits made of perturbed dual-unitary gates. We will show that the magnitude of the second leading eigenvalue is proportional to the perturbation strength and that, in certain perturbed dual unitary circuits, one can obtain the leading part of the spectrum of $\mathbb B_A$ using approaches similar to those introduced in \cite{kos2021correlations, kos2021chaos}.

To begin with, let us note that if all the gates composing $\mathbb B_A$ in Eq.~\eqref{eq:Bmap} are dual unitary we have  
\be
\tr[\mathbb B_A^t] = 1,\qquad\forall t\,. 
\ee
This means that the spectrum of $\mathbb B_A$ contains just two points, ${\rm Sp}(\mathbb B_A)=\{0,1\}$, and, moreover, the eigenvalue $1$ has geometric and algebraic multiplicity equal to one; the corresponding eigenvector is given by \eqref{eq:inftemp}. Note that these facts directly imply that $\tau_{\Lambda}$ in Eq.~\eqref{eq:taulambda} is exactly zero for dual-unitary circuits. In addition we also have 
\be
\mathbb B_A^{t} = \ket{\mcirc\ldots\mcirc}\bra{\mcirc\ldots\mcirc},\qquad t\geq L_A\,,
\ee
which means that the Jordan blocks relative to the eigenvalue $0$ have size bounded by $L_A$. Therefore the subsystem $A$ thermalises exactly for times $t\geq L_A$.

Let us now consider perturbed dual-unitary gates of the form~\cite{kos2021correlations}
\be
(U_\eta)_{x,\tau}=(U_{\rm du})_{x,\tau} e^{i \eta P_{x,\tau}}\, ,
\label{eq:pertU}
\ee
where $U_{\rm du}$ is dual-unitary gate, $\eta\ll 1$ the perturbation strength and $P_{x,\tau}$ generic Hermitian two-qudit operators. The gates \eqref{eq:pertU} produce double gates of the form 
\be
(W_\eta)_{x,\tau}= (W_{\rm du})_{x,\tau} + (\delta W_\eta)_{x,\tau}.
\label{eq:splitW}
\ee
Here we introduced ``defects'' defined by
\be
\!\!\!(\delta W_\eta)_{x,\tau} = (W_{\rm du})_{x,\tau} (e^{i \eta P_{x,\tau}}\!\otimes e^{-i \eta P^T_{x,\tau}}-\1)= \begin{tikzpicture}[baseline=(current  bounding  box.center), scale=0.75]
\def\eps{0.5};
\Wgategrey{0}{0};
\Text[x=-0-0.25,y=-0.75]{}
\end{tikzpicture},
\ee
where $(\cdot)^T$ denotes transposition. Note that defects are such that 
\be
(\delta W_\eta)_{x,\tau} \ket{\mcirc\mcirc}=0 = \bra{\mcirc\mcirc}(\delta W_\eta)_{x,\tau}\,.
\label{eq:conditiondef} 
\ee

As we shall see, the perturbation \eqref{eq:pertU} leads to the appearance of additional non-zero eigenvalues of $\mathbb B_A$ which are proportional to fractional powers of $\eta$. In the following we develop perturbative predictions for the latter and compare them with exact numerics. \ch{For simplicity, even though this is not necessary, in the rest of this section we use the same perturbed gate everywhere in the region $A$, i.e. $W_{x,\tau}=W$.}
In particular, in Sec.~\ref{sec:leadingpt} we compute the first non-trivial correction in $\eta$ to the spectrum of $\mathbb B_A$. Instead, in Sec.~\ref{sec:skeleton} we present an uncontrolled approximation based on a particular truncation of the matrix $\mathbb B_A$, which gives quantitatively accurate results even at intermediate times and at moderate perturbation strengths. 
Finally, in Sec.~\ref{sec:results} we discuss the predictions that these approximation schemes yield concerning spectral and thermalisation properties of the system.

\subsection{Limit of vanishing perturbation strength $\eta$ at fixed $L_A$}
\label{sec:leadingpt}

Let us begin by evaluating the first non-trivial correction in $\eta$ to $\tr \mathbb{B}_A^t$. 
\ch{This correction is reflected in eigenvalues of $\mathbb{B}_A$ of size proportional to $\eta^{{2}/{(2L_A-1)}}$.  Interestingly, these eigenvalues give non-zero contribution to $\tr \mathbb{B}_A^t$ only at times, which are multiples of $2L_A-1$. To access qualitative predictions at other times, one need to access higher order corrections to the  eigenvalues of $\mathbb{B}_A$, which we do in the subsequent subsection \ref{sec:skeleton}.
To evaluate the first non-trivial correction we consider the diagrammatic representation  }
\be
{\rm tr}[\mathbb B_A^t] =
\begin{tikzpicture}[baseline=(current  bounding  box.center), scale=0.55]
\foreach \i in {5.5, 6.5, 7.5}{
\draw[thick, dotted] (2*\i+2-12.5+0.255,-.5) -- (2*\i+2-12.5+0.255,9.65);}
\foreach \i in {5.5, 6.5}{
\draw[thick, dotted] (2*\i+2-11.5-0.255,-0.5) -- (2*\i+2-11.5-0.255,9.65);}
\foreach \i in {5.5, 6.5}{
\draw[very thick] (2*\i+2-11.5,9.5) arc (-45:175:0.15);
\draw[very thick] (2*\i+2-11.5,-.5) arc (315:180:0.15);
}
\foreach \i in {5.5, 6.5, 7.5}{
\draw[very thick] (2*\i+2-0.5-12,-.5) arc (-135:0:0.15);
\draw[very thick] (2*\i+4-2.5-12,9.5) arc (225:0:0.15);
}
\foreach \i in {0,...,4}{
\Wgategreen{0}{0+2*\i}\Wgategreen{2}{0+2*\i}\Wgategreen{4}{0+2*\i}
\Wgategreen{1}{1+2*\i}\Wgategreen{3}{1+2*\i}\Wgategreen{5}{1+2*\i}
}
\foreach \i in {0,...,4}{
\draw[thick, fill=white] (-0.5,-0.5+2*\i) circle (0.1cm); 
\draw[thick, fill=white] (-0.5,0.5+2*\i) circle (0.1cm); 
\draw[thick, fill=white] (5.5,1.5+2*\i) circle (0.1cm); 
\draw[thick, fill=white] (5.5,0.5+2*\i) circle (0.1cm); 
}
\label{eq:trBdiag}
\end{tikzpicture}\, ,
\ee 
and write each \ch{green gate (they are all the same)} as the sum of \ch{orange} dual unitary part   and \ch{grey}  defect \ch{(cf. \eqref{eq:splitW})}
\be
 \begin{tikzpicture}[baseline=(current  bounding  box.center), scale=0.75]
\def\eps{0.5};
\Wgategreen{0}{0};
\Text[x=-0-0.25,y=-0.75]{}
\end{tikzpicture} = \begin{tikzpicture}[baseline=(current  bounding  box.center), scale=0.75]
\def\eps{0.5};
\Wgateorange{0}{0};
\Text[x=-0-0.25,y=-0.75]{}
\end{tikzpicture}
+
 \begin{tikzpicture}[baseline=(current  bounding  box.center), scale=0.75]
\def\eps{0.5};
\Wgategrey{0}{0};
\Text[x=-0-0.25,y=-0.75]{}
\end{tikzpicture}
\ee
generating the following sum over distributions of defects in the network 
\be
{\rm tr}[\mathbb B_A^t] = \sum_{\{\mathcal C\}} W_{\mathcal C}\,.
\label{eq:sumdefects}
\ee
Here ${\mathcal C}$ is a set of positions on the tensor network and $W_{\mathcal C}$ is a tensor network like \eqref{eq:trBdiag} with all dual unitary gates except for defects at the positions specified by ${\mathcal C}$. 

As discussed before, the contribution with no defect gives one and our objective is to determine the terms with the lowest number of defects that give non-vanishing contribution. 

For $t= 0\, {\rm mod}(2L_A-1)$ \ch{the only non-zero} such configurations are those with $2t/(2L_A-1)$ defects placed at the two boundaries of $A$ in an alternating fashion and are separated in time by $L_A-1/2$ steps, for instance
\be
W_{\{(3,2),(1/2,9/2)\}}= \begin{tikzpicture}[baseline=(current  bounding  box.center), scale=0.55]
\foreach \i in {5.5, 6.5, 7.5}{
\draw[thick, dotted] (2*\i+2-12.5+0.255,-.5) -- (2*\i+2-12.5+0.255,9.65);}
\foreach \i in {5.5, 6.5}{
\draw[thick, dotted] (2*\i+2-11.5-0.255,-0.5) -- (2*\i+2-11.5-0.255,9.65);}
\foreach \i in {5.5, 6.5}{
\draw[very thick] (2*\i+2-11.5,9.5) arc (-45:175:0.15);
\draw[very thick] (2*\i+2-11.5,-.5) arc (315:180:0.15);
}
\foreach \i in {5.5, 6.5, 7.5}{
\draw[very thick] (2*\i+2-0.5-12,-.5) arc (-135:0:0.15);
\draw[very thick] (2*\i+4-2.5-12,9.5) arc (225:0:0.15);
}
\foreach \i in {0,...,4}{
\Wgateorange{0}{0+2*\i}\Wgateorange{2}{0+2*\i}\Wgateorange{4}{0+2*\i}
\Wgateorange{1}{1+2*\i}\Wgateorange{3}{1+2*\i}\Wgateorange{5}{1+2*\i}
}
\Wgategrey{5}{3}
\Wgategrey{0}{8}
\foreach \i in {0,...,4}{
\draw[thick, fill=white] (-0.5,-0.5+2*\i) circle (0.1cm); 
\draw[thick, fill=white] (-0.5,0.5+2*\i) circle (0.1cm); 
\draw[thick, fill=white] (5.5,1.5+2*\i) circle (0.1cm); 
\draw[thick, fill=white] (5.5,0.5+2*\i) circle (0.1cm); 
}
\end{tikzpicture}.
\ee
These diagrams simplify due to dual-unitarity, leaving us with one dimensional zig-zag like quantum channel that in Ref.~\cite{kos2021correlations} has been dubbed \emph{skeleton diagram}. For example the one above simplifies to 
\be
W_{\{(3,2),(1/2,9/2)\}}= \begin{tikzpicture}[baseline=(current  bounding  box.center), scale=0.55]
\foreach \i in {5.5}{
\draw[thick, dotted] (2*\i+2-11.5-0.255,-0.5) -- (2*\i+2-11.5-0.255,9.65);}
\foreach \i in {5.5}{
\draw[very thick] (2*\i+2-11.5,9.5) arc (-45:175:0.15);
\draw[very thick] (2*\i+2-11.5,-.5) arc (315:180:0.15);
}
\foreach \i in {1,...,3}{
\Wgateorange{\i+1}{\i-1}
\draw[thick, fill=white] (0.5+\i,-0.5+\i) circle (0.1cm); 
\draw[thick, fill=white] (1.5+\i,-1.5+\i) circle (0.1cm); 
}
\foreach \i in {3,...,6}{
\Wgateorange{-\i+7}{\i+1}
\draw[thick, fill=white] (6.5-\i,\i+.5) circle (0.1cm); 
\draw[thick, fill=white] (7.5-\i,\i+1.5) circle (0.1cm); 
}
\Wgategrey{5}{3}
\Wgategrey{0}{8}
\Wgateorange{1}{9}
\draw[thick, fill=white] (1.5,8.5) circle (0.1cm); 
\draw[thick, fill=white] (.5,9.5) circle (0.1cm); 
\draw[thick, fill=white] (-.5,8.5) circle (0.1cm); 
\draw[thick, fill=white] (-.5,7.5) circle (0.1cm); 
\draw[thick, fill=white] (5.5,3.5) circle (0.1cm); 
\draw[thick, fill=white] (5.5,2.5) circle (0.1cm); 
\end{tikzpicture}\,,
\label{eq:skeleton}
\ee
\ch{where empty circles still depict the vectorised identity operators \eqref{eq:Id}.}
Summing all these terms and noting that for $t\neq 0\, {\rm mod}(2L_A-1)$ one needs at least $2 \left \lfloor{t/(2L_A-1)}\right \rfloor +2$ defects to have a non-zero contribution we obtain 
\be
\tr \mathbb{B}_A^t = 1+ S_0 \left(\tfrac{t}{2L_A-1},L_A\right) (\delta_{t \, {\rm mod}\, 2L_A-1,0} + \mathcal{O}(\eta^2))\, ,
\label{eq:traceB}
\ee
where $S_0(k,L_A) \propto \eta^{2k}$ is evaluated contracting the skeleton diagrams. For example, in the translational invariant case we have 
\be
S_0(k,L_A) = (2L_A-1) {\rm tr}\big[C^{k}\big], 
\ee
where $C$ denotes the quantum channel which begins and ends at the same point, i.e., 
\be
C = d_{l} (m_-)^{2L_A-2} d_{r} (m_+)^{2L_A-2},
\ee
and we introduced the quantum channels
 \begin{align}
m_+ &=\begin{tikzpicture}[baseline=(current  bounding  box.center), scale=0.7]
\Wgateorange{0}{0}\draw[thick, fill=white] (0.5,-0.5) circle (0.1cm);
\draw[thick, fill=white] (-0.5,0.5) circle (0.1cm); 
\end{tikzpicture} \, ,
& 
m_- &=\begin{tikzpicture}[baseline=(current  bounding  box.center), scale=0.7]
\Wgateorange{0}{0}
\draw[thick, fill=white] (-0.5,-0.5) circle (0.1cm);
\draw[thick, fill=white] (0.5,0.5) circle (0.1cm); 
\end{tikzpicture}\,, 
\label{eq:mpm}
\\
d_l &=\begin{tikzpicture}[baseline=(current  bounding  box.center), scale=0.7]
\Wgategrey{0}{0}\draw[thick, fill=white] (-0.5,-0.5) circle (0.1cm);
\draw[thick, fill=white] (-0.5,0.5) circle (0.1cm); 
\end{tikzpicture} \, ,
& 
d_r &=\begin{tikzpicture}[baseline=(current  bounding  box.center), scale=0.7]
\Wgategrey{0}{0}
\draw[thick, fill=white] (0.5,-0.5) circle (0.1cm);
\draw[thick, fill=white] (0.5,0.5) circle (0.1cm); 
\end{tikzpicture}\,.
\end{align}
Note that $d_l, d_r$ are ${\mathcal O}(\eta)$ for $\eta\to0$. 

The matrix $C$ has ${d^2-1}$ nontrivial eigenvalues
$\left\{ \lambda_{C,i} \right\}_{i=0}^{d^2-2}$ and a trivial eigenvalue $0$ corresponding to the eigenvector $\mcirc$ \ch{(cf. \eqref{eq:Id})}. Rewriting the leading order of \eqref{eq:traceB} at times $t=k (2L_A-1)$ we then have 
\begin{align}
&{\rm tr}[\mathbb B_A^{k(2L_A-1)}] -1=\sum_l \Lambda_l^{k(2L_A-1)}\notag\\
&= (2L_A-1) \sum_{i=0}^{d^2-2} \lambda_{C,i}^k\qquad \forall k>0\,.
\end{align}
\ch{Here, slightly differently from before, we denoted by $\Lambda_l$ all eigenvalues of $\mathbb B_A$ which appear at the first nontrivial order of $\eta$ except for the eigenvalue $1$.}
Since this holds for all $k>0$ and \eqref{eq:traceB} is of higher order in $\eta$ at times $t \, {\rm mod} \, (2L_A-1) \neq 0$, we obtain (apart from the leading eigenvalue one) $(d^2-1) (2L_A-1)$ sub-leading eigenvalues of $\mathbb B_A$
\be
\Lambda_{(2L_A-1)j+k} =  e^{\frac{i 2\pi k}{2L_A-1}} (\lambda_{C,j})^{1/(2L_A-1)} 
\label{eq:firstLambda}
\ee
where $k=0, \dots 2L_A-2$, $j=0, \dots d^2-2$. We illustrate the agreement with numerical results in Fig.~\ref{fig:Spectrum}. 

\begin{figure}[t]
    \centering
    \includegraphics[scale=0.65]{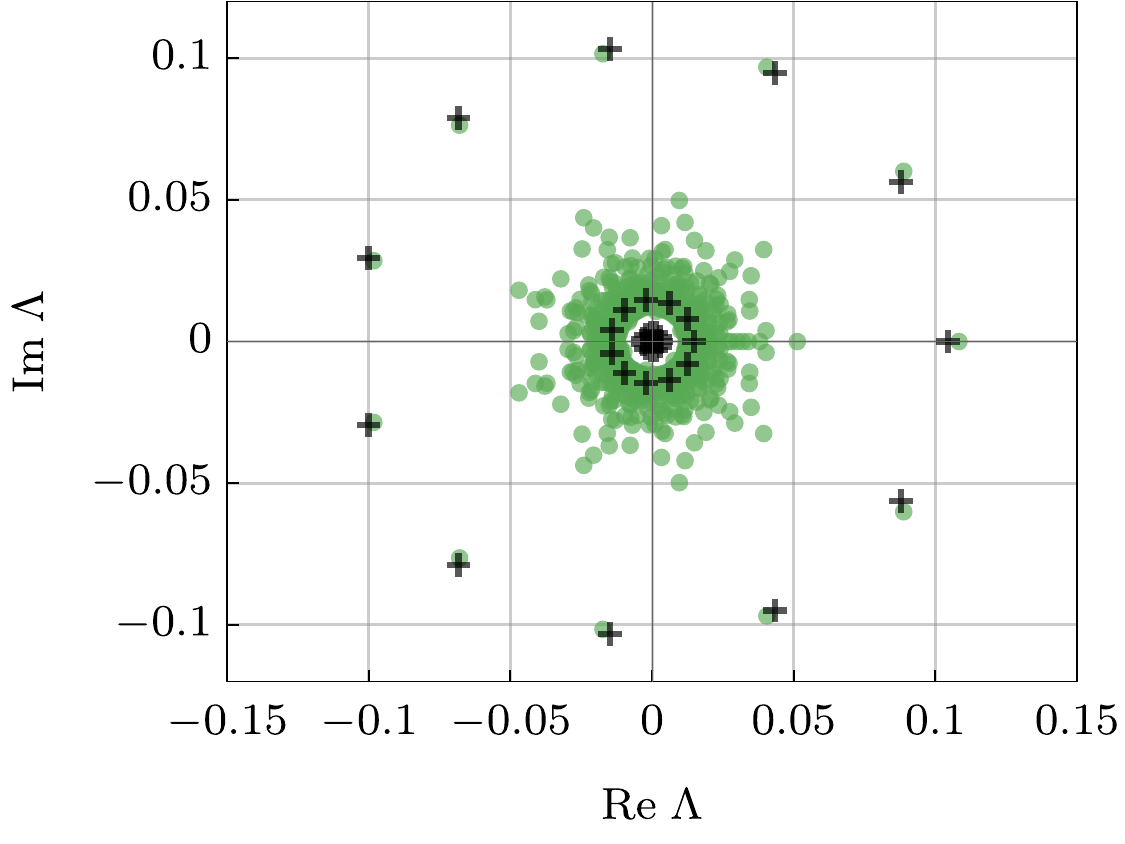}
    \caption{Comparison between the spectrum of the result based on the first order given by Eq.~\eqref{eq:firstLambda} depicted in black and exact numerical results for \ch{homogeneously} perturbed dual-unitary circuits of largest $200$ eigenvalues depicted in green.  Notice the $2L_A-1$ fold symmetry of the leading part of the spectrum. $2L_A=12$, $\eta=0.0001$ and the gate parameters are in the first entry of Tab.~\ref{tab:PrxGates}.
    }
    \label{fig:Spectrum}
\end{figure}

\ch{Note that, for large enough $L_A$ the leading eigenvalue of $C$ satisfy} 
\be
\lambda_{C,{\rm lead}} \propto  \eta^2\lambda_+^{2L_A-2} \lambda_-^{2L_A-2}, 
\label{eq:leadLambdaC}
\ee
where $\lambda_\pm$ are the first sub-leading eigenvalues of $m_\pm$ \ch{from Eq. \eqref{eq:mpm}}. This gives 
\be
\Lambda_{k,{\rm lead}} \propto e^{\frac{i 2\pi k}{2L_A-1}} \lambda_+ \lambda_- \left[\frac{\eta^{2}}{\lambda_+\lambda_-}\right]^{{1}/{(2L_A-1)}},
\label{eq:leadLambdas}
\ee
with $k=0,1, \dots 2L_A-2$, and, in turn
\be
\tau_{\Lambda}\propto -\frac{2L_A-1}{\ln \eta^{2} - (2L_A-2)\log(\lambda_+\lambda_-)}\simeq -\frac{2L_A-1}{\ln \eta^{2}}\, . 
\label{eq:tauPertDU}
\ee

The result \eqref{eq:leadLambdas} can also be understood in terms of standard perturbation theory for a matrix, $\mathbb B_A$, with non-trivial Jordan structure. In particular, one can verify that at the first non-trivial order the perturbation couples a Jordan block of size $L$ and one of size $L-1$ as 
\be 
\begin{aligned}
\braket{L_A,L_A|\mathbb B_A|L_A-1,1} &\sim \eta\, , \\
\braket{L_A-1,L_A-1|\mathbb B_A|L_A,1} &\sim \eta \, .
\end{aligned}
\ee 
This produces $2L_A-1$ eigenvalues of equal magnitude $\propto \eta^{2/(2L_A-1)}$ and phases $e^{i 2\pi k/(2L_A-1)}$ for $k=0,\ldots,2L_A-2$.

\subsection{\ch{All skeleton diagrams}}
\label{sec:skeleton}

\begin{figure*}[t]
    \centering
    \includegraphics[width=0.47\textwidth]{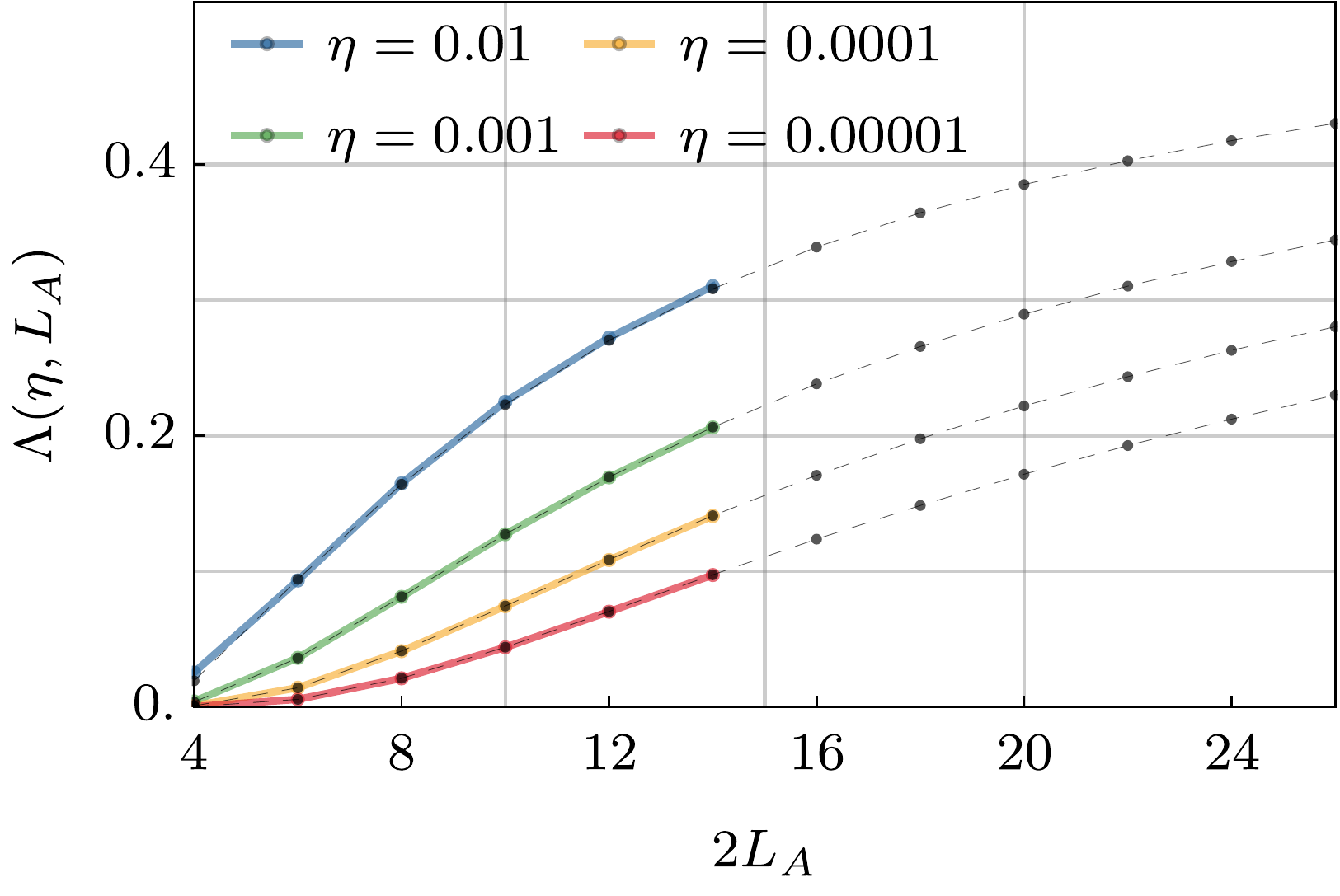}
    \includegraphics[width=0.47\textwidth]{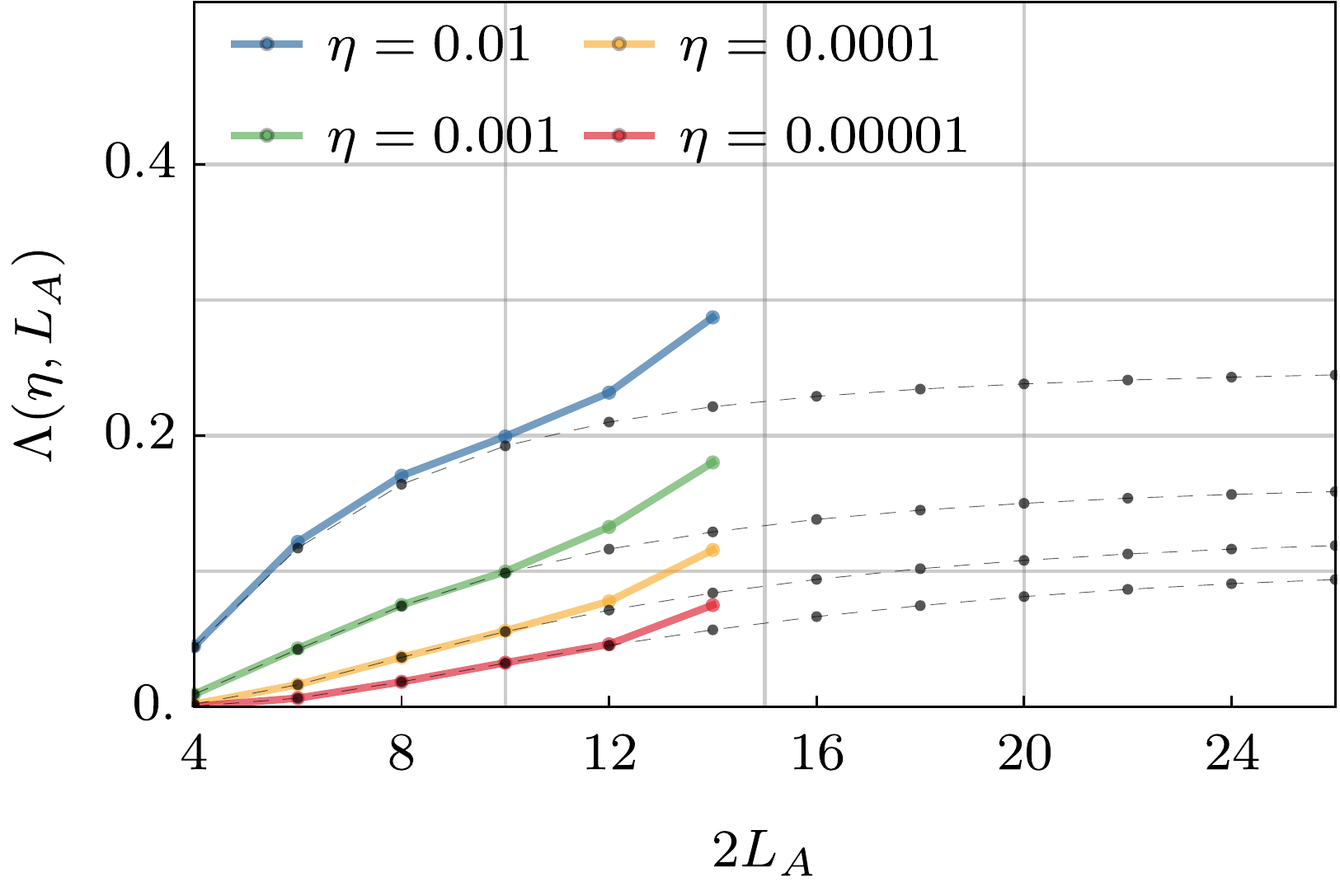}
    \caption{
    The first non-trivial (sub-leading) eigenvalue $\lambda(\eta,L_A)$ versus $L_A$ for different perturbation strength $\eta$ at $d=2$. In black, we show the predictions from \ch{Eq.~\eqref{eq:BM1} (all skeleton diagrams)}. The left (right) panel shows an example where the \ch{bare bones} condition is (not) fulfilled. 
    When the conditions are not fulfilled, the effect of the sub-leading terms become visible by increasing $L_A$, as we explain in the main text.
    The parameters of the gates are given in Tab.~\ref{tab:PrxGates}. }
    \label{fig:lambda}
\end{figure*}

For the special class of perturbed dual-unitary circuits characterised in Ref.~\cite{kos2021correlations},  we can go beyond the first order result presented in the previous subsection by following the methods introduced in \cite{kos2021correlations, kos2021chaos} for the calculation of correlation functions. 

The idea is to assume that in the sum \eqref{eq:sumdefects} over the possible distributions of defects, \ch{those that result in}  skeleton diagrams like \eqref{eq:skeleton} are the most important. That is, they are not only giving the correct first order result as shown in the previous subsection but, at large enough $L_A$ and $t$, they give a very good approximation of all subsequent orders. The intuitive reason is that configurations originating from non-skeleton diagrams are expressed in terms of quantum channels with maximal eigenvalues strictly smaller than $\lambda_{\pm}$ (cf.~\eqref{eq:leadLambdaC}). This means that they are suppressed exponentially in $L_A$ and $t$ with respect to the skeleton diagrams. We will refer to these cases as those where the \ch{``\emph{bare bones} condition''} holds.

\ch{The aforementioned  approximation is uncontrolled}, meaning that it does not fully reproduce any order of the perturbative expansion in $\eta$ higher than the the leading one, but, as we now show, works extremely well for the class of circuits identified in \cite{kos2021correlations} \ch{(see also the comparisons with exact numerics in Sec.~\ref{sec:results}). In fact, the fact that this approximation works so well can be justified by means of a simple energetic argument~\cite{kos2021correlations}.} We begin by noting that the approximation under discussion is equivalent to restricting $\mathbb B_A$ to the sector spanned by operators with \ch{single site support} (\ch{dubbed as $m=1$ sector in~\cite{kos2021chaos}}). 
\ch{This sector is spanned by the following basis
\be
\{ \ket{
\underbrace{\mcirc \dots \mcirc}_{\text{$2 x -1$}}
 o^\alpha \underbrace{\mcirc \dots \mcirc}_{\text{$2L_A-2 x $}}}\},
\label{eq:Hm1}
\ee
where ${x={1}/{2},1,\dots L_A-{1}/{2}}$, ${\alpha=1,\dots,d^2-1}$, and $o^\alpha$ are elements of the Hilbert-Schmidt orthonormal basis of the traceless part of ${\rm End}({\mathbb C}^d)$, e.g. the set of generalised Gell-Mann matrices.
Specifically, the aforementioned approximation} corresponds to the following replacement 
\be
\mathbb B_A \mapsto (\mathbb B_A)_{m=1}.
\label{eq:SKreplacement}
\ee
The restricted matrix $(\mathbb B_A)_{m=1}$ has dimensions $(d^2-1) L_A \times (d^2-1) L_A$ and is made out of three different \ch{blocks and can be arranged as follows}
\be
(\mathbb B_A)_{m=1}=\begin{pmatrix}
\begin{tikzpicture}[baseline=0pt, scale=\scale*1.4]
\draw[  thick] (0,0) rectangle (2,-3);\node at (1,-1.5) {$E_1$}; 
\draw[thick] (2,-2) rectangle (4,-6); \node at (3,-4) {$B$}; 
\draw[thick] (4,-4) rectangle (6,-8);\node at (5,-6) {$B$}; 
\node at (7,-8) {$\ddots$};
\draw[ thick] (8,-8) rectangle (10,-12);\node at (9,-10) {$B$};
\draw[ thick] (10,-10) rectangle (11,-12); \node at (10.5,-11) {$E_2$}; 
\end{tikzpicture}
\end{pmatrix}\,.
\label{eq:BM1}
\ee
Here the blocks $B$, $E_1$, and $E_2$ are written in terms of ${8(d^2-1)^2}$ different matrix elements as 
\be
\begin{aligned}
B&=\begin{pmatrix}
\begin{tikzpicture}[baseline=0pt, scale=\scale] \draw[thick] (0,0) -- (0,1)--(1,2) ;\end{tikzpicture}_{\, r} &
\begin{tikzpicture}[baseline=0pt, scale=\scale]    \draw[thick] (0,0) -- (1,1)--(2,2) ;\end{tikzpicture}_{\, l}
\\
\begin{tikzpicture}[baseline=0pt, scale=\scale] \draw[thick] (2,0) --(2,2) ;\end{tikzpicture}_{\, r} &
\begin{tikzpicture}[baseline=0pt, scale=\scale]    \draw[thick] (0,0) -- (1,1)--(1,2) ;\end{tikzpicture}_{\, l}
\\
\begin{tikzpicture}[baseline=0pt, scale=\scale] \draw[thick] (1,0) -- (1,1)--(2,2) ;\end{tikzpicture}_{\, r} &
\begin{tikzpicture}[baseline=0pt, scale=\scale]    \draw[thick] (0,0) --(0,2) ;\end{tikzpicture}_{\, l}
\\
\begin{tikzpicture}[baseline=0pt, scale=\scale] \draw[thick] (2,0) -- (1,1)--(0,2) ;\end{tikzpicture}_{\, r} &
\begin{tikzpicture}[baseline=0pt, scale=\scale]    \draw[thick] (1,0) -- (1,1)--(0,2) ;\end{tikzpicture}_{\, l}
\end{pmatrix}\, ,
\\
E_1&=\begin{pmatrix}
\begin{tikzpicture}[baseline=0pt, scale=\scale] \draw[thick] (2,0) --(2,2) ;\end{tikzpicture}_{\, r} &
\begin{tikzpicture}[baseline=0pt, scale=\scale]    \draw[thick] (0,0) -- (1,1)--(1,2) ;\end{tikzpicture}_{\, l}
\\
\begin{tikzpicture}[baseline=0pt, scale=\scale] \draw[thick] (1,0) -- (1,1)--(2,2) ;\end{tikzpicture}_{\, r} &
\begin{tikzpicture}[baseline=0pt, scale=\scale]    \draw[thick] (0,0) --(0,2) ;\end{tikzpicture}_{\, l}
\\
\begin{tikzpicture}[baseline=0pt, scale=\scale] \draw[thick] (2,0) -- (1,1)--(0,2) ;\end{tikzpicture}_{\, r} &
\begin{tikzpicture}[baseline=0pt, scale=\scale]    \draw[thick] (1,0) -- (1,1)--(0,2) ;\end{tikzpicture}_{\, l}
\end{pmatrix}\, ,
\quad
E_2=\begin{pmatrix} 
\begin{tikzpicture}[baseline=0pt, scale=\scale] \draw[thick] (0,0) -- (0,1)--(1,2) ;\end{tikzpicture}_{\, r} \\
\begin{tikzpicture}[baseline=0pt, scale=\scale] \draw[thick] (2,0) --(2,2) ;\end{tikzpicture}_{\, r} 
\end{pmatrix}\, ,
\end{aligned}
\ee
where we denoted the $(d^2-1)\times (d^2-1)$ blocks by the quantum channels, which they represent. The labels $l$ and $r$ respectively denote cases where the initial site is to the left and right of the gate. The straight vertical lines represent the straight channels $m_{i,l}$ and $m_{i,r}$, whereas the tilted segments represent the quantum channels similar to $m_{\pm}$ but made out of general (green) gates. For instance
\be 
\Bigg( \, \begin{tikzpicture}[baseline=(current  bounding  box.center), scale=1.3*\scale] \draw[thick] (1,0) -- (1,1)--(2,2);
\end{tikzpicture}_{\, r} \, \Bigg)_{\alpha\beta}
=
\begin{tikzpicture}[baseline=(current  bounding  box.center), scale=0.55]
\Wgategreen{0}{0}\Wgategreen{1}{1}
\draw[thick, fill=white] (-0.5,-0.5) circle (0.1cm);
\draw[thick, fill=white] (-0.5,0.5) circle (0.1cm);
\draw[thick, fill=white] (0.5,1.5) circle (0.1cm);
\draw[thick, fill=white] (1.5,.5) circle (0.1cm);
\draw[thick, fill=black] (0.5,-0.5) circle (0.1cm);
\draw[thick, fill=black] (1.5,1.5) circle (0.1cm);
\node[text width=1cm] at (1.1,-0.99){$o^\beta$};
\node[text width=1cm] at (2.1,1.99){$o^\alpha$};
\end{tikzpicture}  \,,
\ee
where \ch{the black filled circles  labeled by $o^\alpha$, $o^\beta$ come from the basis give in \eqref{eq:Hm1} for $L_A=1$.
}
The matrix $(\mathbb B_A)_{m=1}$ can be evaluated and diagonalised numerically with negligible cost.

In Fig.~\ref{fig:lambda} we show the agreement between the leading eigenvalues of $(\mathbb B_A)_{m=1}$ and those of the full matrix $\mathbb B_A$, which are obtained through exact numerical evaluation. 

  \begin{figure}
    \centering
    \includegraphics[scale=0.65]{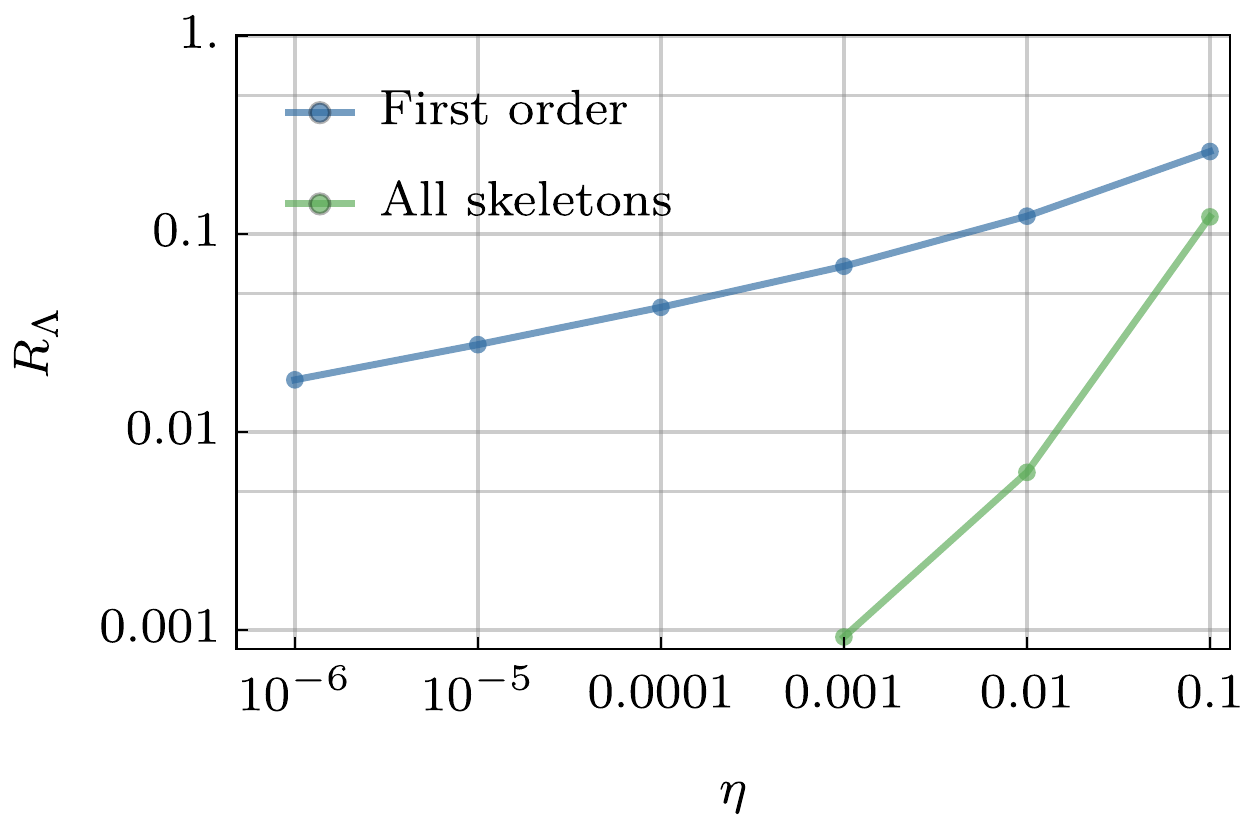}
    \caption{Relative error \eqref{eq:RelErr} of $\Lambda$ if we use only leading order in $\eta$ or all skeleton diagrams for the case $1$ in Tab.~\ref{tab:PrxGates}. We see that \eqref{eq:SKreplacement} vastly outperforms the first order approximation. $2L_A=14$. 
    }
    \label{fig:higherOrders}
\end{figure}

A natural question is whether this uncontrolled approximation gives a more accurate prediction than the first order result. This is indeed the case as shown in Fig.~\ref{fig:higherOrders}, which depicts the relative error
\be
R_\Lambda=\left| \frac{|\Lambda_p|-|\Lambda_n|}{\Lambda_n} \right| \, .
\label{eq:RelErr}
\ee
\ch{
In the above equation $\Lambda_p$ is the predicted first sub-leading eigenvalue, using the replacement \eqref{eq:SKreplacement} or the first order  result \eqref{eq:firstLambda}}, and $\Lambda_n$ is the one computed numerically. 

As discussed before, the simple skeleton diagrams give a good approximation if the \ch{bare bones condition holds}. Technically speaking (see Ref.~\cite{kos2021correlations} for more details), this condition demands that the largest nontrivial eigenvalue of the ``correlation functions' transfer matrix'' $\lambda_{\rm true}$ coincides with the largest nontrivial eigenvalue of the single site ``diagonal transfer matrix'' $\lambda_{1}$. When this condition is not satisfied, the first sub-leading term is wrong by a factor exponentially large in $L_A$. For instance, at $t=k(2L_A-1)$ the first sub-leading correction to ${\rm tr}\, \mathbb{B}_A^t $ is of the form 
\be
\eta^{2t/(2L_A-1)} \lambda_1^{t-2t/(2L_A-1)-2} \eta^2 \left|\frac{\lambda_{\rm true}}{\lambda_1}\right|^{2L_A-3}.
\ee
We illustrate this fact in Fig.~\ref{fig:DeltaLambda}. Therefore, if one fixes $\eta$ and increases $L_A$, one obtains wrong predictions. As discussed in~\cite{kos2021correlations} one could obtain useful approximations also in these cases by extending the above analysis to ``thickened''  skeletons, i.e. by increasing the support of the operators retained in the truncated $\mathbb B_A$ matrix.

\begin{figure}
    \centering
    \includegraphics[width=0.47\textwidth]{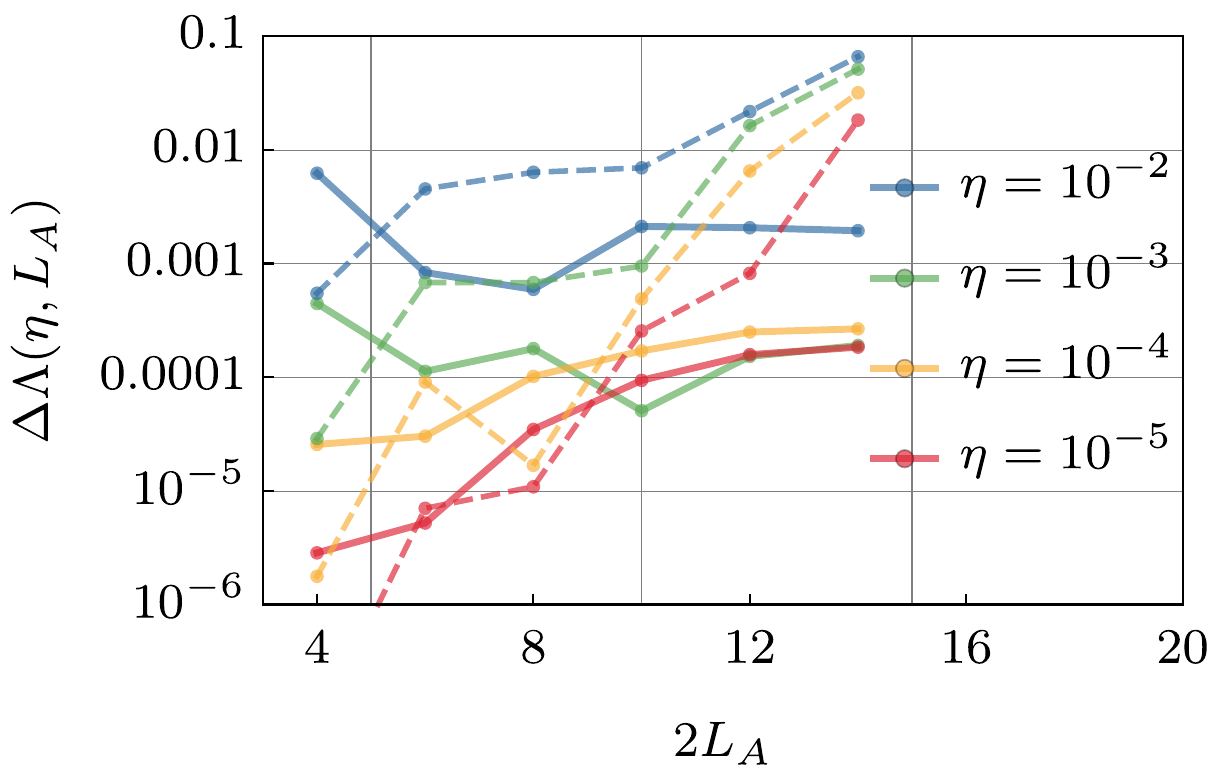}
    \caption{The absolute size of the errors in the  sub-leading eigenvalues of $\mathbb{B}_A$ by computing them from $(\mathbb{B}_A)_{m=1}$. Solid lines correspond to a case where the \ch{bare bones} condition holds while dashes lines to a case where the latter condition does not hold. In the second case, we see indications that the deviations grow quickly (perhaps exponentially) with $L_A$, as suggested by the arguments of the main text. The gate parameters are given in Tab.~\ref{tab:PrxGates}.}
    \label{fig:DeltaLambda}
\end{figure}

\subsection{Results on spectral statistics and thermalisation dynamics}
\label{sec:results}

Finally, let us look at the predictions that \eqref{eq:firstLambda} and \eqref{eq:SKreplacement} yield concerning physical properties. 

We begin by looking at the spectral form factor~\eqref{eq:SFFBA}: in this case the first order approximation \eqref{eq:firstLambda} produces useful qualitative predictions only at times that are multiples of $(2L_A-1)$
\ch{ 
\be
\label{eq:Kt1st}
K(t)=t (1+ \delta_{t \, {\rm mod}\, 2L_A-1,0} (2L_A-1) \Lambda_{0,{\rm lead}} ^t +\dots) ,
\ee
with $\Lambda_{0,{\rm lead}}$ defined in Eq. \eqref{eq:leadLambdas}. Since $\Lambda_{0,{\rm lead}}$ can be made arbitrarily small --- it scales with the perturbation strength --- we have that \eqref{eq:Kt1st} approaches the CUE spectral form factor for large times. In particular, from  \eqref{eq:leadLambdas} we find that the Thouless time (cf. \eqref{eq:taulambda}) scales as $\tau_{\Lambda} \propto -1/\log(\eta)$.} 

\ch{At times that are not multiples of $2L_A-1$, $K(t)-t$ becomes of higher order in $\eta$. Nonetheless, if the \ch{bare bones} condition holds, it can be accessed using the approximation \eqref{eq:SKreplacement}. Namely we have 
\be
K(t) \approx \tr (\mathbb B_A)_{m=1}^t, 
\ee
which can be evaluated with negligible computational cost. The accuracy of this approximation is illustrated in Fig.~\ref{fig:SFF}, where it is compared with exact numerics.}

\ch{A similar picture emerges when looking at the thermalisation of local observables: in perturbed dual-unitary circuits local observables thermalise in a timescale $\propto -1/\log(\eta)$. Moreover, if the \ch{bare bones} condition holds, their thermalisation dynamics can be quantitatively described with negligible computational cost. For instance, Fig.~\ref{fig:thermalisation} reports the comparison between the results of \eqref{eq:SKreplacement} and exact numerics for the time-evolution of \ch{third Pauli matrix acting on the central site of region $A$ (at $L_A/2$)} $\braket{Z_{L_A/2}(t)}$ after a quench from the state \eqref{eq:psi0diagram} with $d=2$, $\chi = 1$, and  
\be
[T]_{ab}^{cd} = \delta_{b,1}\delta_{c,1}\,. 
\ee
The latter is nothing but a fully polarised state in the subsystem $A$. Once again when the \ch{bare bones} condition holds the observed agreement is remarkable.}

\ch{Finally we recall that, even though here we assumed the perturbed gates to be the same for all $x$ and $t$, our analysis can be performed also for position-dependent perturbations. The main differences are: (i) the skeleton diagrams, like the one in Eq. \eqref{eq:skeleton}, are made out of position dependent quantum channels; (ii) the bulk matrices $B$ in Eq. \eqref{eq:BM1} are position dependent.}

\begin{figure}
    \centering
    \includegraphics[scale=0.65]{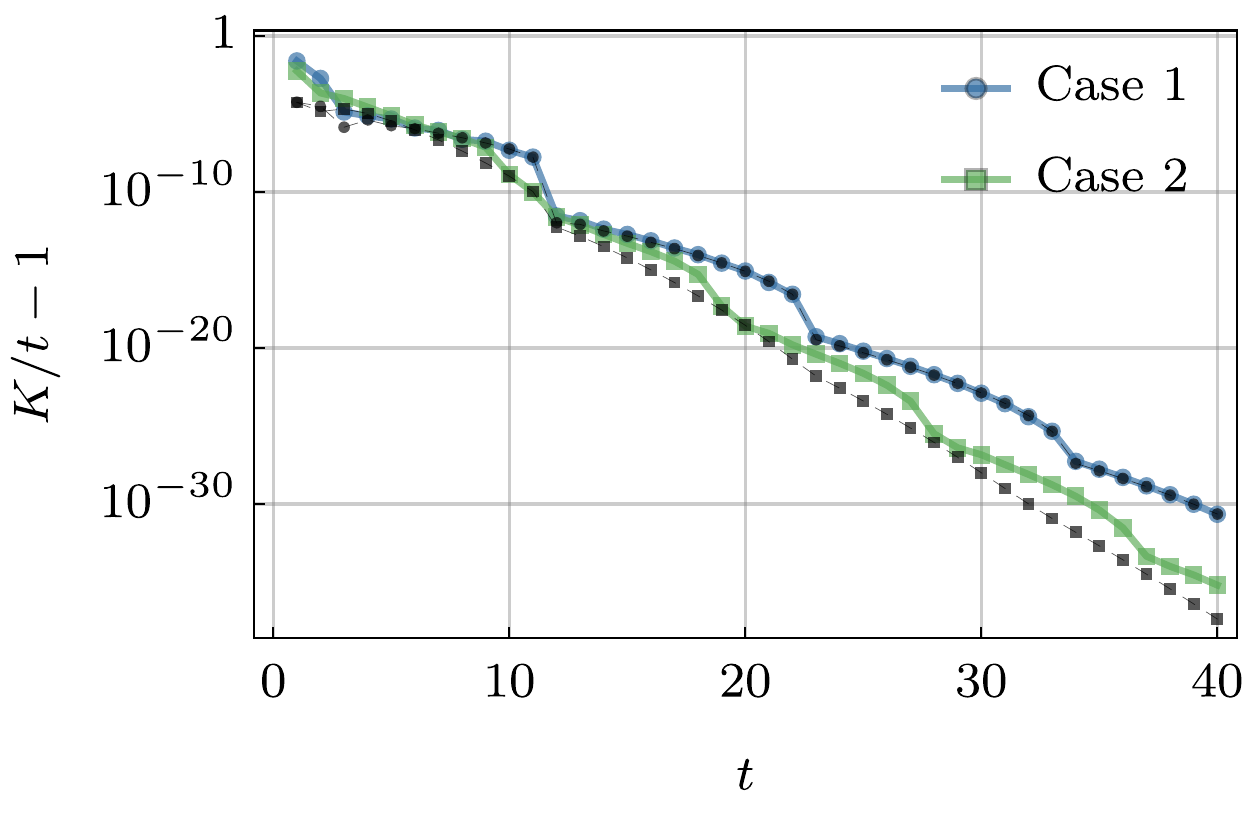}
    \caption{Numerically computed deviations of $K(t)$ from RMT (in color) and comparison with the approximation \eqref{eq:SKreplacement} (in black) at $\eta=0.001$. In case $1$ ($2$), the \ch{bare bones} condition is (not) fulfilled. 
    The gate parameters are given in Tab.~\ref{tab:PrxGates}, $2L_A=12$.}
    \label{fig:SFF}
\end{figure}

\begin{figure}
    \centering
    \includegraphics[scale=0.65]{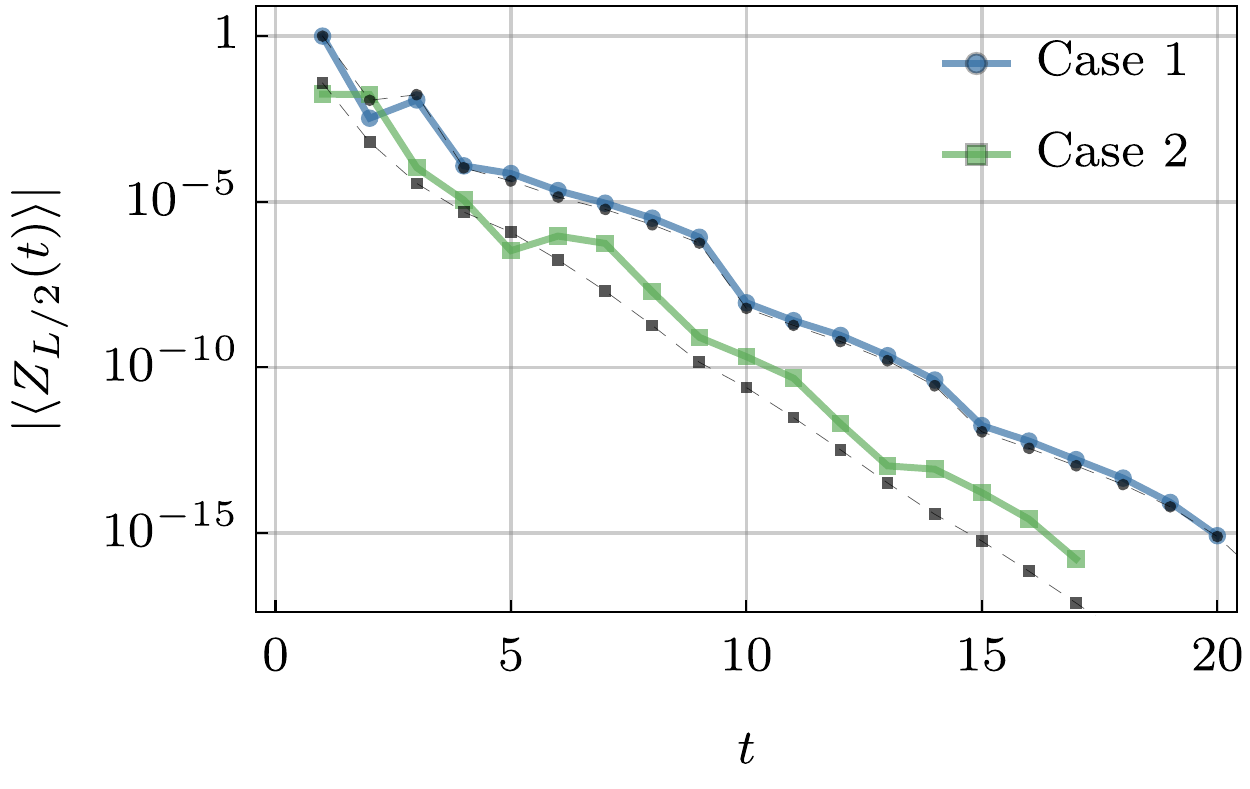}
    \caption{Numerically computed thermalisation of $\braket{Z_{L_A/2}(t)}$ (in color) and comparison with the approximation \eqref{eq:SKreplacement} (in black) at $\eta=0.001$. In case $1$ ($2$), the \ch{bare bones} condition is (not) fulfilled. We start from the exactly polarised initial state at the $A$ part of the system and solvable tensors at $B$. 
    The gate parameters are given in Tab.~\ref{tab:PrxGates}, $2L_A=12$.}
    \label{fig:thermalisation}
\end{figure}

\section{Strong non-ergodicity}
\label{sec:StrongL}

In the previous section we analysed a class of local gates for which all non-trivial eigenvalues of $\mathbb{B}_A$ are perturbatively small. In this section we focus on what can be considered the opposite extreme case: gates for which $\mathbb{B}_A$ has additional eigenvalues of magnitude exactly equal to $1$. \ch{In Sec. \ref{sec:map} we discussed how this implies absence of thermalisaing and chaotic behaviour.}
First we obtain some necessary conditions on $\{U_{x,\tau}\}$ valid for general $d$ \ch{(on site Hilbert space dimension)}. Then we specialise the treatment to translational invariant systems and $d=2$. In this case we find all explicit solutions of the constraints and use them to pinpoint all possible \ch{strongly non-ergodic} gates.

More specifically, here we are interested in finding the gates $U_{x,\tau}$ for which we have \ch{additional eigenvalues of magnitude one}
\be
\label{eq:eigenvector}
\mathbb B_A \ket{\Lambda} = e^{i \theta} \ket{\Lambda} \,,\qquad \theta\in\mathbb R\,,
\ee
with \ch{any vectorised operator $\ket{\Lambda}$ fulfilling}
\be
\label{eq:Lambda}
\|\ket{\Lambda}\|=1,\qquad \braket{\mcirc\ldots\mcirc|\Lambda}=0\,.
\ee
\ch{To clarify the physical meaning of these conditions let us make a simple observation that follows immediately from the unitarity of the gates: If (\ref{eq:eigenvector}, \ref{eq:Lambda}) hold the operator corresponding to $\ket{\Lambda}$ upon vectorisation is a \emph{still soliton} for \eqref{eq:diagramFloquet2} (cf. Ref.~\cite{bertini2020operator2}). Namely it is a local operator that maintains the same form when evolved with \eqref{eq:diagramFloquet2}. In general it acquires a phase $e^{i \theta t}$ while it is a \emph{bona fine} local conserved operator, i.e.\ a local integral of motion, for $\theta=0$.}

\ch{ The fact that the circuit admits non-trivial still solitons immediately explains the lack of thermalisation. The still solitons are able to ``store'' information locally preventing complete dissipation and hence relaxation to the infinite temperature state.} \ch{As discussed in Ref.~\cite{bertini2020operator2} this situation is similar to what happens in kinetically constrained models, like the celebrated PXP~\cite{turner2018weak, bernien2017probing}. In the latter case, however, the constraints imposed by the local conserved operators are typically weak enough to allow for large ergodic subspaces of the Hilbert space. Instead, strongly non-ergodic circuits can have a number of local constraints sufficiently large to ensure complete localisation.} 

\ch{In fact, also the relation between strong non-ergodicity and localisation is worth a comment. We begin by noting that strong non-ergodicity does not generically imply localisation: still solitons are not necessarily exactly conserved (because $\theta\neq0$) and even when conserved they do not necessarily form a complete set as it is required to have localisation~\cite{abanin2019many}. Perhaps more surprisingly, however, also the converse is true: localisation \emph{does not} imply strong non-ergodicity. Indeed, in the presence of localisation $1-|\Lambda|$ decays rapidly (i.e.\ exponentially) in the limit of infinite $L_A$ but it is not generically exactly equal to 0 as required by (\ref{eq:eigenvector}, \ref{eq:Lambda}). As an example, we illustrate the shrinking of the gap with increasing system size $L_A$ for an Anderson localising system: a kicked Ising model with random transverse field. This system can be written in circuit form with the following local gates
\be
U=  e^{i J Z\otimes Z}(e^{i h_x X} \otimes e^{i h_x' X}) e^{i J Z\otimes Z},
\label{eq:TFKI}
\ee
where we took $d=2$, $\{X, Y, Z\}$ denotes the  triple of Pauli matrices, and $h_x$ and $h_x'$ are random fields. In Fig.~\ref{fig:DisLambda} we show disordered-averaged gap versus the accessible system sizes.}

\begin{figure}
    \centering
    \includegraphics[width=0.45\textwidth]{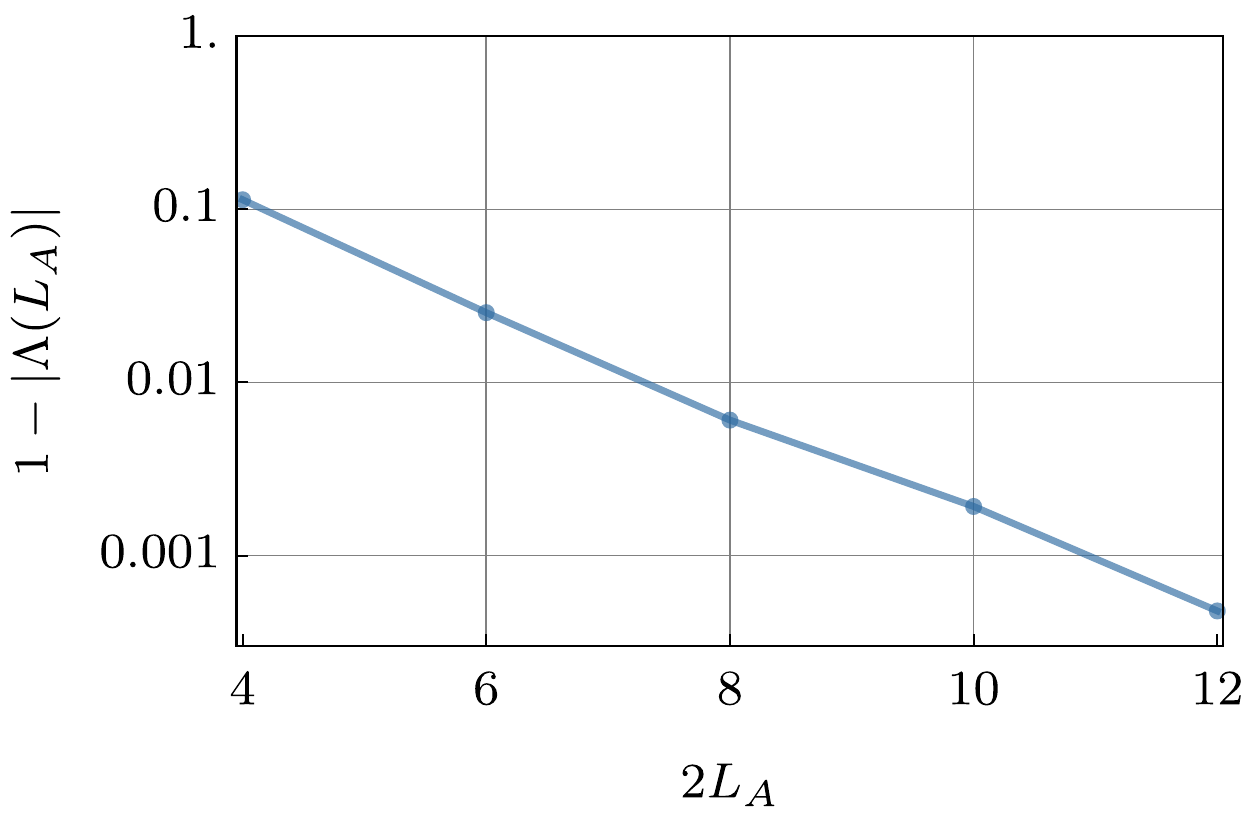}
    \caption{
    The size of the gap between the second largest eigenvalue of a $\mathbb B_A$ matrix built out of gates \eqref{eq:TFKI} and the unit circle versus system size. We averaged over $100$ realisations of the random fields and took $J=0.4$.}
    \label{fig:DisLambda}
\end{figure}

Let us now proceed finding some necessary conditions for (\ref{eq:eigenvector}, \ref{eq:Lambda}) to occur. 

\subsection{Necessary conditions for \ch{strong non-ergodicity}}
\label{sec:NecSL}
\ch{The relations (\ref{eq:eigenvector}, \ref{eq:Lambda}) imply non-trivial conditions on the local gates. Specifically, the latter are summarised by the following theorem}
\begin{theorem}
If (\ref{eq:eigenvector}, \ref{eq:Lambda}) hold, there exist $x\leq y\in \mathbb Z_{2L_A}/2$ and four \ch{single-site} Hilbert-Schmidt normalised \ch{traceless} operators $a, b, a', b'$:
\begin{align}
&U_{x,\tau_x} (\1 \otimes a ) U_{x,\tau_x}^\dagger= \1\otimes a', \label{eq:StillMaps1x}\\
&U_{y,\tau_y}^\dagger ( b \otimes \1) U_{y,\tau_y}=  b' \otimes \1\,,
\label{eq:StillMaps1y}
\end{align}
where $\tau_{x,y}\in\{1/2,1\}$ and $x+\tau_{x},y+\tau_{y}\in \mathbb Z$ \ch{and $\1$ is a single-site identy operator}. Moreover, there exist two-site operators $a_2,b_2,a_2',b_2'$ such that 
\begin{align}
&U_{x+1/2,\tau'_x}^\dagger a_2 U_{x+1/2,\tau_x'}= a_2', \label{eq:StillMaps2x}\\
&U_{y-1/2,\tau_y'} b_2 U_{y-1/2,\tau_y'}^\dagger=  b_2'\,,
\label{eq:StillMaps2y}
\end{align}
and 
\begin{align}
&{\rm tr}_l(a_2 (a \otimes \1))\neq0,& &{\rm tr}_l(a_2' (a'\otimes \1))\neq0,\label{eq:a2site}\\
&{\rm tr}_r(b_2 (\1 \otimes b))\neq0, & &{\rm tr}_r(b_2' (\1 \otimes b') )\neq0.
\end{align}
Here ${\rm tr}_l$ and ${\rm tr}_r$ denotes the partial trace over the left and right site, and $\tau_{x,y}'=(\tau_{x,y}+1/2)\, {\rm mod}\, 1$. 
\label{thm:necessary}
\end{theorem}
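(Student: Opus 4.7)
The plan is to extract the algebraic constraints by exploiting saturation of the contractive inequalities behind Lemma 1. Since $\mathbb{B}_A$ factorises as a product of unitary doubled gates $W_{x,\tau}$, which preserve the HS norm, and two boundary channels $m_{1/2,l}$, $m_{L_A-1/2,r}$, which are the only pieces that can contract, a unimodular eigenvalue forces each boundary channel to act as an isometry on whichever component of $\ket{\Lambda}$ actually reaches it. The theorem's conditions are then the explicit algebraic characterisation of this isometric action, extracted one boundary at a time.

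Consider first the left boundary. Expand $\ket{\Lambda}=\sum_\alpha\ket{o^\alpha}_{1/2}\otimes\ket{\Lambda_\alpha^c}$ in an HS-orthonormal basis of site-$1/2$ operators (with $o^0\propto\1$ and traceless $o^\alpha$ otherwise). The orthogonality $\braket{\mcirc\ldots\mcirc|\Lambda}=0$ together with $\|\Lambda\|=1$ forces at least one traceless $o^\alpha$ to appear with $\ket{\Lambda_\alpha^c}\neq 0$. Apply the first layer $M_1=m_{1/2,l}\otimes\bigotimes_x W_{x,1}$: since the bulk $W_{x,1}$'s act unitarily on the complementary sites, $\|M_1\ket{\Lambda}\|=\|\Lambda\|$ reduces to $\|m_{1/2,l}(o^\alpha)\|_2=\|o^\alpha\|_2$ for every such $\alpha$. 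Writing $m_{1/2,l}(a)=d^{-1}\mathrm{Tr}_l[U_{1/2,1/2}(\1\otimes a)U_{1/2,1/2}^\dagger]$ and expanding $U_{1/2,1/2}(\1\otimes a)U_{1/2,1/2}^\dagger=\sum_\mu e^\mu\otimes X_\mu$ in an HS-orthonormal basis of left-site operators (with $e^0=\1/\sqrt d$), one finds $\|m_{1/2,l}(a)\|_2^2=\|X_0\|_2^2/d$ and $\|\1\otimes a\|_2^2=\sum_\mu\|X_\mu\|_2^2=d\|a\|_2^2$. Isometry thus forces $X_\mu=0$ for $\mu\neq 0$, giving $U_{1/2,1/2}(\1\otimes a)U_{1/2,1/2}^\dagger=\1\otimes a'$ with $a'$ of the same HS norm and the same trace, hence traceless if $a$ is. Setting $x=1/2$ yields \eqref{eq:StillMaps1x}; if the nontrivial component of $\ket{\Lambda}$ at site $1/2$ happens to be $o^0=\mcirc$, the same argument is pushed one gate inward, pinning $x$ at the first site where a traceless $o^\alpha$ appears.

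To obtain \eqref{eq:StillMaps2x}, apply $\mathbb{B}_A$ a second time. After one period, the component at site $x$ is $a'$ and the next gate that interacts with it is $U_{x+1/2,\tau_x'}$, in the opposite brickwork layer, straddling site $x$ and its neighbour. The eigenvalue equation at the next period again demands isometric action at the boundary, but this time on the 2-site object sitting on those two sites \emph{just before} $U_{x+1/2,\tau_x'}$ is applied. Tracking the component of $\ket{\Lambda}$ of the form $a\otimes(\text{something on the neighbouring site})$ through one period produces the required $a_2$ obeying $U_{x+1/2,\tau_x'}^\dagger a_2 U_{x+1/2,\tau_x'}=a_2'$; the non-vanishing overlaps in \eqref{eq:a2site} hold because $a_2$ is by construction the piece of $\ket{\Lambda}$ that carries the $a$-signature to the neighbouring site, and a vanishing partial trace would mean the $a$-component is killed in one step, contradicting the recurrence $\mathbb{B}_A^t\ket{\Lambda}=e^{it\theta}\ket{\Lambda}$.

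The right-boundary conditions \eqref{eq:StillMaps1y} and \eqref{eq:StillMaps2y} follow by a mirror argument at the other end of $A$; the switch from $U\cdots U^\dagger$ to $U^\dagger\cdots U$ reflects the fact that at the right boundary the identity leg sits on the opposite side, so the isometric-subspace characterisation produces the dual relation. The main technical delicacy I foresee is the overlap condition: $\ket{\Lambda}$ may admit several decompositions along a boundary site, so one has to pick the Schmidt decomposition across $\{x\}\cup\text{rest}$, choose $a$ among the Schmidt factors with a genuinely excited traceless part, and verify that stationarity of $\ket{\Lambda}$ (up to phase) forces the $a$-signature to propagate to the neighbouring site with nonzero weight. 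The remaining steps are direct computations once the isometric-subspace characterisation above is in hand.
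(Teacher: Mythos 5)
Your derivation of the left single-site condition \eqref{eq:StillMaps1x} is sound and is essentially the mechanism the paper uses: unimodularity plus the non-expanding property force the boundary channel $m_{1/2,l}$ to act isometrically on the non-trivial site-$1/2$ content of $\ket{\Lambda}$, and your explicit computation showing that isometry of $a\mapsto \tfrac1d{\rm tr}_l[U(\1\otimes a)U^\dagger]$ forces $U(\1\otimes a)U^\dagger=\1\otimes a'$ is a correct fleshing-out of a step the paper leaves implicit (one small repair: norm saturation tells you that $\ket{\Lambda}$ lies in the eigenvalue-one eigenspace of $m_{1/2,l}^\dagger m_{1/2,l}\otimes\1$, so you must expand in an eigenbasis of $m_{1/2,l}^\dagger m_{1/2,l}$ rather than claim isometry on every member of an arbitrary Hilbert--Schmidt basis that happens to appear). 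The recursion to larger $x$ when only $\ket{\mcirc}$ appears at site $1/2$ also matches the paper's case (ii).

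The genuine gap is in everything beyond \eqref{eq:StillMaps1x}: you never use the adjoint map, and forward applications of $\mathbb B_A$ cannot produce the remaining conditions. Inside $\mathbb B_A$ (and inside every later period) the only contraction at the left boundary is the single-site channel $m_{1/2,l}$; the gate $U_{x+1/2,\tau_x'}$ is a genuine unitary in the circuit, so there is no ``isometric action on the two-site object just before it is applied'' --- iterating $\mathbb B_A$ merely reproduces the single-site constraint you already have, and your proposed route to \eqref{eq:StillMaps2x} therefore has no mechanism behind it. Likewise, a pure mirror of the forward argument at the right boundary gives the isometry of $m_{L_A-1/2,r}$ acting on the state after the first layer, i.e.\ the two-site condition \eqref{eq:StillMaps2y}, but not the single-site condition \eqref{eq:StillMaps1y}; the $U^\dagger(\,\cdot\,)U$ form of the latter comes from isometry of $m_{L_A-1/2,r}^{\dagger}$, not from ``the identity leg sitting on the opposite side''. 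The missing idea, which is the backbone of the paper's proof, is that saturation also forces $\mathbb B_A^\dagger\ket{\Lambda}=e^{-i\theta}\ket{\Lambda}$, so $\ket{\Lambda}$ is a common fixed point of $\mathbb B_A^\dagger\mathbb B_A^{\phantom{\dagger}}$ and $\mathbb B_A^{\phantom{\dagger}}\mathbb B_A^\dagger$; bulk unitarity collapses these to boundary-supported operators (single-site at one end, two-site at the other, cf.\ \eqref{eq:BdB}), and the two resulting tensor-product decompositions of $\ket{\Lambda}$, together with their mutual compatibility, are what yield \eqref{eq:StillMaps2x}, \eqref{eq:StillMaps1y} and, crucially, the non-vanishing overlaps \eqref{eq:a2site}. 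Your heuristic for those overlaps (``otherwise the $a$-component is killed in one step'') is not a substitute for this compatibility argument --- indeed you flag it yourself as the unresolved delicacy --- so as it stands the proposal proves only the first of the four families of conditions.
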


\begin{proof}
If (\ref{eq:eigenvector}, \ref{eq:Lambda}) hold, then 
\be
\bra{\Lambda}\mathbb B_A^\dagger \mathbb B_A^{\phantom{\dag}} \ket{\Lambda}=\bra{\Lambda}\mathbb B_A^{\phantom{\dag}} \mathbb B_A^\dag \ket{\Lambda} =1,
\ee
combining with the fact that $\mathbb B_A$ is non-expanding we have that $\ket{\Lambda}$ has to be a fixed point of both $\mathbb B_A^\dag \mathbb B^{\phantom{\dag}}_A$ and $\mathbb B^{\phantom{\dag}}_A \mathbb B_A^\dag$ which in turns imply that it has to be an eigenvector of $\mathbb B_A^\dag$. In formulae 
\be
\!\!\mathbb B_A^\dag \mathbb B^{\phantom{\dag}}_A \ket{\Lambda} =  \mathbb B^{\phantom{\dag}}_A \mathbb B_A^\dag \ket{\Lambda} = \ket{\Lambda},\,\quad \mathbb B_A^\dag \ket{\Lambda} = e^{-i \theta} \ket{\Lambda}\!.
\label{eq:BBdaglambda}
\ee
In particular, let us start by looking at $\mathbb B_A^\dagger \mathbb B_A^{\phantom{\dag}}$
\begin{align}
\mathbb B_A^\dagger \mathbb B_A^{\phantom{\dag}} &=
\begin{tikzpicture}[baseline=(current  bounding  box.center), scale=0.55]
\Wgategreen{0}{0}
\Wgategreen{5}{1}
\Wgategreen{4}{0}
\Wgatedagger{0}{4} \Wgatedagger{5}{3}\Wgatedagger{4}{4}
\draw[thick] (.5,0.5) -- (.5,3.5) ; 
\draw[thick] (4.5,1.5) -- (4.5,2.5) ; 
\draw[thick] (1.5,-.5) -- (1.5,4.5) ; 
\draw[thick] (2.5,-.5) -- (2.5,4.5) ;
\draw[thick] (3.5,0.5) -- (3.5,3.5) ;
\draw[thick, fill=white] (-0.5,-0.5) circle (0.1cm); 
\draw[thick, fill=white] (-0.5,0.5) circle (0.1cm); 
\draw[thick, fill=white] (5.5,1.5) circle (0.1cm); 
\draw[thick, fill=white] (5.5,0.5) circle (0.1cm); 
\draw[thick, fill=white] (-0.5,4-0.5) circle (0.1cm); 
\draw[thick, fill=white] (-0.5,4+0.5) circle (0.1cm); 
\draw[thick, fill=white] (5.5,2+1.5) circle (0.1cm); 
\draw[thick, fill=white] (5.5,2+0.5) circle (0.1cm); 
\end{tikzpicture}\\
&= M_{1/2,l} \otimes \1 \otimes \dots \otimes \tilde{M}_{L_A-1/2,r},
\label{eq:BdB}
\end{align} 
where we denoted $W_{x,\tau}^\dagger$ by red squares and introduced 
\begin{align}
&\!\!\!\!M_{1/2,l} = m_{1/2,l}^\dagger m^{\phantom \dagger}_{1/2,l}, \\
&\!\!\!\!\tilde{M}_{L_A-1/2,r} = W_{L_A-1/2,1/2}^\dag (\1 \otimes M_{L_A-1/2,r}) W_{L_A-1/2,1/2}^{\phantom \dagger}.
\end{align}
Similarly
\begin{align}
&\mathbb B_A^{\phantom{\dag}} \mathbb B_A^\dagger\notag\\
&= \!W_{1,1} m_{1/2,l}^{\phantom \dagger} m_{1/2,l}^\dagger W_{1,1}^\dagger \!\otimes\! \1 \!\dots \otimes\! m_{L-_A1/2,r}^{\phantom \dagger} m_{L_A-1/2,r}^\dagger \notag\\
&=:\! \tilde{M}_{1/2,l} \otimes \1 \dots \otimes M_{L_A-1/2,r}.
\end{align}
From the above and \eqref{eq:BBdaglambda} it follows that 
\be
\ket{\Lambda} \propto \sum_{i,k} \ket{a_i}\otimes\ket{\Lambda_{i,k}}\otimes \ket{b_{2,k}},
\label{eq:lambdasum}
\ee
where $\{\ket{a_i}\}\in \mathbb C^{d^2}$ and $\{\ket{b_{2,k}}\}\in \mathbb C^{d^2}\otimes \mathbb C^{d^2}$ are orthonormal fixed points (eigenvectors of eigenvalue 1) of $M_{1/2,l}$ and $\tilde{M}_{L_A-1/2,r}$, respectively. Noting that $\ket{a_0}=\ket{\mcirc}$ and $\ket{b_{2,0}}=\ket{\mcirc\mcirc}$ are  (trivial) fixed points of $M_{1/2,l}$ and $\tilde{M}_{L_A-1/2,r}$ and looking at the left boundary we then have two possibilities:

(i) Some $\ket{a_i}\neq \ket{\mcirc}$ appears in the sum \eqref{eq:lambdasum}, this immediately implies that \eqref{eq:StillMaps1x} must hold with $x=1/2$ and $\tau_x=1/2$. Indeed, we must have 
\be
m_{1/2,l} \ket{a_i} = \ket{a'_i}
\ee 
for some $\ket{a'_i}$ fulfilling $\braket{a'_i|a'_i}=1$ and $\braket{\mcirc|a'_i}=0$. Moreover, repeating the same reasoning for $\mathbb B_A^{\phantom{\dag}} \mathbb B_A^\dag$ we then have that 
\be
\ket{\Lambda} = \sum_{i,k} \ket{a_{2,i}}\otimes\ket{\tilde{\Lambda}_{i,k}}\otimes \ket{b_{k}},
\ee
where $\{\ket{a_{2,i}}\}\in \mathbb C^{d^2}\otimes \mathbb C^{d^2}$ and $\{\ket{b_{k}}\}\in \mathbb C^{d^2}$ are orthonormal fixed points of $\tilde{ M}_{1/2,l}$ and ${M_{L_A-1/2,r}}$.
This is compatible with \eqref{eq:lambdasum} only if the sum over $i$ includes some $n\neq0$ such that 
\be
{\rm tr}_l(a_{2, n} (a \otimes \1))\neq0.
\label{eq:condan}
\ee
To conclude we note that if $\ket{a_{2, n}}$ is a fixed point of $\tilde{M}_{1/2,l}$ and $a_{2,n}$ satisfies \eqref{eq:condan}, then $a_{2, n}$ must fulfil \eqref{eq:StillMaps2x}, with $x=1/2$, $\tau_x=1/2$ and for some $a_{2, n}'$ fulfilling the second of \eqref{eq:a2site}.

(ii) Only $\ket{a_0}=\ket{\mcirc}$ appears in the sum \eqref{eq:lambdasum}. Namely, we have  
\be
\mathbb{B}_A \ket{\mcirc}\otimes\ket{\Lambda'} = e^{i \theta} \ket{\mcirc}\otimes\ket{\Lambda'},
\ee
with $\ket{\Lambda'}\in(\mathbb C^{d^2})^{\otimes 2L_A-2}$. Tracing out the first site (i.e. contracting with $\ket{\mcirc}$) we get 
\be
\mathbb{B}_{A'} \ket{\Lambda'} = e^{i \theta} \ket{\Lambda'},
\ee
with $A'=A\setminus \{1/2\}$. Explicitly we have 
\begin{align}
\mathbb B_{A'} &= 
\begin{tikzpicture}[baseline=(current  bounding  box.center), scale=0.55]
\Wgategreen{2}{0}\Wgategreen{4}{0}
\Wgategreen{1}{1}\Wgategreen{3}{1}\Wgategreen{5}{1}
\draw[thick, fill=white] (.5,1.5) circle (0.1cm); 
\draw[thick, fill=white] (.5,0.5) circle (0.1cm); 
\draw[thick, fill=white] (5.5,1.5) circle (0.1cm); 
\draw[thick, fill=white] (5.5,0.5) circle (0.1cm); 
\draw [thick, black,decorate,decoration={brace,amplitude=5pt,mirror},xshift=0.0pt,yshift=-0.0pt](4.5,1.7) -- (1.5,1.7) node[black,midway,yshift=0.4cm] { $A'$};
\end{tikzpicture}\,.
\label{eq:Bmap'}
\end{align}
Now we start the reasoning from the beginning for a smaller block $A'$ and repeat the analysis until we get to a case where the point (i) holds. This must happen for some $x\in \mathbb Z_{2L_A}/2$ because $\braket{\mcirc\ldots\mcirc|\Lambda}=0$.  

Repeating the same reasoning for the right boundary gives \eqref{eq:StillMaps1y} and \eqref{eq:StillMaps2y} for some $y\geq x$. 
 \end{proof}

\subsection{\ch{Strong non-ergodicity} for homogeneous circuits of qubits}
\label{sec:SLd2}

\ch{In Appendix \ref{app:d2homogenous} we derive all solutions to (\ref{eq:eigenvector}, \ref{eq:Lambda}) in the case of qubits, i.e.\ $d=2$, and for translationally invariant circuits, i.e.\ $U_{x,\tau}\equiv U$. There we derive additional constraints that almost entirely fix the form of the gate $U$. A simple numerical search leaves us with the following two families of strongly non-ergodic homogenous circuits in $d=2$}
\begin{align}
&\!\!\!\!U_1 \!=\!  (e^{i \phi Z} X^s \!\otimes\! \1 )e^{i J Z\otimes Z}, \label{eq:Still1}\\
&\!\!\!\!U_2  \!=\!  ( e^{i \pi Y/4}  Z^{s}\!\otimes\! \1)e^{i J Z\otimes Z} (e^{-i \pi Y/4}\! \otimes\! Z^{r}), 
\label{eq:Still2}
\end{align}
where $s,r\in\{0,1\}$. For both these families one can construct exponentially many (in $L_A$) still solitons, i.e., exponentially many non-trivial $\ket{\Lambda}$ fulfilling (\ref{eq:eigenvector}, \ref{eq:Lambda}). 

Explicitly, recalling that \ch{there are $2L_A-1$ sites in $A$}, we have that the $2^{2L_A-1}$ operators   
\be
Z^{s_1}\otimes\cdots\otimes Z^{s_{(2L_A)-1}},\quad s_i\in\{0,1\},
\ee
are still solitons for $\mathbb B_A(U_1)$, while the $2^{L_A}$ operators 
\begin{align}
\!\!\!&\!\!\1 \otimes (Z\otimes X)^{s_1}\otimes\cdots\otimes (Z\otimes X)^{s_{L_A-1}},\\
\!\!\!&\!\! (Z\otimes X)^{s_1}\otimes\cdots\otimes (Z\otimes X)^{s_{L_A-1}}\otimes \1,
\end{align}
${s_i\in\{0,1\}}$ are still solitons for a map $\mathbb B_A(U_2)$. Note that the latter operators are not the only still solitons of $\mathbb B_A(U_2)$. For example, also the operator $Z\otimes Y\otimes X$ is a still soliton of $\mathbb B_{A}(U_2)$ for $A=\{1/2,1,3/2\}$. 

\ch{This also shows that there are exponentially many (in $L_A$) unimodular eigenvalues of $\mathbb B_A$. The dynamics therefore has exponentially many non-decaying modes, and the spectral form factor is exponentially large in $L_A$, signalling non-chaotic behaviour.}

\section{Conclusions}
\label{sec:conclusion}

We have studied the thermalisation and spectral properties of generic quantum circuits of a finite size that are attached to infinitely large dual-unitary circuits. Our strategy has been to express the evolution of relevant quantities in terms of an unital quantum channel (unital CPT map) $\mathbb B_A$. The spectrum of $\mathbb B_A$ determines whether subsystems thermalise and whether the spectral statistics of the total system \ch{(including the dual-unitary environment)} follows random matrix theory. We characterised the spectrum of $\mathbb B_A$ in two cases that display opposite extreme features: (i) nearly dual-unitary circuits and (ii) \ch{``{strongly non-ergodic}''} circuits, defined as those where $\mathbb B_A$ has additional unimodular eigenvalues at finite $L_A$.

For unperturbed dual-unitary circuits the spectrum of $\mathbb B_A$ contains only two eigenvalues: one, corresponding to the identity operator, and zero, corresponding to everything else. Instead, in the presence of small dual-unitarity breaking perturbations of strength $\eta\ll1$, some additional non-zero eigenvalues appear. We presented two increasingly more accurate schemes to determine such eigenvalues for a fixed subsystem size $L_A$ \ch{evolved with homogenous circuit (for simplicity)}. Firstly we computed them at the first non-trivial order in $\eta$ showing that they scale with a fractional power of $\eta$. Secondly, following Ref.~\cite{kos2021correlations}, we presented an uncontrolled approximation that gives highly accurate results when perturbing a specific family of dual-unitary gates. In general, we observed that as the size $L_A$ of the subsystem increases, the largest non-trivial eigenvalue $\Lambda$ grows initially, but then it appears to saturate at a value much smaller than $1$. This suggests that weak perturbations are not sufficient to localise dual unitary circuits. 

In contrast, we looked at \ch{strongly non-ergodic} circuits where $\mathbb B_A$ has additional unimodular eigenvalues.
These circuits admit non-trivial ``still solitons'', i.e.\ local operators that are left invariant by the time evolution (when evolving in time they are modified at most by a multiplicative phase). \ch{Thus they remain strictly localised.} \ch{Strongly non-ergodic} systems do not thermalise, and their spectral statistics does not follow the random matrix theory prediction. We characterised such circuits, determining a set of necessary conditions that the local gates have to satisfy for \ch{strong non-ergodicity} to emerge. For $d=2$, and in the presence of translational invariance, we explicitly solved the latter conditions and used them to find all possible \ch{strongly non-ergodic} gates. We identified two families: the first admits still solitons on a single site, while the second admits still solitons on two sites but not on one. 

Our results hold when a finite part of the circuit is immersed in an infinite dual-unitary circuit, which acts as a perfect thermalising reservoir. It is interesting to \ch{ask} whether similar results continue to hold even when the dual-unitary part is finite. Preliminary numerical observations seem to go in this direction, but we were not able to show this rigorously so far. A more careful analysis of this question, as well as the study of different semi-infinite systems providing more general reservoirs, is left to future research.   

Another exciting direction is exploiting the simple structure of \ch{strongly non-ergodic} gates in $d=2$ to compute their spectral statistics rigorously. Namely, one could evaluate the spectral form factor of a quantum circuit composed of \ch{strongly non-ergodic} gates proceeding in the same spirit of Refs.~\cite{bertini2021random, bertini2018exact}.

Finally, it would be interesting to understand further the fate of \ch{strong non-ergodicity} in the presence of small perturbations. Indeed, even though this property is by definition fragile, points in parameter space that are close enough to the \ch{strongly non-ergodic} manifold might still \ch{show non-ergodic behaviour}.

\section*{Acknowledgements}
We thank Felix Fritzsch for fruitful discussions and Sam Garratt for useful comments on the manuscript. This work has been supported by ERC Advanced grant 694544 -- OMNES and the program P1-0402 of Slovenian Research Agency (PK and TP) and by the Royal Society through the University Research Fellowship No.\ 201101 (BB).
\appendix

\ch{
\section{Gauge transformation}
\label{app:gauge}
From the definition \eqref{eq:Floquet0} we see that the Floquet operator is covariant under certain transformations of the gates. In particular changing the gates as 
\begin{align}
U_{x,1/2} &\mapsto (v^{\dag}_{x-1/2}\otimes v^{\dag}_x) U_{x,1/2} (w_{x-1/2}\otimes w_x),\notag\\
U_{x,1}   &\mapsto (w^{\dag}_{x-1/2}\otimes w^{\dag}_x) U_{x,1} (v_{x-1/2}\otimes v_x),
\label{eq:gaugegen}
\end{align}
where $\{v_{x}, w_x\} \in{\rm U}(d)$ and $0 \equiv  L$, we have 
\be
\mathbb U \mapsto  \left(\bigotimes_x w_x^\dag\right) \cdot \mathbb U \cdot \left(\bigotimes_x w_x\right)\,.
\ee
This change does not affect the physical properties that we are interested to study and, therefore, we refer to \eqref{eq:gaugegen} as \emph{gauge transformation}. In particular, considering homogeneous systems $U_{x,\tau}=U$ and restricting to gauge transformations that preserve such symmetry we have 
\be
U\mapsto (u^\dag \otimes v^\dag) U (v \otimes u),\qquad v,u\in {\rm U}(d)\,.
\label{eq:gauge}
\ee
}
\section{\ch{Classification of \ch{strong non-ergodicity} for homogeneous circuits of qubits}}
\label{app:d2homogenous}

Let us now proceed to find all solutions of the constraints (\ref{eq:StillMaps1x}, \ref{eq:StillMaps1y}, \ref{eq:StillMaps2x}, \ref{eq:StillMaps2y}) in the case of qubits, $d=2$, and for translationally invariant circuits, i.e. $U_{x,\tau}\equiv U$.

We begin by noting that we can choose a gauge transformation \ch{(see Appendix \ref{app:gauge})} that brings the conditions \eqref{eq:StillMaps1x} and \eqref{eq:StillMaps1y} to the following form  
\be
U^\dagger (\1 \otimes Z ) U= \1\otimes Z, \quad U^\dagger ( b \otimes \1) U= b' \otimes \1\ .
\label{eq:StillMaps}
\ee
while we recall that $\{X,Y,Z\}$ represents the Pauli basis. Using Property B.1 of Ref.~\cite{bertini2020operator2} we have that the first of \eqref{eq:StillMaps} implies that the gate can be written in the following form
\be
U = (v \otimes \1)\, e^{i J Z \otimes Z} (u \otimes \1),
\label{eq:gate}
\ee
where $u$ and $v$ are unitary $2\times 2$ matrices such that 
\be
 b= v Z v^\dag,\quad b'= u^\dag Z u,
\ee
and we use some remaining gauge freedom to remove $\1 \otimes e^{i \eta Z}$ from the gate.

Now we note that for the gate \eqref{eq:gate} and generic values of $J$ the constraint \eqref{eq:StillMaps1x} can only hold for $a,a'=\pm Z$, namely
\begin{lemma}
\label{lemmaunique}
For $J\in\,]0,\pi[\,\setminus\,\{\pi/2\}$ 
\be
U^\dag (\1\otimes  a) U  = \1\otimes a',
\ee
and ${\rm tr}[a]={\rm tr}[a']=0$ only if $a,a'=\pm Z$. 
\end{lemma}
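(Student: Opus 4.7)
The plan is to leverage the explicit form of the gate just derived above the lemma, namely $U = (v\otimes\1)\,e^{iJZ\otimes Z}(u\otimes\1)$ with $u,v\in\mathrm{U}(2)$. Substituting into $U^\dagger(\1\otimes a)U = \1\otimes a'$, the outer unitary $v$ acts only on the first tensor factor and commutes through $\1\otimes a$, so $v$ and $v^\dagger$ cancel. The condition reduces to
\begin{equation}
(u^\dagger\otimes\1)\,e^{-iJZ\otimes Z}(\1\otimes a)\,e^{iJZ\otimes Z}(u\otimes\1) = \1\otimes a'.
\end{equation}

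Next I would expand $a$ in the Pauli basis, $a = \alpha X + \beta Y + \gamma Z$ (tracelessness kills the $\1$ component). Using that $\1\otimes Z$ commutes with $Z\otimes Z$ while $\1\otimes X$ and $\1\otimes Y$ anticommute with it, a short calculation with $e^{2iJZ\otimes Z} = \cos(2J)\,\1\otimes\1 + i\sin(2J)\,Z\otimes Z$ yields
\begin{equation}
e^{-iJZ\otimes Z}(\1\otimes a)e^{iJZ\otimes Z} = \1\otimes A + Z\otimes B,
\end{equation}
with $A = \gamma Z + \cos(2J)(\alpha X + \beta Y)$ and $B = \sin(2J)(\alpha Y - \beta X)$. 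Conjugating by $u\otimes\1$ then gives $\1\otimes A + (u^\dagger Z u)\otimes B = \1\otimes a'$.

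The key observation is that $u^\dagger Z u$ is a traceless Hermitian unit-norm matrix, so expanding both sides in the $16$-dimensional Pauli basis of $\mathbb{C}^{4\times 4}$, the $\{X,Y,Z\}\otimes\{\cdot\}$ sector on the left comes entirely from the second summand and has no counterpart on the right. This forces $B = 0$. For $J\in\,]0,\pi[\,\setminus\,\{\pi/2\}$ we have $\sin(2J)\neq 0$, hence $\alpha=\beta=0$, and therefore $a\propto Z$. Reading off $a' = A = \gamma Z$ then shows $a'\propto Z$ as well, matching the statement $a, a' = \pm Z$ up to the normalisation convention.

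The only step that requires minor care is the decoupling argument in the Pauli basis: one must note that $u^\dagger Z u$ has no $\1$ component (so the two summands on the left live in orthogonal sectors of $\mathbb{C}^{4\times 4}$) and that $B\neq 0$ would contribute precisely to the sector absent from $\1\otimes a'$. Apart from this, the argument is a direct commutator computation, and the restriction on $J$ enters only to guarantee $\sin(2J)\neq 0$.
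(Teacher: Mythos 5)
Your argument is correct, and it reaches the paper's conclusion by a slightly different mechanism. The paper cancels $u,v$ inside a Hilbert--Schmidt overlap and case-checks Pauli pairs: it computes ${\rm tr}[U(\1\otimes a)U^\dag(\1\otimes a')]/4=\cos^2(J)\,{\rm tr}[aa']/2+\sin^2(J)\,{\rm tr}[ZaZa']/2$ and observes that for $a,a'\in\{X,Y\}$ the modulus is bounded by $|\cos(2J)|<1$, so the conjugation cannot map $\1\otimes a$ exactly onto $\1\otimes a'$ unless $a,a'=\pm Z$. You instead work at the operator level: after the same cancellation of $v$, you compute the exact Heisenberg image $\1\otimes A+(u^\dag Z u)\otimes B$ with $A=\gamma Z+\cos(2J)(\alpha X+\beta Y)$, $B=\sin(2J)(\alpha Y-\beta X)$, and use tracelessness of $u^\dag Z u$ plus orthogonality of the Pauli sectors to force $B=0$, hence $\alpha=\beta=0$ when $\sin(2J)\neq 0$. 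The two proofs hinge on the same structural fact (the $X,Y$ components of the second site leak into the $Z\otimes\{X,Y\}$ sector with amplitude $\sin(2J)$, leaving only a $\cos(2J)$ contraction inside the $\1\otimes(\cdot)$ sector), but your version handles an arbitrary traceless $a$ in one stroke rather than checking basis elements, and it additionally yields $a'=a$ exactly. Your closing caveat about normalisation is apt and matches the paper's own level of rigor (a phase $e^{i\phi}Z$ would also solve the relation if $a$ is not assumed Hermitian); likewise the direction of conjugation ($U^\dag(\cdot)U$ versus $U(\cdot)U^\dag$) is immaterial here.
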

\begin{proof}
This can be easily seen by writing explicitly  
\begin{align}
& {\rm tr}[  U (\1\otimes  a) U^\dag (\1\otimes a')]/4 \notag\\
&= \cos^2(J)\,  {\rm tr}[a a']/2  + \sin^2(J) {\rm tr}[ ZaZa']/2 \,.
\end{align}
The r.h.s.\ is zero for $a\in\{X,Y\}$ and $a'\in\{\1, Z\}$ (same for $a$ and $a'$ swapped), while it is one for $a,a'\in\{\1, Z\}$. Finally, for $a,a'\in\{X,Y\}$ we have 
\begin{align}
&|{\rm tr}[  U (\1\otimes  a) U^\dag (\1\otimes a')]/4|\notag\\
&=|\cos(2J)|  {\rm tr}[a a']/2\leq |\cos(2J)| < 1\,.
\end{align}
This proves the statement. 
\end{proof}

This lemma, and the fact that $U_{x,\tau}=U$ implies that the constraint \eqref{eq:StillMaps2x} must be fulfilled by $a_2$ and $a_2'$ of the form  
\begin{align}
a_2 &= c_0\, \1\otimes p+ c_z\, Z\otimes q,\\
a'_2 &= c'_0\, \1\otimes p'+ c'_z\, Z\otimes q',
\end{align}
where $p,q, p',q'$ are Hilbert-Schmidt normalised matrices and $c_0, c_0', c_z, c'_z$ complex coefficients, and $c_z, c'_z \neq 0$. This in turn means that defining the matrix
\be
[Q]_{ef}^{gh} := \frac{1}{4}{\rm tr}[  U^\dag (e\otimes  f) U (g\otimes h)]\,,
\label{eq:conditionf}
\ee
with $e,g\in\{\1,Z\}$ and $f,h\in\{\1,X,Y,Z\}$, we have that both $Q Q^\dag$ and $Q^\dag Q$ must have a fixed point of the form
\be
c_0 \ket{\mcirc p} + c_z \ket{Z q}
\ee
for some $\ket{p},\ket{q} \in\mathbb C^{d^2}$ and non-zero $c_z$.

Writing explicitly the matrix element  
\bw
\begin{align}
[Q]_{ef}^{gh} =& \frac{\cos^2(J)}{4} {\rm tr}[vu g u^\dag v^\dag e] {\rm tr}[fh] +\frac{\sin^2(J)}{4} {\rm tr}[v Z u g u^\dag Z v^\dag e] {\rm tr}[Z h Z f] \notag\\
&+\frac{i\cos(J)\sin(J)}{4} \left({\rm tr}[v Z u g u^\dag v^\dag e] {\rm tr}[f Z h] - {\rm tr}[v u g u^\dag Z v^\dag e] {\rm tr}[Z f h]\right),
\end{align}
we see that $Q$ is block diagonal in the Pauli basis. In particular ordering the basis as   
\be 
\{\ket{\mcirc\mcirc},\,\,\ket{\mcirc Z},\,\,\ket{Z \mcirc},\,\,\ket{Z Z},\,\,\ket{\mcirc X},\,\,\ket{Z Y},\,\,\ket{\mcirc Y},\,\,\ket{Z X}\},
\ee
we have 
\be
Q =
\begin{pmatrix}
\1_2 &  &\\
& K_1 & \\
& & K_2 \\
& & & Z  K_2 Z \\
\end{pmatrix},
\ee
where we introduced
\begin{align}
K_1 &= \begin{bmatrix}{\cos^2(J)} {\rm tr}[vu Z u^\dag v^\dag Z]/2 + {\sin^2(J)} {\rm tr}[v Z u Z u^\dag Z v^\dag Z]/2 & i \sin(2 J) \left({\rm tr}[vZu Z u^\dag v^\dag Z] - {\rm tr}[vuZ u^\dag Z v^\dag Z]\right)/4 \\
i \sin(2 J) \left({\rm tr}[vZu Z u^\dag v^\dag Z] - {\rm tr}[vuZ u^\dag Z v^\dag Z]\right)/4 &   {\cos^2(J)} {\rm tr}[vu Z u^\dag v^\dag Z]/2 + {\sin^2(J)} {\rm tr}[v Z u Z u^\dag Z v^\dag Z]/2\\
\end{bmatrix}\label{eq:K1matrix}\\
&= \begin{bmatrix}
\cos(\beta_1)\cos(\beta_2)-\sin(\beta_1)\sin(\beta_2)\cos(\alpha_1+\gamma_2)\cos(2J) & \sin(2J) \sin(\beta_1) \sin(\beta_2)  \sin(\alpha_1+\gamma_2)  \\
\sin(2J) \sin(\beta_1) \sin(\beta_2)  \sin(\alpha_1+\gamma_2) &  \cos(\beta_1)\cos(\beta_2)-\sin(\beta_1)\sin(\beta_2)\cos(\alpha_1+\gamma_2)\cos(2J)\\
\end{bmatrix}\notag\\
K_2 &= \begin{bmatrix}
\cos(2 J) & -\sin(2 J){{\rm tr}[v Z v^\dag Z]}/{2} \\
 {\sin(2 J)}{\rm tr}[u Z u^\dag Z]/{2} &  {\cos^2(J)} {\rm tr}[vu Z u^\dag v^\dag Z]/2-{\sin^2(J)} {\rm tr}[v Z u Z u^\dag Z v^\dag Z]/2\\
\end{bmatrix}\label{eq:K2matrix}\\
&= \begin{bmatrix}
\cos(2 J) & -\sin(2 J) \cos(\beta_2) \\
 {\sin(2 J)} \cos(\beta_1) &  -\sin(\beta_1)\sin(\beta_2)\cos(\alpha_1+\gamma_2)+\cos(2J)\cos(\beta_1)\cos(\beta_2)\\
\end{bmatrix}.\notag
\end{align}
\ew
where in the second steps of both \eqref{eq:K1matrix} and \eqref{eq:K2matrix} we parametrised the unitary matrices in terms of Euler angles 
\begin{align}
u & = e^{i \alpha_1 Z/2}e^{i \beta_1 Y/2}e^{i \gamma_1 Z/2},\\
v & = e^{i \alpha_2 Z/2}e^{i \beta_2 Y/2}e^{i \gamma_2 Z/2}.
\end{align}
The upper $2\times2$ block does not contribute to eigenvalues of the desired form. 
The second $2\times2$ block has two unimodular eigenvalues for 
\be
\beta_1,\beta_2 \in\{ 0, \pi\}.
\label{eq:consdstill}
\ee
This produces $QQ^\dag$ and $Q^\dag Q$ with eigenvalues one and eigenvectors of the desired form. Additional acceptable solutions are found for 
\be
\beta_1=\pm \pi/2,\quad\beta_2=\pm \pi/2, \quad\alpha_1+\gamma_2 \pm 2J\in\{0, \pi\}.
\label{eq:consdstill2}
\ee
The last two blocks produce $QQ^\dag$ and $Q^\dag Q$ with eigenvalues one for 
\begin{itemize}
\item[(i)] $\beta_1\in\{0, \pi\}$;
\item[(ii)] $\beta_2\in\{0, \pi\}$;
\item[(iii)] $\beta_1=\pm\beta_2,\,\, \gamma_1+\alpha_2\in\{0,\pi\}$;
\end{itemize}
but they are of the desired form only if either both (i) and (ii) hold simultaneously (i.e.~\eqref{eq:consdstill} holds) or (iii) holds.

\begin{figure}[t]
    \centering
    \includegraphics[scale=0.6]{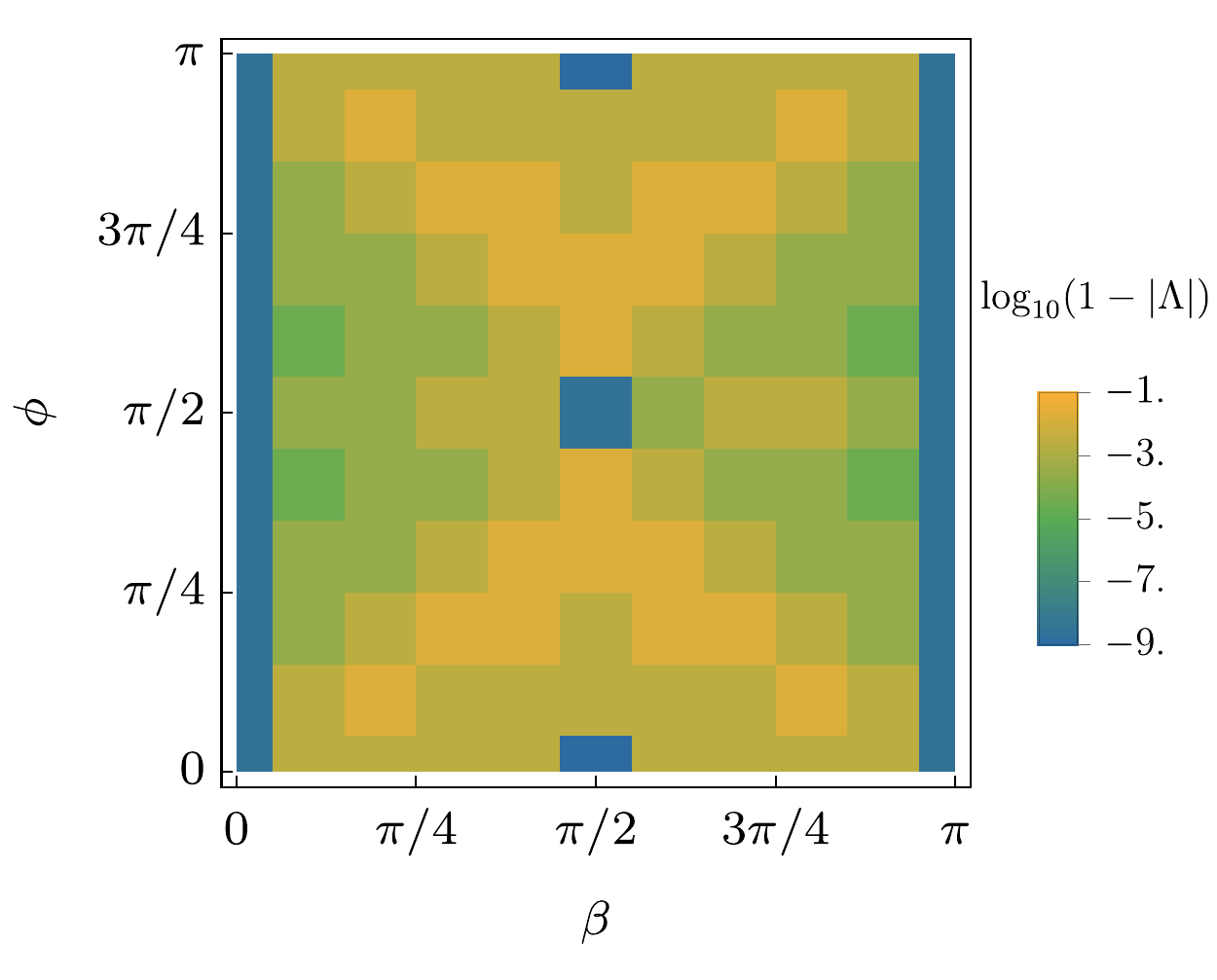}
    \caption{Gap between the largest eigenvalue and $1$ for the family of gates given in Eq.~\eqref{eq:gatesolitontwosites} at $2L_A=12$ and $J=0.3$. The families of solutions from Eqs.~\eqref{eq:Still1} and~\eqref{eq:Still2} appear with vanishing gap in blue. 
    }
    \label{fig:GapGrid}
\end{figure}

Putting all together we have only two possible choices for the gate \eqref{eq:gate}. If \eqref{eq:consdstill} holds the gate is gauge equivalent to  
\be
\tilde U_1 =  ( e^{i \phi Z} X^s \otimes \1)e^{i J Z\otimes Z}\,,\qquad s\in\{0,1\}\,.
\label{eq:gatesolitononesite}
\ee
If (iii) holds then the gate is gauge equivalent to 
\be
\!\!\!\!\tilde U_2 = ( e^{i \beta Y/2}  e^{i \phi Z}\otimes \1)e^{i J Z\otimes Z} (e^{-i \beta Y/2} \otimes Z^s)\,,\quad s\in\{0,1\}\,.
\label{eq:gatesolitontwosites}
\ee
It is easy to see that \eqref{eq:gatesolitononesite} and \eqref{eq:gatesolitontwosites} fulfil also the two remaining necessary conditions \eqref{eq:StillMaps1y} and \eqref{eq:StillMaps2y}. We begin by noting that, if a gate $U$ fulfils \eqref{eq:StillMaps1x} and \eqref{eq:StillMaps2x}, then $S U^\dag S$, where $S$ is the SWAP gate, fulfils \eqref{eq:StillMaps1y} and \eqref{eq:StillMaps2y}. 
Second we note that performing a gauge transformation
 one can bring $S \tilde U_1^\dag S$ to the form \eqref{eq:gatesolitononesite} and $S \tilde U_2^\dag S$ to the form 
\be
( e^{-i \beta \tilde Y/2}  e^{- i \phi Z}\otimes \1)e^{- i J Z\otimes Z} (e^{i \beta \tilde Y/2} \otimes Z^s)\,,\quad s\in\{0,1\},
\ee
where $\tilde a = e^{- i \phi Z} a e^{i \phi Z}$. This gate is  brought to the form \eqref{eq:gatesolitontwosites} with the change of basis $a\mapsto \tilde a$. This concludes the argument: since $\tilde U_1$ and $\tilde U_2$ are equivalent to $S \tilde U_1^\dag S$ and $S \tilde U_2^\dag S$ they also fulfil \eqref{eq:StillMaps1y} and \eqref{eq:StillMaps2y}.   

Having determined all possible solutions to the necessary conditions (\ref{eq:StillMaps1x}, \ref{eq:StillMaps1y}, \ref{eq:StillMaps2x}, \ref{eq:StillMaps2y}) let us identify which ones actually admit still solitons. It is easy to verify that the one-site operator $Z$ is the still soliton of minimal support for the map $\mathbb B_A(\tilde U_1)$, i.e. a map of the form \eqref{eq:Bmap} constructed with gates $\tilde U_1$. Moreover, the two site operator $Z\otimes X$ is the still soliton of minimal support for the map $\mathbb B_A(\tilde U_2)$ for 
\be
\beta=\pm \pi/2,\qquad 
 \phi\in\{0, \pi/2,\pi,3\pi/2\}. 
\label{eq:parametersSS}
\ee
An extensive numerical search considering $\mathbb B_A(\tilde U_2)$ for $2L_A\leq 14$ suggests that the parameter choice \eqref{eq:parametersSS} is indeed the only one giving still solitons for gates of the form \eqref{eq:gatesolitontwosites} (see Fig.~\ref{fig:GapGrid}).

This leaves us with the following two families of \ch{strongly non-ergodic} gates in $d=2$ 
\begin{align}
&\!\!\!\!U_1 \!=\!  (e^{i \phi Z} X^s \!\otimes\! \1 )e^{i J Z\otimes Z},\label{eq:Still1}\\
&\!\!\!\!U_2  \!=\!  ( e^{i \pi Y/4}  Z^{s}\!\otimes\! \1)e^{i J Z\otimes Z} (e^{-i \pi Y/4}\! \otimes\! Z^{r}),
\label{eq:Still2}
\end{align}
where $s,r\in\{0,1\}$.

\section{Numerical methods and parameters of the gates}

Numerical results were obtained by implementing an efficient action of $\mathbb{B}_A$ on a vector. This was achieved by contracting the vector by each gate separately, with the help of some basic functionalities of ITensor Library~\cite{fishman2020itensor}. To obtain the leading eigenvalues and eigenvectors, we implemented the power method and the Arnoldi iteration. Power method worked better with single sub-leading eigenvalue, whereas Arnoldi worked better in the case of multiple sub-leading eigenvalues of the same size.

In Tab. \ref{tab:PrxGates} we show the numerical values of the parameters of the dual-unitary perturbed gates.

\begin{table*}
\centering
\begin {tabular} {c | cc}
\text{Is {\em \ch{bare bones}}} \\
\text{{\em condition} fulfilled?} & $u$ & $v$\\
\hline
\text {Yes} & $\begin{pmatrix} -0.229466-0.562418 i & 0.507409\, -0.611202 i \\ 0.316533\, +0.728586 i & 0.377401\, -0.475959 i \end{pmatrix}$
& $\begin{pmatrix} -0.500163-0.421942 i & -0.0342946-0.755398 i \\  0.532173\, -0.537209 i & -0.653978-0.0226033 i  \end{pmatrix}$\\
\\
\text {No} & $\begin{pmatrix} -0.484203+0.476337 i & 0.207991\, +0.70384 i \\ -0.194545+0.707674 i & -0.493189-0.467026 i \\\end{pmatrix}$
& $\begin{pmatrix} 0.488082\, +0.160083 i & 0.0340135\, +0.857317 i \\ 0.197049\, +0.835058 i & 0.427305\, -0.285063 i \\  \end{pmatrix}$
\end{tabular}\\
  \caption{Parameters of the non-averaged gates, which were used in the figures.
  The gates are parametrized as $ (u \otimes u) V[\{J_i\}] (v \otimes v)$, with $V[\{J_j\}] \!=\! \exp\bigl[i(J_1 \,X\! \otimes  X + J_2 \,Y\! \otimes  Y+J_3 \,Z\! \otimes  Z)\bigr]$. The parameters are set to $J_1=J_2=\eta + \pi/4$ and $J_3=0.1$. The parameters match those in~\cite{kos2021correlations}.
  }
  \label{tab:PrxGates}
  \end{table*}


\bibliography{bibliography}

\end{document}